\PassOptionsToPackage{dvipsnames}{xcolor}

\newcommand\MFCSSHORTER[1]{}

\documentclass[apxproof]{lipics-v2021}

\providecommand\apxparameter{appendix=inline}
\usepackage[\apxparameter]{apxproof}

\usepackage{mathptmx}
\usepackage[utf8]{inputenc}
\usepackage[T1]{fontenc}

\usepackage{version}
\excludeversion{SIDE}
\excludeversion{SIDEE}
\excludeversion{PASFINALISE}
\excludeversion{LONGUER}
\excludeversion{AVOIR}

\nolinenumbers

\newcommand\finprovisoire{\bibliographystyle{plainurl}
\bibliography{/Users/bournez/bibliographie/BIBDESK-DIR=/bournez,/Users/bournez/bibliographie/BIBDESK-DIR=/perso}
\end{document}}

\newcommand\lipicsmode[1]{#1}
\newcommand\apxproofmode[1]{#1}
\newcommand\apxproofmodesubsection[1]{#1}

\providecommand\danger{}

\providecommand\apxproofmode[1]{}
	
\providecommand\apxproofmodesubsection[1]{}

\providecommand\lipicsmode{}
		
\newcommand\nolipics[1]{#1}
\lipicsmode{\renewcommand\nolipics[1]{}}

\providecommand\NODECORATION[1]{}
\lipicsmode{
\providecommand\NODECORATION[1]{#1}
}

\providecommand\PASSIAPXPROOF[1]{#1}
\apxproofmode{\renewcommand\PASSIAPXPROOF[1]{}}

\providecommand\ifnotTDEXAM[1]{#1}
\providecommand\ifnotSLIDE[1]{#1}

\providecommand\PREUVESENAPPENDIX[1]{}

\NODECORATION{
\usepackage{fourier}
}

\usepackage[framemethod=tikz]{mdframed}

\newcommand\ENVCOLOR[1]{#1}
\newcommand\BEGINTEMPENVCOLOR[1]{\let\envcolorwas\ENVCOLOR\renewcommand\ENVCOLOR[1]{#1}}
\newcommand\ENDTEMPENVCOLOR{\let\ENVCOLOR\envcolorwas}

\NODECORATION{
\surroundwithmdframed[hidealllines=true,backgroundcolor=\ENVCOLOR{blue!10},roundcorner=8pt]{exercice}
\surroundwithmdframed[hidealllines=true,backgroundcolor=\ENVCOLOR{blue!10},roundcorner=8pt]{exercise}
\surroundwithmdframed[hidealllines=true,backgroundcolor=\ENVCOLOR{blue!10},roundcorner=8pt]{exercicee}
\surroundwithmdframed[hidealllines=true,backgroundcolor=\ENVCOLOR{blue!10},roundcorner=8pt]{exerciced}
\surroundwithmdframed[hidealllines=true,backgroundcolor=\ENVCOLOR{blue!10},roundcorner=8pt]{exercicec}
\surroundwithmdframed[hidealllines=true,backgroundcolor=\ENVCOLOR{blue!10},roundcorner=8pt]{exercicedc}
\surroundwithmdframed[hidealllines=true,backgroundcolor=\ENVCOLOR{orange!20},roundcorner=8pt]{example}
\surroundwithmdframed[hidealllines=true,backgroundcolor=\ENVCOLOR{orange!20},roundcorner=8pt]{exemple}
\surroundwithmdframed[hidealllines=true,backgroundcolor=\ENVCOLOR{orange!10},roundcorner=8pt]{remark}
\ifnotSLIDE{\surroundwithmdframed[backgroundcolor=\ENVCOLOR{green!15},roundcorner=8pt]{definition}}
\surroundwithmdframed[hidealllines=true,backgroundcolor=\ENVCOLOR{gray!35}]{proposition}
\ifnotSLIDE{\surroundwithmdframed[backgroundcolor=\ENVCOLOR{gray!35},roundcorner=8pt]{theorem}}
\surroundwithmdframed[hidealllines=true,backgroundcolor=\ENVCOLOR{gray!35}]{lemma}
\surroundwithmdframed[hidealllines=true,backgroundcolor=\ENVCOLOR{gray!35}]{corollary}

\surroundwithmdframed[backgroundcolor=red!35,roundcorner=8pt]{theoremaprouver}
\surroundwithmdframed[backgroundcolor=red!35,roundcorner=8pt]{conjectureaprouver}
}

\usepackage{tcolorbox}
\tcbuselibrary{skins}

\newtcolorbox{maboiteconfiance}[2][]{colback=red!5!white, colframe=red!75!black,fonttitle=\bfseries, colbacktitle=red!85!black,enhanced,
attach boxed title to top center={yshift=-2mm},
  title={#2},#1}
  
\newtcolorbox{maboiteresume}[2][]{colback=red!5!white, colframe=red!75!black,fonttitle=\bfseries, colbacktitle=red!85!black,enhanced,
attach boxed title to top center={yshift=-2mm},
  title={#2},#1}
  \newtcolorbox{maboiteresumeperso}[2][]{colback=red!5!white, colframe=red!75!black,fonttitle=\bfseries, colbacktitle=red!85!black,enhanced,
attach boxed title to top center={yshift=-2mm},
  title={#2},#1}
  \newtcolorbox{maboitecommentaire}[2][]{colback=red!5!white, colframe=red!75!black,fonttitle=\bfseries, colbacktitle=red!85!black,enhanced,
attach boxed title to top center={yshift=-2mm},
  title={#2},#1}
\newtcolorbox{maboitetruc}[2][]{colback=red!5!white, colframe=red!75!black,fonttitle=\bfseries, colbacktitle=red!85!black,enhanced,
attach boxed title to top left={yshift=-2mm},
  title={#2},#1}

\ifdefined\CESTDUFRANCAIS
	\providecommand\LANGUE[2]{#2}
	\usepackage[french]{babel}
\else
	\providecommand\LANGUE[2]{#1}
\fi
\providecommand\LANGUEinv[2]{\LANGUE{#2}{#1}}

\providecommand\ifnotmodeDIFFUSIONFINALE[1]{
\ifdefined\SAFEMODE
\else
#1
\fi
}
\providecommand\ifmodeDIFFUSIONFINALE[1]{
\ifdefined\SAFEMODE
#1
\else
\fi
}

\usepackage{xr-hyper}
\usepackage{hyperref}

\usepackage{versions}
\usepackage{amsmath,amssymb,mathrsfs,amsfonts}
\usepackage{listings}
\usepackage{url}
\usepackage{versions}
\usepackage{color}
\usepackage[colorinlistoftodos]{todonotes}
\usepackage{proof}
\usepackage{xstring}

\usepackage{makeidx}
\makeindex
\providecommand\DOINDEXPRINT{}
\providecommand\SIGNALEMOTNOUV[1]{}
\ifdefined\INDEXPRINT
	\ifnotSAFEMODE{
	\renewcommand\DOINDEXPRINT{
	      	\let\oldindex\index
		\renewcommand{\index}[1]{\oldindex{##1}\marginpar{IDXLA: \tiny##1}}
		\renewcommand\SIGNALEMOTNOUV[1]{\marginpar{MNV: \small\textbf{##1}}}
	}
	}
\fi

\usepackage[utf8]{inputenc}
\usepackage[T1]{fontenc}

\usepackage{bbm}

\DeclareUnicodeCharacter{2212}{-}
\DeclareUnicodeCharacter{2297}{\ensuremath{\otimes}}       
\DeclareUnicodeCharacter{2A33}{\ensuremath{\smashtimes}}
\DeclareUnicodeCharacter{1D56F}{\ensuremath{\mathfrak{D}}}  
\DeclareUnicodeCharacter{1D40E}{\ensuremath{\mathbf{O}}}  
\DeclareUnicodeCharacter{1D427}{\ensuremath{\mathbf{n}}}  
\DeclareUnicodeCharacter{3008}{\ensuremath{\langle}} 
\DeclareUnicodeCharacter{3009}{\ensuremath{\rangle}} 

\DeclareUnicodeCharacter{00AC}{\ensuremath{\neg}}                     
\DeclareUnicodeCharacter{00B1}{\ensuremath{\pm}}                      
\DeclareUnicodeCharacter{00B2}{\ensuremath{^2}}                       
\DeclareUnicodeCharacter{00B3}{\ensuremath{^3}}                       
\DeclareUnicodeCharacter{00B7}{\ensuremath{\cdot}}                    
\DeclareUnicodeCharacter{00B9}{\ensuremath{^1}}                       
\DeclareUnicodeCharacter{00D7}{\ensuremath{\times}}                   
\DeclareUnicodeCharacter{02B0}{\ensuremath{^h}}                       
\DeclareUnicodeCharacter{02B3}{\ensuremath{^r}}                       
\DeclareUnicodeCharacter{02E2}{\ensuremath{^s}}                       
\DeclareUnicodeCharacter{0302}{\ensuremath{\hat{\phantom{x}}}}        
\DeclareUnicodeCharacter{0332}{\ensuremath{\underline{\phantom{x}}}}  
\DeclareUnicodeCharacter{0308}{\ensuremath{\ddot{\phantom{x}}}}       
\DeclareUnicodeCharacter{0393}{\ensuremath{\Gamma}}                   
\DeclareUnicodeCharacter{0394}{\ensuremath{\Delta}}                   
\DeclareUnicodeCharacter{0398}{\ensuremath{\Theta}}                   
\DeclareUnicodeCharacter{039B}{\ensuremath{\Lambda}}                  
\DeclareUnicodeCharacter{039E}{\ensuremath{\Xi}}                      
\DeclareUnicodeCharacter{03A0}{\ensuremath{\Pi}}                      
\DeclareUnicodeCharacter{03A3}{\ensuremath{\Sigma}}                   

\DeclareUnicodeCharacter{03A4}{\ensuremath{\mathrm{T}}}  

\DeclareUnicodeCharacter{03A5}{\ensuremath{\Upsilon}}   
\DeclareUnicodeCharacter{03A6}{\ensuremath{\Phi}}       
\DeclareUnicodeCharacter{03A8}{\ensuremath{\Psi}}       
\DeclareUnicodeCharacter{03A9}{\ensuremath{\Omega}}     
\DeclareUnicodeCharacter{03B1}{\ensuremath{\alpha}}     
\DeclareUnicodeCharacter{03B2}{\ensuremath{\beta}}      
\DeclareUnicodeCharacter{03B3}{\ensuremath{\gamma}}     
\DeclareUnicodeCharacter{03B4}{\ensuremath{\delta}}     
\DeclareUnicodeCharacter{03B5}{\ensuremath{\epsilon}}   
\DeclareUnicodeCharacter{03B6}{\ensuremath{\zeta}}      
\DeclareUnicodeCharacter{03B7}{\ensuremath{\eta}}       
\DeclareUnicodeCharacter{03B8}{\ensuremath{\theta}}     
\DeclareUnicodeCharacter{03B9}{\ensuremath{\iota}}      
\DeclareUnicodeCharacter{03BA}{\ensuremath{\kappa}}     
\DeclareUnicodeCharacter{03BB}{\ensuremath{\lambda}}    
\DeclareUnicodeCharacter{03BC}{\ensuremath{\mu}}        
\DeclareUnicodeCharacter{03BD}{\ensuremath{\nu}}        
\DeclareUnicodeCharacter{03BE}{\ensuremath{\xi}}        
\DeclareUnicodeCharacter{03C0}{\ensuremath{\pi}}        
\DeclareUnicodeCharacter{03C1}{\ensuremath{\rho}}       
\DeclareUnicodeCharacter{03C2}{\ensuremath{\varsigma}}  
\DeclareUnicodeCharacter{03C3}{\ensuremath{\sigma}}     
\DeclareUnicodeCharacter{03C4}{\ensuremath{\tau}}       
\DeclareUnicodeCharacter{03C5}{\ensuremath{\upsilon}}   
\DeclareUnicodeCharacter{03C6}{\ensuremath{\varphi}}    
\DeclareUnicodeCharacter{03C7}{\ensuremath{\chi}}       
\DeclareUnicodeCharacter{03C8}{\ensuremath{\psi}}       
\DeclareUnicodeCharacter{03C9}{\ensuremath{\omega}}     
\DeclareUnicodeCharacter{03D1}{\ensuremath{\vartheta}}  

\DeclareUnicodeCharacter{0432}{\ensuremath{\mathrm{PDF\;TODO}}}  
\DeclareUnicodeCharacter{0435}{\ensuremath{\mathrm{PDF\;TODO}}}  
\DeclareUnicodeCharacter{0438}{\ensuremath{\mathrm{PDF\;TODO}}}  
\DeclareUnicodeCharacter{043C}{\ensuremath{\mathrm{PDF\;TODO}}}  
\DeclareUnicodeCharacter{0440}{\ensuremath{\mathrm{PDF\;TODO}}}  
\DeclareUnicodeCharacter{0442}{\ensuremath{\mathrm{PDF\;TODO}}}  
\DeclareUnicodeCharacter{041F}{\ensuremath{\mathrm{PDF\;TODO}}}  
\DeclareUnicodeCharacter{0BA8}{\ensuremath{\mathrm{PDF\;TODO}}}  
\DeclareUnicodeCharacter{0BBF}{\ensuremath{\mathrm{PDF\;TODO}}}  
\DeclareUnicodeCharacter{1100}{\ensuremath{\mathrm{PDF\;TODO}}}  
\DeclareUnicodeCharacter{11F9}{\ensuremath{\mathrm{PDF\;TODO}}}  

\DeclareUnicodeCharacter{1D48}{\ensuremath{^d}}  
\DeclareUnicodeCharacter{1D50}{\ensuremath{^m}}  
\DeclareUnicodeCharacter{1D56}{\ensuremath{^p}}  
\DeclareUnicodeCharacter{1D62}{\ensuremath{_i}}  
\DeclareUnicodeCharacter{1D63}{\ensuremath{_r}}  
\DeclareUnicodeCharacter{1D64}{\ensuremath{_u}}  
\DeclareUnicodeCharacter{1D9C}{\ensuremath{^c}}  

\DeclareUnicodeCharacter{2032}{\ensuremath{\text{\textasciiacute}}}  

\DeclareUnicodeCharacter{2070}{\ensuremath{^0}}  
\DeclareUnicodeCharacter{2074}{\ensuremath{^4}}  
\DeclareUnicodeCharacter{2075}{\ensuremath{^5}}  
\DeclareUnicodeCharacter{2076}{\ensuremath{^6}}  
\DeclareUnicodeCharacter{2077}{\ensuremath{^7}}  
\DeclareUnicodeCharacter{2078}{\ensuremath{^8}}  
\DeclareUnicodeCharacter{2079}{\ensuremath{^9}}  
\DeclareUnicodeCharacter{207A}{\ensuremath{^+}}  
\DeclareUnicodeCharacter{207B}{\ensuremath{^-}}  
\DeclareUnicodeCharacter{207F}{\ensuremath{^n}}  
\DeclareUnicodeCharacter{2080}{\ensuremath{_0}}  
\DeclareUnicodeCharacter{2081}{\ensuremath{_1}}  
\DeclareUnicodeCharacter{2082}{\ensuremath{_2}}  
\DeclareUnicodeCharacter{2083}{\ensuremath{_3}}  
\DeclareUnicodeCharacter{2084}{\ensuremath{_4}}  
\DeclareUnicodeCharacter{2085}{\ensuremath{_5}}  
\DeclareUnicodeCharacter{2086}{\ensuremath{_6}}  
\DeclareUnicodeCharacter{2087}{\ensuremath{_7}}  
\DeclareUnicodeCharacter{2088}{\ensuremath{_8}}  
\DeclareUnicodeCharacter{2089}{\ensuremath{_9}}  
\DeclareUnicodeCharacter{208A}{\ensuremath{_+}}  
\DeclareUnicodeCharacter{208B}{\ensuremath{_-}}  
\DeclareUnicodeCharacter{2091}{\ensuremath{_e}}  
\DeclareUnicodeCharacter{2095}{\ensuremath{_h}}  
\DeclareUnicodeCharacter{2096}{\ensuremath{_k}}  
\DeclareUnicodeCharacter{2097}{\ensuremath{_l}}  
\DeclareUnicodeCharacter{2098}{\ensuremath{_m}}  
\DeclareUnicodeCharacter{2099}{\ensuremath{_n}}  
\DeclareUnicodeCharacter{209A}{\ensuremath{_p}}  
\DeclareUnicodeCharacter{209B}{\ensuremath{_s}}  
\DeclareUnicodeCharacter{209C}{\ensuremath{_t}}  

\DeclareUnicodeCharacter{2113}{\ensuremath{\mathpzc{l}}} 

\DeclareUnicodeCharacter{2115}{\ensuremath{\mathbbm{N}}}  
\DeclareUnicodeCharacter{211A}{\ensuremath{\mathbbm{Q}}}  
\DeclareUnicodeCharacter{211D}{\ensuremath{\mathbbm{R}}}  
\DeclareUnicodeCharacter{2124}{\ensuremath{\mathbbm{Z}}}  

\DeclareUnicodeCharacter{2135}{\ensuremath{\aleph}}           
\DeclareUnicodeCharacter{2190}{\ensuremath{\leftarrow}}       
\DeclareUnicodeCharacter{2192}{\ensuremath{\rightarrow}}      
\DeclareUnicodeCharacter{2194}{\ensuremath{\leftrightarrow}}  

\DeclareUnicodeCharacter{21A0}{\ensuremath{\twoheadrightarrow}}  

\DeclareUnicodeCharacter{21A6}{\ensuremath{\mapsto}}  
\DeclareUnicodeCharacter{21A6}{\ensuremath{\mapsto}}  

\DeclareUnicodeCharacter{21BE}{\ensuremath{\restriction}}  

\DeclareUnicodeCharacter{21D0}{\ensuremath{\Leftarrow}}       
\DeclareUnicodeCharacter{21D2}{\ensuremath{\Rightarrow}}      
\DeclareUnicodeCharacter{21D4}{\ensuremath{\Leftrightarrow}}  

\DeclareUnicodeCharacter{2200}{\ensuremath{\forall}}      
\DeclareUnicodeCharacter{2203}{\ensuremath{\exists}}      
\DeclareUnicodeCharacter{2204}{\ensuremath{\not\exists}}  
\DeclareUnicodeCharacter{2205}{\ensuremath{\emptyset}}    
\DeclareUnicodeCharacter{2208}{\ensuremath{\in}}          
\DeclareUnicodeCharacter{2209}{\ensuremath{\not\in}}      
\DeclareUnicodeCharacter{220B}{\ensuremath{\ni}}          
\DeclareUnicodeCharacter{220C}{\ensuremath{\not\ni}}      

\DeclareUnicodeCharacter{220E}{{\tiny \ensuremath{\blacksquare}}} 

\DeclareUnicodeCharacter{220F}{{\tiny \ensuremath{\prod}}}  
\DeclareUnicodeCharacter{2211}{\ensuremath{\sum}}           
\DeclareUnicodeCharacter{2218}{\ensuremath{\circ}}          
\DeclareUnicodeCharacter{2219}{\ensuremath{\bullet}}        
\DeclareUnicodeCharacter{221E}{\ensuremath{\infty}}         
\DeclareUnicodeCharacter{2223}{\ensuremath{\mid}}           
\DeclareUnicodeCharacter{2225}{\ensuremath{\parallel}}      
\DeclareUnicodeCharacter{2227}{\ensuremath{\wedge}}         
\DeclareUnicodeCharacter{2228}{\ensuremath{\vee}}           
\DeclareUnicodeCharacter{2229}{\ensuremath{\cap}}           
\DeclareUnicodeCharacter{222A}{\ensuremath{\cup}}           
\DeclareUnicodeCharacter{222B}{\ensuremath{\int}}           
\DeclareUnicodeCharacter{2234}{\ensuremath{\therefore}}     

\DeclareUnicodeCharacter{2237}{\ensuremath{\dblcolon}} 

\newcommand{\dotminus}{\mathop{\mbox{$-^{\hspace{-.5em}\cdot}\,$}}}
\DeclareUnicodeCharacter{2238}{\ensuremath{\dotminus}}  

\DeclareUnicodeCharacter{223C}{\ensuremath{\sim}}       
\DeclareUnicodeCharacter{2243}{\ensuremath{\simeq}}     
\DeclareUnicodeCharacter{2245}{\ensuremath{\cong}}      
\DeclareUnicodeCharacter{2248}{\ensuremath{\approx}}    
\DeclareUnicodeCharacter{224D}{\ensuremath{\asymp}}       
\DeclareUnicodeCharacter{2250}{\ensuremath{\doteq}}     

\DeclareUnicodeCharacter{2254}{\ensuremath{\coloneqq}}  

\newcommand{\eqq}{\stackrel{\text{\tiny ?}}{=}}
\DeclareUnicodeCharacter{225F}{\ensuremath{\eqq}}  

\DeclareUnicodeCharacter{2260}{\ensuremath{\neq}}         
\DeclareUnicodeCharacter{2261}{\ensuremath{\equiv}}       
\DeclareUnicodeCharacter{2262}{\ensuremath{\not\equiv}}   
\DeclareUnicodeCharacter{2264}{\ensuremath{\le}}          
\DeclareUnicodeCharacter{2265}{\ensuremath{\ge}}          
\DeclareUnicodeCharacter{226E}{\ensuremath{\nless}}       
\DeclareUnicodeCharacter{226F}{\ensuremath{\ngtr}}        
\DeclareUnicodeCharacter{2270}{\ensuremath{\nleq}}        
\DeclareUnicodeCharacter{2271}{\ensuremath{\ngeq}}        
\DeclareUnicodeCharacter{227A}{\ensuremath{\prec}}        
\DeclareUnicodeCharacter{227C}{\ensuremath{\preceq}}      
\DeclareUnicodeCharacter{2282}{\ensuremath{\subset}}      
\DeclareUnicodeCharacter{2284}{\ensuremath{\not\subset}} 
\DeclareUnicodeCharacter{2283}{\ensuremath{\supset}}      
\DeclareUnicodeCharacter{2286}{\ensuremath{\subseteq}}    
\DeclareUnicodeCharacter{228E}{\ensuremath{\uplus}}       
\DeclareUnicodeCharacter{2291}{\ensuremath{\sqsubseteq}}  
\DeclareUnicodeCharacter{2293}{\ensuremath{\sqcap}}       
\DeclareUnicodeCharacter{2294}{\ensuremath{\sqcup}}       
\DeclareUnicodeCharacter{2295}{\ensuremath{\oplus}}       
\DeclareUnicodeCharacter{22A2}{\ensuremath{\vdash}}       
\DeclareUnicodeCharacter{22A4}{\ensuremath{\top}}         
\DeclareUnicodeCharacter{22A5}{\ensuremath{\bot}}         
\DeclareUnicodeCharacter{22C0}{\ensuremath{\bigwedge}}    
\DeclareUnicodeCharacter{22C2}{\ensuremath{\bigcap}}      
\DeclareUnicodeCharacter{22C3}{\ensuremath{\bigcup}}      
\DeclareUnicodeCharacter{22EE}{\ensuremath{\vdots}}       
\DeclareUnicodeCharacter{22EF}{\ensuremath{\cdots}}       

\DeclareUnicodeCharacter{22F0}{\ensuremath{\iddots}} 

\DeclareUnicodeCharacter{230A}{\ensuremath{\lfloor}}  
\DeclareUnicodeCharacter{230B}{\ensuremath{\rfloor}}  
\DeclareUnicodeCharacter{2308}{\ensuremath{\lceil}}   
\DeclareUnicodeCharacter{2309}{\ensuremath{\rceil}}   
\DeclareUnicodeCharacter{2329}{\ensuremath{\langle}}  
\DeclareUnicodeCharacter{232A}{\ensuremath{\rangle}}  

\DeclareUnicodeCharacter{23CE}{\ensuremath{\Return}}  

\DeclareUnicodeCharacter{2460}{\ensuremath{\text{\ding{172}}}} 
\DeclareUnicodeCharacter{2461}{\ensuremath{\text{\ding{173}}}} 
\DeclareUnicodeCharacter{2462}{\ensuremath{\text{\ding{174}}}} 
\DeclareUnicodeCharacter{2463}{\ensuremath{\text{\ding{175}}}} 
\DeclareUnicodeCharacter{2464}{\ensuremath{\text{\ding{176}}}} 
\DeclareUnicodeCharacter{2465}{\ensuremath{\text{\ding{177}}}} 
\DeclareUnicodeCharacter{2466}{\ensuremath{\text{\ding{178}}}} 
\DeclareUnicodeCharacter{2467}{\ensuremath{\text{\ding{179}}}} 
\DeclareUnicodeCharacter{2468}{\ensuremath{\text{\ding{180}}}} 

\DeclareUnicodeCharacter{25C5}{\ensuremath{\triangleleft}}  

\DeclareUnicodeCharacter{2660}{\ensuremath{\spadesuit}} 
\DeclareUnicodeCharacter{266D}{\ensuremath{\flat}}      
\DeclareUnicodeCharacter{266F}{\ensuremath{\sharp}}     

\DeclareUnicodeCharacter{2713}{\ensuremath{\checkmark}}  

\DeclareUnicodeCharacter{27E6}{\ensuremath{\llbracket}}  
\DeclareUnicodeCharacter{27E7}{\ensuremath{\rrbracket}}  

\DeclareUnicodeCharacter{27E8}{\ensuremath{\langle}}          
\DeclareUnicodeCharacter{27E9}{\ensuremath{\rangle}}          
\DeclareUnicodeCharacter{27F6}{\ensuremath{\longrightarrow}}  

\DeclareUnicodeCharacter{2983}{\ensuremath{\mathrm{PDF\;TODO}}}  
\DeclareUnicodeCharacter{2984}{\ensuremath{\mathrm{PDF\;TODO}}}  
\DeclareUnicodeCharacter{2985}{\ensuremath{\mathrm{PDF\;TODO}}}  
\DeclareUnicodeCharacter{2986}{\ensuremath{\mathrm{PDF\;TODO}}}  

\DeclareUnicodeCharacter{2987}{\ensuremath{\llparenthesis}}  
\DeclareUnicodeCharacter{2988}{\ensuremath{\rrparenthesis}}  

\DeclareUnicodeCharacter{29F5}{\ensuremath{\setminus}}   

\DeclareUnicodeCharacter{2A06}{\ensuremath{\bigcup}}  
\DeclareUnicodeCharacter{2C7C}{\ensuremath{_j}}       

\DeclareUnicodeCharacter{D7B0}{\ensuremath{\mathrm{PDF\;TODO}}} 

\DeclareUnicodeCharacter{1D7D9}{\ensuremath{\mathbbm{1}}}  

\DeclareUnicodeCharacter{22AC}{\ensuremath{\neg\vdash}}
\DeclareUnicodeCharacter{279C}{\ensuremath{\to}}
\DeclareUnicodeCharacter{1D442}{ }
\DeclareUnicodeCharacter{2000}{ }
\DeclareUnicodeCharacter{1D440}{M}
\DeclareUnicodeCharacter{1D442}{O}
\DeclareUnicodeCharacter{20E3}{2}
\DeclareUnicodeCharacter{1D45A}{m}
\DeclareUnicodeCharacter{1D45B}{n}

\usepackage{pifont}
\DeclareUnicodeCharacter{2714}{\ding{52}}
\DeclareUnicodeCharacter{2705}{\ding{52}}
\DeclareUnicodeCharacter{2717}{\ding{55}}
\DeclareUnicodeCharacter{274C}{\ding{53}}

\DeclareUnicodeCharacter{1F9E9}{}	
\DeclareUnicodeCharacter{2699}{}
\DeclareUnicodeCharacter{FE0F}{}
\DeclareUnicodeCharacter{2080}{\ensuremath{_{0}}}
\DeclareUnicodeCharacter{1F4DA}{}
\DeclareUnicodeCharacter{1F9EE}{}
\DeclareUnicodeCharacter{1F9E0}{}
\DeclareUnicodeCharacter{27A1}{0}
\DeclareUnicodeCharacter{1F9ED}{} 
\DeclareUnicodeCharacter{1F501}{}
\DeclareUnicodeCharacter{1F50D}{}
\DeclareUnicodeCharacter{1F4A1}{}
\DeclareUnicodeCharacter{1F4D8}{}
\DeclareUnicodeCharacter{1F447}{}
\DeclareUnicodeCharacter{1F4BB}{}
\DeclareUnicodeCharacter{1F4AC}{}
\DeclareUnicodeCharacter{1F449}{}
\DeclareUnicodeCharacter{1F4C9}{}
\DeclareUnicodeCharacter{1F522}{}
\DeclareUnicodeCharacter{1F52C}{}
\DeclareUnicodeCharacter{1F51A}{}
\DeclareUnicodeCharacter{1F33F}{}
\DeclareUnicodeCharacter{1F539}{}
\DeclareUnicodeCharacter{1F538}{}
\DeclareUnicodeCharacter{1F44C}{}
\DeclareUnicodeCharacter{1F7E6}{}
\DeclareUnicodeCharacter{1F44F}{}
\DeclareUnicodeCharacter{26A0}{}
\DeclareUnicodeCharacter{1F517}{}
\DeclareUnicodeCharacter{1F52D}{}
\DeclareUnicodeCharacter{1F44D}{}
\DeclareUnicodeCharacter{202F}{}
\DeclareUnicodeCharacter{1F3AF}{}
\DeclareUnicodeCharacter{1F4CF}{}

\usepackage{stmaryrd}
\usepackage{ifthen}
\usepackage{mathtools}
\usepackage{dsfont}
\PASSIAPXPROOF{\usepackage{thmtools}}
\usepackage{longtable}
\usepackage{mathdots}
\usepackage{booktabs}

\usepackage{tikz}
\usetikzlibrary{calc}
\usepgflibrary{arrows}
\usetikzlibrary{fit}
\usetikzlibrary{patterns}
\usetikzlibrary{decorations.pathreplacing}
\usetikzlibrary{shapes.multipart}
\usetikzlibrary{shapes.geometric}
\usetikzlibrary{shapes.misc}
\usetikzlibrary{positioning}
\usetikzlibrary{graphs}
\usetikzlibrary{topaths}
\usetikzlibrary{arrows.meta}
\usetikzlibrary{decorations.pathreplacing}
\usetikzlibrary{decorations.markings}
\usetikzlibrary{decorations.pathmorphing}
 \usetikzlibrary{calc}
 \usetikzlibrary{automata}
\usetikzlibrary{chains}
\usetikzlibrary{scopes}
 \usetikzlibrary{positioning}
 \usetikzlibrary{through,backgrounds}
\usepackage{circuitikz}

\providecommand\ifnotPOLYCHAPTERMODE[1]{#1}
\providecommand\ifPOLYCHAPTERMODE[1]{}

\ifnotPOLYCHAPTERMODE{
\ifdefined\thechapter
\providecommand\spnewtheorem[4]{\newtheorem{#1}{#2}[chapter]}
\providecommand\spnewtheoremi[5]{\newtheorem{#1}[#5]{#2}}
\else
\providecommand\spnewtheorem[4]{\newtheorem{#1}{#2}}
\providecommand\spnewtheoremi[5]{\newtheorem{#1}[#5]{#2}}
\fi
}
\ifPOLYCHAPTERMODE{
\providecommand\spnewtheorem[4]{\newtheorem{#1}{#2}}
\providecommand\spnewtheoremi[5]{\newtheorem{#1}[#5]{#2}}
}
\providecommand\nolipics[1]{#1}

\includeversion{metdate}

\newcommand\XPAGEDEGARDE[2]{
\thispagestyle{empty}
\title{{#1} \\[2cm] 
{\large #2}
}
\author{
 Olivier Bournez
\\[0.5cm] 
bournez@lix.polytechnique.fr
}
\begin{metdate}
\date{
\vspace{2cm}
Version \LANGUE{of}{du} \today \\[1cm] 
\includegraphics[height=3.5cm]{/Users/bournez/lib/LaTeX/Perso/FIG-COMMONS/logo_X_new}
}
\end{metdate}
\maketitle
}

\newcommand\vectorl[1]{{\mathbf#1}}

\newcommand\vx{\vectorl{x}}

\newcommand\Q{\mathbb{Q}}
\newcommand\R{\mathbb{R}}
\newcommand\Z{\mathbb{Z}}

\newcommand\N{\mathbb{N}}

\newcommand{\cp}[1]{\operatorname{#1}}

\LANGUE{

}
{
}

\newcommand\NP{\cp{NP}}

\LANGUEinv{

}{

}

\LANGUEinv{}
{}

\LANGUEinv{}
{}

\LANGUEinv{}
{}
\LANGUEinv{}{
}
\LANGUEinv{
}
{}

\LANGUEinv{
}
{}
\LANGUEinv{
}
{}
\LANGUEinv{
}
{}
\LANGUEinv{
}
{}
\LANGUEinv{
}
{}
\LANGUEinv{
}
{}

\LANGUEinv{
}
{}
\LANGUEinv{
}
{}

\LANGUEinv{
}
{}

\LANGUEinv{
}
{}

\newcommand\semspace{\kern -.25ex}

\renewcommand{\restriction}{\mathord{\upharpoonright}}

\newcommand{\myop}[1]{\operatorname{#1}}

\newcommand{\myclass}[1]{\operatorname{#1}}
 \newcommand{\gval}[2][]{\ensuremath{\myclass{GVAL}_{#1}\ifthenelse{\equal{#2}{}}{}{[#2]}}}

 \newcommand{\gpc}[1][]{\ensuremath{\myclass{ATSP}}}
\newcommand{\cgc}[1][]{\ensuremath{\myclass{ATS}}}
\newcommand{\gexpc}[1][]{\ensuremath{\myclass{AEXP}}}

\newcommand{\gplc}[1][]{\ensuremath{\myclass{ALP}}}
\newcommand{\cglc}[1][]{\ensuremath{\myclass{AL}}}

\newenvironment{dessinp}[1]{
\begin{center}\begin{tikzpicture}[baseline=(current bounding
      box.north),#1]
}
{\end{tikzpicture}
\end{center}}

\newcommand{\round}[1]{\left\lfloor#1\right\rceil}

\newcommand{\MAT}[3]{M_{#1\ifthenelse{\equal{#2}{}}{}{,#2}}\ifthenelse{\equal{#3}{}}{}{\left(#3\right)}}

\ifnotSLIDE{
\spnewtheorem{openproblem}{Open Problem}{\bfseries}{\rmfamily}
\spnewtheorem{remarkolivier}{Commentaire d'Olivier: }{\bfseries}{\rmfamily}
\spnewtheorem{postulate}{Postulate}{\bfseries}{\rmfamily}
}

\ifnotSLIDE{
\nolipics{
\spnewtheorem{theorem}{\LANGUE{Theorem}{Théorème}}{\bfseries}{\rmfamily}
\spnewtheoremi{definition}{\LANGUE{Definition}{Définition}}{\bfseries}{\rmfamily}{theorem} 
\spnewtheoremi{lemma}{\LANGUE{Lemma}{Lemme}}{\bfseries}{\rmfamily}{theorem} 
\spnewtheoremi{corollary}{\LANGUE{Corollary}{Corollaire}}{\bfseries}{\rmfamily}{theorem} 
\spnewtheoremi{proposition}{Proposition}{\bfseries}{\rmfamily}{theorem} 
\spnewtheoremi{example}{\LANGUE{Example}{Exemple}}{\bfseries}{\rmfamily}{theorem} 
\spnewtheoremi{remark}{\LANGUE{Remark}{Remarque}}{\bfseries}{\rmfamily}{theorem} 
\spnewtheorem{exercise}{\LANGUE{Exercise}{Exercice}}{\bfseries}{\rmfamily}
\spnewtheorem{exercisee}{\LANGUE{*Exercise}{*Exercice}}{\bfseries}{\rmfamily}
\spnewtheorem{summary}{\LANGUE{Summary}{Résumé}}{\bfseries}{\rmfamily}
 \spnewtheorem{conjecture}{\LANGUE{Conjecture}{Conjecture}}{\bfseries}{\rmfamily}
 \spnewtheorem{theoremaprouver}{\LANGUE{Theorem TO BE PROVED}{Théorème (A PROUVER)}}{\bfseries}{\rmfamily}
  \spnewtheorem{conjectureaprouver}{\LANGUE{Conjecture TO BE VERIFIED}{Conjecture (A VERIFIER)}}{\bfseries}{\rmfamily}
 }}
            
 \makeatletter
 \ifnotSLIDE{
  \ifnotTDEXAM{
\@ifundefined{proof}{
	\newenvironment{proof}{{\bf \LANGUE{Proof}{Démonstration}}:}{\hfill $\Box$ 
	}{}
	}}}
\makeatother

\newcommand\zero{\vectorl{0}}
\newcommand\un{\vectorl{1}}

\providecommand\citep[1]{\cite{#1}}

\renewcommand\Pr{\cp{Pr}}

\tikzstyle{case} = []

\RequirePackage{listings}
\def\beginlisting{\begin{lstlisting}}
\def\endlisting{\end{lstlisting}}

\newenvironment{algo}[1]{
  \lstset{escapechar=\@}
  \def\input{$#1$}

}{}

\newenvironment{FRANCAIS}{
\small
\color{orange}
}
{\normalsize
\color{black}
}

\newcommand\nl{\ \\ }

\usepackage{csquotes}

\definecolor{deepgreen}{RGB}{0,160,40}
\definecolor{forestgreen}{RGB}{0,150,70}
\definecolor{mossgreen}{RGB}{0,140,90}
\definecolor{greenwave}{RGB}{0,150,90}
\definecolor{forestmist}{RGB}{0,160,90}
\definecolor{darkteal}{RGB}{0,130,70}

\definecolor{seagreen}{RGB}{0,130,100}
\definecolor{freshteal}{RGB}{0,140,130}
\definecolor{greenteal}{RGB}{0,140,120}
\definecolor{blueteal}{RGB}{0,150,120}      
\definecolor{lagoon}{RGB}{0,180,120}
\definecolor{greenmist}{RGB}{0,170,110}

\definecolor{springgreen}{RGB}{0,190,100}
\definecolor{mintgreen}{RGB}{10,200,150}
\definecolor{oceanfoam}{RGB}{0,200,150}
\definecolor{aquagreen}{RGB}{30,210,170}
\definecolor{coastalaqua}{RGB}{40,210,180}
\definecolor{softaqua}{RGB}{40,210,190}

\definecolor{emeraldteal}{RGB}{0,180,140}
\definecolor{deepcyan}{RGB}{0,180,160}
\definecolor{lightcyan}{RGB}{0,200,160}
\definecolor{aquamarine}{RGB}{0,200,160}
\definecolor{lightaqua}{RGB}{60,230,200}
\definecolor{mistblue}{RGB}{80,240,210}

\definecolor{paleaqua}{RGB}{80,200,190}
\definecolor{glacier}{RGB}{90,150,200}
\definecolor{skyblueA}{RGB}{60,150,220}      
\definecolor{lightsky}{RGB}{90,170,230}
\definecolor{iceblue}{RGB}{120,190,240}
\definecolor{frost}{RGB}{150,210,250}

\definecolor{steelblue}{RGB}{30,100,160}
\definecolor{cadetblue}{RGB}{70,120,180}
\definecolor{coolblue}{RGB}{110,170,220}
\definecolor{morningblue}{RGB}{140,200,240}
\definecolor{blueiceA}{RGB}{40,150,210}       
\definecolor{blueair}{RGB}{60,170,225}

\definecolor{tealblue}{RGB}{0,110,130}
\definecolor{deepbluecyan}{RGB}{0,110,170}
\definecolor{azur}{RGB}{0,70,180}
\definecolor{brightazure}{RGB}{20,130,200}
\definecolor{lightazure}{RGB}{80,180,235}
\definecolor{skylight}{RGB}{100,200,245}

\definecolor{nightblue}{RGB}{40,0,110}
\definecolor{slateblue}{RGB}{20,40,120}
\definecolor{violetA}{RGB}{60,50,160}         
\definecolor{blueviolet}{RGB}{80,60,180}
\definecolor{purpleA}{RGB}{100,70,200}        
\definecolor{lightpurple}{RGB}{130,90,210}

\definecolor{lilac}{RGB}{160,110,220}
\definecolor{indigoA}{RGB}{50,0,130}          
\definecolor{deepindigo}{RGB}{70,10,150}
\definecolor{royalpurple}{RGB}{90,20,170}

\definecolor{coolmint}{RGB}{0,130,70}
\definecolor{deepsea}{RGB}{0,130,70}

\definecolor{crimson}{RGB}{200,30,60}
\definecolor{brickred}{RGB}{180,40,40}
\definecolor{coral}{RGB}{220,90,70}
\definecolor{salmon}{RGB}{240,130,120}

\definecolor{magenta}{RGB}{200,40,150}
\definecolor{fuchsia}{RGB}{220,60,170}
\definecolor{violetB}{RGB}{140,70,190}        
\definecolor{indigoB}{RGB}{90,50,180}         

\definecolor{royalblueA}{RGB}{50,80,200}      
\definecolor{azure}{RGB}{30,120,230}
\definecolor{skyblueB}{RGB}{80,160,230}       
\definecolor{tealB}{RGB}{0,150,170}           

\definecolor{marine}{RGB}{0,80,140}

\newcommand{\ProtaColorPair}[1]{%
  \colorlet{#1fill}{#1}%
  \colorlet{#1text}{#1!80!black}%
}

	\foreach \i/\name in {
	  1/deepgreen,
	  2/greenwave,
	  3/lagoon,
	  4/oceanfoam,
	  5/softaqua,
	  6/mistblue,
 	 7/lightsky,
	  8/blueiceA,   
	  9/deepcyan,
	  10/azur,
	  11/blueviolet,
	  12/royalpurple}
	  {\ProtaColorPair{\name}}

\newcommand{\DarkenForText}[2]{%
  \colorlet{#1text}{black!#2!#1}%
}

\DarkenForText{deepgreen}{20}
\DarkenForText{forestgreen}{20}
\DarkenForText{mossgreen}{20}
\DarkenForText{greenwave}{20}
\DarkenForText{forestmist}{20}
\DarkenForText{darkteal}{20}

\DarkenForText{seagreen}{20}
\DarkenForText{freshteal}{20}
\DarkenForText{greenteal}{20}
\DarkenForText{blueteal}{20}
\DarkenForText{lagoon}{20}
\DarkenForText{greenmist}{20}

\DarkenForText{springgreen}{20}
\DarkenForText{mintgreen}{20}
\DarkenForText{oceanfoam}{20}
\DarkenForText{aquagreen}{20}
\DarkenForText{coastalaqua}{20}
\DarkenForText{softaqua}{20}

\DarkenForText{emeraldteal}{20}
\DarkenForText{deepcyan}{20}
\DarkenForText{lightcyan}{20}
\DarkenForText{aquamarine}{20}
\DarkenForText{lightaqua}{20}
\DarkenForText{mistblue}{30}

\DarkenForText{paleaqua}{20}
\DarkenForText{glacier}{20}
\DarkenForText{skyblueA}{20}
\DarkenForText{lightsky}{20}
\DarkenForText{iceblue}{20}
\DarkenForText{frost}{20}

\DarkenForText{steelblue}{20}
\DarkenForText{cadetblue}{20}
\DarkenForText{coolblue}{20}
\DarkenForText{morningblue}{20}
\DarkenForText{blueiceA}{20}
\DarkenForText{blueair}{20}

\DarkenForText{tealblue}{20}
\DarkenForText{deepbluecyan}{20}
\DarkenForText{azur}{20}
\DarkenForText{brightazure}{20}
\DarkenForText{lightazure}{20}
\DarkenForText{skylight}{20}

\DarkenForText{nightblue}{20}
\DarkenForText{slateblue}{20}
\DarkenForText{violetA}{20}
\DarkenForText{blueviolet}{20}
\DarkenForText{purpleA}{20}
\DarkenForText{lightpurple}{20}

\DarkenForText{lilac}{20}
\DarkenForText{indigoA}{20}
\DarkenForText{deepindigo}{20}
\DarkenForText{royalpurple}{20}

\DarkenForText{coolmint}{20}
\DarkenForText{deepsea}{20}

\DarkenForText{crimson}{20}
\DarkenForText{brickred}{20}
\DarkenForText{coral}{20}
\DarkenForText{salmon}{20}

\DarkenForText{magenta}{20}
\DarkenForText{fuchsia}{20}
\DarkenForText{violetB}{20}
\DarkenForText{indigoB}{20}

\DarkenForText{royalblueA}{20}
\DarkenForText{azure}{20}
\DarkenForText{skyblueB}{20}
\DarkenForText{tealB}{20}

\DarkenForText{marine}{20}

\providecommand\apxparameter{}

\apxproofmode{
\usepackage[\apxparameter]{apxproof}
\theoremstyle{plain}

\newtheoremrep{theorem}{Theorem}
\newtheoremrep{claim}[theorem]{Theorem}
\newtheoremrep{proposition}[theorem]{Proposition}
\newtheoremrep{lemma}[theorem]{Lemma}
\newtheoremrep{corollary}[theorem]{Corollary}
\newtheoremrep{resume}[theorem]{Summary}
}

\makeatletter
\apxproofmodesubsection{
\let\axp@section\subsection

}
\makeatother

\makeatletter
\newcommand\appendixapxproofsubsection{
\appendix
\let\apx@orig@appendix\appendix
\renewcommand{\appendix}{}
}
\makeatother

\usepackage{xcolor}
\definecolor{check}{RGB}{160,160,0}

\includeversion{NOTLFMI2020}
\includeversion{NOTCSE304}
\excludeversion{NOTDIFFUSIONFINALE}

\providecommand\ifnotmodeDIFFUSIONFINALE[1]{
\ifdefined\DIFFUSIONFINALE
\else
#1
\fi
}

\providecommand\ifmodeDIFFUSIONFINALE[1]{
\ifdefined\DIFFUSIONFINALE
\ifdefined\SAFEMODE
\else
#1
\fi
\fi
}

\ifnotmodeDIFFUSIONFINALE{\includeversion{NOTDIFFUSIONFINALE}}
\ifnotmodeDIFFUSIONFINALE{\newcommand\FINPROVISOIRELFMI[1]{}}

\providecommand\FINPROVISOIRELFMI{
  \section{Bibliographic notes}

  The current text is highly based (sometimes copied) from:
  
  \bibliographystyle{alpha}
  \bibliography{/Users/bournez/bibliographie/Bib-Files/bournez-avantbib2DOI,/Users/bournez/bibliographie/Bib-Files/perso}
  \end{document}
  }

\providecommand\ifnotPOLYCHAPTERMODE[1]{
\ifdefined\POLYCHAPTERMODE
\else
#1
\fi
}

\newcommand\MYCOURSEON[2]{ 
	\ifnotPOLYMODE{
		\ifdefined\MAINAUXFILE
			\externaldocument{\MAINAUXFILE}
		\fi
		}
	
	\ifdefined\PASLAPREMIEREFOIS
		\newpage
	\else
		\makeindex

     		\begin{document}
	\fi
	\gdef\PASLAPREMIEREFOIS{}

	\ifdefined\XPOLY
		\XPAGEDEGARDE{\TITRECOURS\ifPOLYCHAPTERMODE{\\[0.5cm] \Large \bf \LANGUE{Chapter:}{Chapitre:}  #2}}{\SUBTITRECOURS}
		\ifPOLYCHAPTERMODE{
		\renewcommand\thesection{\arabic{section}}
		\chapter*{#2}
		}
	\else
	
      \title{ \ifnotmodeDIFFUSIONFINALE{Course  #1 -}
      \ifPOLYCHAPTERMODE{\LANGUE{Chapter:}{Chapitre:} }
      #2}
      \author{
        Olivier Bournez
        \\[0.5cm] 
      }
      \date{Version of \today}

      \ifnotPOLYCHAPTERMODE{
      	\maketitle
		}
      \ifPOLYCHAPTERMODE{\ifdefined\DOPAGEDEGARDE
      			\maketitle
		\fi
		\renewcommand\thesection{\arabic{section}}
      		\chapter*{#2}}

      \fi

      \DOINDEXPRINT
      
      \ifmodeDIFFUSIONFINALE{
		\excludeversion{recherche}
	}

    }

\newcommand\MYCOURSE[1]{
  \MYCOURSEON{\csname numerocours#1\endcsname}{\csname nomcours#1\endcsname}
  
    \ifnotmodeDIFFUSIONFINALE{}

  }
  
  \newcommand\MYCOURSEnew[1]{
  \MYCOURSEON{???}{#1}
  
  \ifnotmodeDIFFUSIONFINALE{}

  }

\newcommand\PASPREMIEREPAGE[1]{}
\providecommand\mychapter[1]{
  \renewcommand\PASPREMIEREPAGE[1]{##1}  \hrule\hrule \section*{#1}
  \hrule\hrule \  \\[1cm]}
\providecommand\chapter[1]{\PASPREMIEREPAGE{\newpage} 
  \renewcommand\PASPREMIEREPAGE[1]{##1}  \hrule\hrule \section*{#1}
  \hrule\hrule \  \\[1cm]}

\providecommand\IFNOTSUB[1]{#1}
\IFNOTSUB{

\usepackage{versions}
\includeversion{pasweb}

}

\newcommand\COURS[3]{
  \expandafter\newcommand\csname numerocours#2\endcsname{#1}
  \expandafter\newcommand\csname nomcours#2\endcsname{#3}
}

\newcommand\NANCY[3]{
  \expandafter\newcommand\csname numerocours#2\endcsname{#1}
  \expandafter\newcommand\csname nomcours#2\endcsname{#3}
}

\newcommand\INFquatrecentdouze[3]{
  \expandafter\newcommand\csname numerocours#2\endcsname{#1}
  \expandafter\newcommand\csname nomcours#2\endcsname{#3}
}

\newcommand\INFcinqsixun[3]{
  \expandafter\newcommand\csname numerocours#2\endcsname{#1}
  \expandafter\newcommand\csname nomcours#2\endcsname{#3}
}

\newcommand\MPRI[3]{
  \expandafter\newcommand\csname numerocours#2\endcsname{#1}
  \expandafter\newcommand\csname nomcours#2\endcsname{#3}
}

\newcommand\COURSEINCLUDE[2][]{
  \renewcommand\IFNOTSUB[1]{}
  \renewcommand\MYCOURSE[1]{\chapter{\ifnotmodeDIFFUSIONFINALE{#1} \csname nomcours##1\endcsname}
    \ifnotmodeDIFFUSIONFINALE{\mychapter{Fichier: #2}}
  }
   \renewcommand\MYCOURSEnew[1]{\chapter{\ifnotmodeDIFFUSIONFINALE{#1} ##1 }
    \ifnotmodeDIFFUSIONFINALE{\mychapter{Fichier: #2}}
  }
  \input{#2}
}

\newcommand\STRESSCOURSEINCLUDE[1]{
  \COURSEINCLUDE[!STRESS!]{#1}
  }

  \newcommand\subCOURSEINCLUDE[1] {
    \NDLR{BEGIN:  J'inclus/ Je répète ICI #1 }
    \input{INCLUDE/#1}
    \NDLR{END:  J'inclus/ Je répète ICI #1 }
}

\newcommand\montruc[2][]{
  \ifnotmodeDIFFUSIONFINALE{
  \begin{maboitetruc}[#1]{\danger} #2
  \end{maboitetruc}
}
}

\newcommand\montrucdanger[3][]{
  \ifnotmodeDIFFUSIONFINALE{
                    \montruc[#1]{\danger #2: #3}
                    }
}

\newcommand\montrucdangermargin[3][]{
  \ifnotmodeDIFFUSIONFINALE{
  \montruc[#1]{\danger #2: #3}
\marginpar{\danger #2: #3}
}
 }

\makeatletter
\renewcommand\beginmarkversion{\montrucdanger[colback=yellow!10]{=== begin NOT}{}\@Vs@sffbox{\@currenvir$>$}}
 \renewcommand\endmarkversion{\@Vs@sffbox{$<$\@currenvir}\montrucdanger[colback=yellow!10]{end=== NOT}{}}
\makeatother

 \newcommand\NDLR[1]{\montrucdanger[colback=blue!20]{NDLR}{#1}}
 \newcommand\NLDR[1]{\NDLR{#1}}
 \newcommand\VOIRAUSSI[1]{\montrucdanger[colback=yellow!20]{VOIR AUSSI}{#1}}
 \newcommand\TODO[1]{\montrucdanger[colback=green!20]{TODO}{#1}}
 \newcommand\ATTENTION[1]{\montruc[colback=brown!20]{\danger ATTENTION#1}}
 \newcommand\COMPRENDPAS[1]{\montrucdanger[colback=yellow!20]{JE
     COMPRENDS PAS CA}{#1}}
  \newcommand\POSSIBLE[1]{\montrucdanger[colback=yellow!10]{
      Serait possible}{#1}}
 \newcommand\ENLEVE[1]{\montrucdanger[colback=gray!20]{
      ENLEVE}{#1}}

  \newcommand\DEPLACE[1]{
    \vspace{0.5cm}
    \hrule
    \hrule
      \vspace{0.5cm}
    \montrucdangermargin[colback=black!40]{DEPLACE/MISSING CONCEPT
    }{#1}
      \vspace{0.5cm}
    \hrule
    \hrule
      \vspace{0.5cm}
  }

\newcommand\BESOINDE[1]{\montrucdangermargin[colback=pink!20]{ATTTENTION
     BESOIN DE}{#1}}
\newcommand\MISSINGCONCEPT[1]{\montrucdangermargin[colback=pink!20]{MISSING CONCEPT
     }{#1}}
\newcommand\SAUTE[1]{\montrucdangermargin[colback=pink!20]{SAUTE
  }{#1}}
\newcommand\SHOULDBEHERE[1]{\BEGINTEMPENVCOLOR{blue!30}\montrucdanger[colback=pink!10]{
SHOULD
    BE HERE
  }{    
    \vspace{1cm}
    
    \centerline{$\vdots$}
    
   \centerline{ #1}
    
      \centerline{$\vdots$}
          
    \vspace{1cm}
    }\ENDTEMPENVCOLOR}
\newcommand\COULDBEHERE[1]{\montrucdangermargin[colback=pink!20]{COULD
    BE HERE
  }{#1}}

\newcommand\SOURCEURL[1]{
  \ifnotmodeDIFFUSIONFINALE{
    \marginpar{ \textcolor{magenta}{Source: \url{#1}} }
    }
}

\newcommand\SOURCETXT[1]{
  \ifnotmodeDIFFUSIONFINALE{
    \marginpar{ \textcolor{magenta}{Source: {#1}} }
    }
}

\newcommand\SOURCE[1]{
  \nocite{#1}
  \ifnotmodeDIFFUSIONFINALE{
    \marginpar{ \textcolor{magenta}{Source: \cite{#1}} }
    }
}

\newcommand\SOURCEJECH{\SOURCE{jech2003set}}
\newcommand\SOURCEKECHRIS{\SOURCE{kechris2012classical}}
\newcommand\SOURCENOTESSETTHEORY{\SOURCE{moschovakis2006notes}}
\newcommand\SOURCEKOZENCT{\SOURCE{LivreKozenTC}}
\newcommand\SOURCEMARKER{\SOURCE{marker2002descriptive}}
\newcommand\SOURCEMOSCHO{\SOURCE{moschovakis2009descriptive}}

\ifnotmodeDIFFUSIONFINALE{
\newenvironment{recherche}
{}{}
\tcolorboxenvironment{recherche}{colback=blue!5!white, colframe=red!75!black,fonttitle=\bfseries, colbacktitle=blue!85!black,enhanced,
attach boxed title to top center={yshift=-2mm},
  title={FAIRE RECHERCHE ICI}}
}

\ifnotmodeDIFFUSIONFINALE{
\newenvironment{commentaire}{\begin{maboitecommentaire}[colback=brown!30]{Commentaire} Commentaire :}{\end{maboitecommentaire}}
\newcommand\COMMENTAIRE[2][Commentaire ]{
\begin{maboitecommentaire}[colback=brown!30]{#1}
Commentaire : #2
\end{maboitecommentaire}
}
}

\ifnotmodeDIFFUSIONFINALE{
\newenvironment{pasbesoinpourlinstant}{NOT NEEDEED FOR
  THIS SECTION: }{ }
    \tcolorboxenvironment{pasbesoinpourlinstant}{colback=green!5!white, colframe=gray!75!black,fonttitle=\bfseries, colbacktitle=green!85!black,enhanced,
attach boxed title to top left={yshift=-2mm},
  title={PAS BESOIN POUR L'INSTANT}}
}

\ifnotmodeDIFFUSIONFINALE{
\newenvironment{bonus}{BONUS: }{
  }
  \tcolorboxenvironment{bonus}{colback=gray!5!white, colframe=gray!75!black,fonttitle=\bfseries, colbacktitle=gray!85!black,enhanced,
attach boxed title to top left={yshift=-2mm},
  title={BONUS}}
}

\ifnotmodeDIFFUSIONFINALE{
\newcommand\ATTENTIONREPETE[1]{
  \ATTENTION{Repete (attention modif): #1}
  \begin{remarkolivier}
Repete (attention modif): #1
\end{remarkolivier}
}
}

\newcommand{\x}{\times}
\newcommand{\s}{\sigma}
\newcommand\sep{.}
\newcommand\bzero{\zero}
\newcommand\bun{\un}
\renewcommand{\a}{\alpha}
\def\hd{{\sf hd}}
\def\tl{{\sf tl}}
\def\cons{{\sf cons}}
\def\Cons{{\sf Cons}}
\def\Pr{{\sf Pr}}
\def\Op{{\sf Op}}
\def\Rel{{\sf Rel}}
\def\Equal{{\sf Equal}}
\def\Eval{{\sf Eval}}
\def\Gate{{\sf Gate}}
\def\Selection{{\sf Selection}}
\def\Result{{\sf Result}}
\def\Circuit{{\sf Circuit}}
\def\next{{\sf next}}
\def\comp{{\sf comp}}
\def\finalcomp{{\sf finalcomp}}
\def\Pick{{\sf Pick}}
\def\Hd{{\sf Hd}}
\def\Tl{{\sf Tl}}
\def\RevHd{{\sf RevHd}}
\def\Node{{\sf Node}}
\def\Length{{\sf Length}}
\def\Tape{{\sf Tape}}
\def\Head{{\sf Head}}
\def\Time{{\sf Time}}
\def\unpad{{\sf unpad}}
\def\concat{{\sf concat}}
\def\rg{{\sf right}}
\def\lf{{\sf left}}
\def\st{{\sf state}}
\def\tope{{\sf top}}

\newcommand\buglst[1]{#1}

\ifnotSLIDE{
\LANGUE{
\spnewtheoremi{exemple}{\NDLR{ECRIT EXEMPLE AU LIEU DE
  EXAMPLE}}{\bfseries}{\rmfamily}{theorem} 
\spnewtheoremi{remarque}{\NDLR{ECRIT REMARQUE AU LIEU DE REMARK}}{\bfseries}{\rmfamily}{theorem} 
}
{
\spnewtheoremi{exemple}{Exemple}{\bfseries}{\rmfamily}{theorem} 
\spnewtheoremi{remarque}{Remarque}{\bfseries}{\rmfamily}{theorem} 
}
}
            
            \newcommand\PERSOSOURCEPOSSIBLE[1]{\PERSOSOURCE{#1}\PERSOCOMMENTAIRE{#1}}
\newcommand\PERSOPOSSIBLE[1]{\POSSIBLE{#1}}
\newcommand\PERSOETAIT[1]{\NDLR{ETAIT: #1}}

\providecommand\nl{}

\newcommand\PERSOSOURCE[1] { \SOURCETXT{#1}}
\newcommand\PERSOCOMMENTAIRE[1]{\NDLR{#1}}
\newcommand\PERSOCOMMENTAIREd[2]{\NDLR{#1 \textbf{#2}}}

\newenvironment{dessinpb}
{
\begin{center}\begin{tikzpicture}[baseline=(current bounding
      box.north)]
}
{\end{tikzpicture}
\end{center}}

\newcommand\PERSOMANQUE[1]{\NDLR{\textbf{-- manque --
      #1 --manque--}}}

\newcommand\PERSOENLEVE[1]{\ENLEVE{#1}}

\tikzstyle{nd} = [shape=circle,draw]

\newcommand\mmod{mod}

\newcommand\mmV{v}
\newcommand\mmVb{\overline{\mmV}}
\newcommand\mmVe{\mmVb}

\newcommand\exoref[1]{\ref{exo:#1-stt}}

\newenvironment{correction}[1]
{\par\medskip\noindent \label{exo:#1-cor} \hbox{\bf Exercice \ref{exo:#1-stt} 
  \   (page \pageref{exo:#1-stt}). }}
{\par\medskip}

\newcommand\DECISIONint[3]{
\begin{itemize}
\item[\textbf{Donnée}:] #2
\item[\textbf{Réponse}:] #3
\end{itemize}
}

\newcommand\INDECIDABLE[5]{
\DECISION{#1}{#2}{#3}

\begin{#4} \label{#5}
Le problème $#1$ 
est indécidable.
\end{#4}
}

\newcommand\INDECIDABLEenglish[5]{
\DECISIONenglish{#1}{#2}{#3}

\begin{#4} \label{#5}
The problem $#1$
is undecidable.
\end{#4}
}

\newcommand\INDECIDABLEq[5]{
\DECISION{#1}{#2}{#3}

\begin{#4} \label{#5}
Le problème $#1$
est ?
\end{#4}
}

\newcommand\INDECIDABLEqenglish[5]{
\DECISION{#1}{#2}{#3}

\begin{#4} \label{#5}
The problem $#1$
is ?
\end{#4}
}

\newcommand\DECISIONintenglish[3]{
\begin{itemize}
\item[\textbf{Input}:] #2
\item[\textbf{Answer}:] #3
\end{itemize}
}

\newcommand\DECISIONi[4]{
\item $#1$
\DECISIONint{#1}{#2}{#3}
}

\newcommand\DECISIONienglish[4]{
\item $#1$
\DECISIONintenglish{#1}{#2}{#3}
}

\newcommand\DECISION[3]{
\begin{definition}[$#1$]\ 

\DECISIONint{#1}{#2}{#3}
\end{definition}
}

\newcommand\DECISIONenglish[3]{
\begin{definition}[$#1$]\ 
\DECISIONintenglish{#1}{#2}{#3}
\end{definition}
}

\newcommand\NPC[4]{ 
\DECISION{#1}{#2}{#3}

\begin{#4}
Le problème $#1$
est $\NP$-complet. \LANGUE{\myindex{NP-completeness}}{\myindex{NP-completude@$\NP$-complétude}}
\end{#4}
}

\newcommand\NPCenglish[4]{ 
\DECISIONenglish{#1}{#2}{#3}
\begin{#4}
The problem $#1$
is $\NP$-complete. \LANGUE{\myindex{NP-completeness}}{\myindex{NP-completude@$\NP$-complétude}}
\end{#4}
}

\newcommand\NPCp[5]{
\DECISION{#1}{#2}{#3}

\begin{#4}[#5]
Le problème $#1$
est $\NP$-complet.
\end{#4}
}

\newcommand\NPCpenglish[5]{
\DECISIONenglish{#1}{#2}{#3}

\begin{#4}[#5]
The problem $#1$
is $\NP$-complete.
\end{#4}
}

\newcommand\BRUNOCOMMENTAIRE[1]{\NDLR{Bruno commentaire: #1}}

\ifnotSLIDE{
\spnewtheorem{notation}{Notation}{\bfseries}{\rmfamily}
}

 \newcommand\motnouv[1]{\emph{#1}\index{#1}}
\newcommand\INTmotnouv[1]{\emph{#1}}
\newcommand\myindex[1]{\index{#1}}

\newcommand\synonyme[2]{\kindex{#1|see{#2}}}
\newcommand\indexM[1]{\index{$#1$}}
\newcommand\idx[1]{#1\index{#1}}
\newcommand\idxm[1]{#1\index{$#1$}}
\newcommand\idxM[1]{$#1$\index{$#1$}}
\newcommand\idxmm[2]{#1\index{#2@$#1$}}
\newcommand\idxi[2]{#1\index{#2}}
 \newcommand\motnouvi[2]{\INTmotnouv{#1}\index{#2}} 
 \newcommand\motnouva[2]{\INTmotnouv{#1}\index{#2@#1}} 
\newcommand\motnouvm[1]{\INTmotnouv{#1}} 

\newenvironment{PERSOVERBATIM}{}{}
 \newcommand{\joNP}[3]{{ Problème \textbf{ { \motnouv{##1}}:}
            \begin{itemize}
            \item  \textbf{ Instance:} {##2} 
            \item   \textbf{ Question:} {##3}
            \end{itemize}
          }}

\newcommand\defin[1]{\textbf{#1}}

\newcommand\soustitre[1]{\textbf{#1}}

\providecommand\motnouv[1]{\emph{#1}}
\providecommand\motnouvi[2]{\emph{#1}}

\newcommand\papiernorm[1]{\myop{norm}}

\newcommand\olivier[1]{ \ATTENTION{Olivier: #1}}

\renewcommand\round{\cp{round}}
\newcommand\dist{\cp{dist}}
\newcommand\Step{\cp{Step}}
\newcommand\Decode{\cp{Decode}}

\title{Primitive Recursion without Composition\\
{\large Dynamical Characterizations, 
from Neural Networks to Polynomial ODEs}}
\titlerunning{Primitive Recursion without Composition}
\author{Olivier Bournez}
{Institut Polytechnique de Paris, 
Ecole Polytechnique, LIX, Palaiseau, France}
{olivier.bournez.polytechnique.fr}
{}{}
\authorrunning{Olivier Bournez}
\Copyright{Olivier Bournez}

\begin{document}



\ccsdesc{Theory of computation~Models of computation}
\ccsdesc{Theory of computation~Computability}
\ccsdesc{Theory of computation~Complexity classes}
\ccsdesc{Mathematics of computing~Ordinary differential equations}
\ccsdesc{Computer systems organisation~Analog computers}
\ccsdesc{Machine Learning~Neural networks}

 \keywords{Discrete ordinary differential equations, Finite Differences, Implicit complexity, Recursion scheme, Ordinary differential equations, Models of computation, Analog Computations, Formal neural networks}

\maketitle

\begin{abstract}
What computational mechanisms do recurrent neural networks,
polynomial ordinary differential equations, and discrete polynomial
maps each bring to the table, and what do they lack? All three are
models of computation over the continuum: they operate on
real-valued states and evolve by real-valued dynamics, even when the
functions we ask them to compute are ultimately discrete. We
investigate how these models compare, their strengths, their
limitations, and the precise resources on which each one relies,
through the lens of primitive recursive functions.

We prove that the classical notion of primitive recursion admits
equivalent characterizations in all three dynamical frameworks:
bounded iteration of a fixed recurrent ReLU network, robust
computation by a fixed polynomial ordinary differential equation,
and iteration of a fixed polynomial map in discrete time with an
externally supplied step-size parameter. In each case, the time
bound is itself primitive recursive, composition is not postulated
as a closure rule but emerges from the dynamics, and the input is
given as a raw integer vector with no auxiliary encoding. At the
proof level, every primitive recursive function is first compiled
into bounded iteration of a single threshold-affine normal form map,
which is then interpreted as a recurrent ReLU computation on the
one hand, and as a robust polynomial ODE on the other.

The equivalences expose a structural asymmetry between discrete and
continuous polynomial computation. We prove that no fixed
polynomial map can round uniformly toward the nearest integer, and
that none can realize exact phase selection: two operations that
polynomial ODEs perform robustly  through their continuous-time
flow. Each formalism compensates for a limitation that the others
do not share: the ReLU gate provides exact branching, continuous
time provides autonomous rounding and control, and the step-size
parameter recovers both at the cost of discretization precision.
Our equivalence theorem characterizes what each resource
contributes, and opens the way to dynamical characterizations of
subrecursive hierarchies and complexity classes by restricting the
time bounds, polynomial degrees, or discretization resources within
the same framework.

More broadly, the constructions reveal that these real-valued models
do not compute by composing subroutines in the classical sense:
they compute by shaping the trajectory of a dynamical system,
through clocks, phase selectors, stabilization mechanisms, and
error correction built into the dynamics itself. This is a mode of
computation that differs structurally from symbolic programming,
and our equivalence theorem provides a precise framework in which
the difference can be studied.
\end{abstract}

%
%


\addtocounter{page}{-1}

\newcommand\myparagraph[1]{\medskip \noindent\textbf{#1}}
\newcommand\textcite[1]{\cite{#1}}

\section{Introduction}

Over the integers, the theory of computation rests on a remarkably
stable foundation: Turing machines, register machines, the
$\lambda$-calculus, and many other formalisms are all known to define
the same notion of effective computation. Over the reals, the situation
is less settled. Several models have been proposed with different
motivations, including computable analysis \cite{Wei00} and algebraic
models \cite{BSS89}, but they are provably inequivalent
\cite{Wei00,bournez2021survey}.
Within this landscape, two families of models have recently emerged as
\emph{genuinely computational} rather than merely descriptive, and both
have developed substantial theory.

The first comes from neural computation. Already in the early 1990s,
recurrent networks with threshold-like or analog activations were shown
to possess significant computational power and nontrivial
complexity-theoretic structure \cite{SS94,BGSS93,Maa94a,LivreSiegelmann}.
More recent architectures, most notably transformer models, rely on
substantially different mechanisms, and only make these foundational
questions more pressing. What matters here is not just that such models
are expressive, but that they are \emph{programmable}, in a way quite
unlike classical symbolic programming.

The second family comes from the analog-computing (General Purpose Analog Computer \cite{Sha41}) tradition and, more broadly, from
polynomial Ordinary Differential Equations (ODEs). A long line of work has
established polynomial ODEs as a robust and machine-independent
framework for computability and complexity over the reals
\cite{CM06b,BH05,BouCamGra06a,CIE2013Daniel,ICALP2016ventardise,BGDPRiccardo2022,BlancBournezMFCS2023Journal}.
Again, the point is not merely that such systems can simulate classical
models, but that computations can be synthesized through mechanisms
natural to continuous dynamics: oscillation, stabilization, transfer of
intermediate quantities, and correction of errors along the flow.

Despite their apparent differences, these two families share a common feature:
they realize computation by shaping the evolution of a dynamical system. This
is a very different constructive logic from the classical one based on
explicit symbolic composition. In recurrent networks and in polynomial ODEs,
one often obtains a computation not by writing down its decomposition into
subroutines, but by designing a state space, a feedback mechanism, a
thresholding behavior, or a flow with the right dynamical properties.

A first indication that this viewpoint is genuinely different is
that it changes what is easy and what is hard. Rounding to the
nearest integer, for instance, is almost free in continuous time
(a polynomial ODE built on $y''=-y$ contracts to the lattice) but
is provably costly or impossible for a pure polynomial map in
discrete time. Section~\ref{sec:discrete-vs-continuous} makes this
asymmetry precise; it is the dynamical fact that organizes the
rest of the paper.


This leads to a natural foundational question. Instead of taking machines or
syntactic closure principles as the primary point of departure, can one find a
simple classical class of functions that already sits at the intersection of
these dynamical modes of computation? Through this paper, we argue that the right
place to begin is to get back to the historical roots of computable theory: the old but remarkably robust class of primitive recursive
functions. We assume the reader has some familiarity with computability theory, particularly with primitive recursive functions: \cite{Odi99} provides a thorough reference.

\myparagraph{Primitive recursion as a dynamical system}
The class~$\mathcal{PR}$ of primitive recursive functions is usually
defined as the smallest class of functions over the natural numbers
containing the zero, successor, and projection functions, and closed
under composition and primitive recursion. Many subrecursive and
complexity-theoretic classes are obtained by restricting these
schemes, including the Grzegorczyk hierarchy
\cite{Grz55,Grz53a}, polynomial-time functions \cite{Cob65}, and
polynomial-space functions \cite{thompson1972subrecursiveness}; see
\cite{clote2013boolean} for a broad overview. In all these
definitions, closure under composition is taken for granted: this is somehow  the
one rule that no formalism considers optional.

Primitive recursion already admits a dynamical reading. The scheme
$$
f(0,\vx)=g(\vx), \qquad f(n+1,\vx)=h(n,f(n,\vx),\vx)
$$
can be viewed as the bounded iteration of the transition
$(n,a,\vx)\mapsto(n+1,h(n,a,\vx),\vx)$, where $a$ plays the role of
an accumulator: in this view, the computation is not assembled from
composed subroutines, but unfolds as a single trajectory. This is
reminiscent of how polynomial ODEs compute in the analog tradition,
where a single continuous flow replaces any composition rule, and the
required control emerges from the dynamics itself through basins of
attraction, stabilization, and autonomous switching mechanisms.

The same pattern appears in recurrent neural networks, though this
is not always made explicit. A recurrent network with
activation~$\rho$ repeatedly applies a fixed update rule to a state
vector, where one step is a finite composition of affine maps and
coordinatewise applications of~$\rho$. The computation is the
trajectory $z_{n+1}=R(z_n)$ of a fixed dynamical system. When
$\rho$ is $\operatorname{ReLU}(t)=\max(t,0)$, the system is
piecewise linear. When $\rho$ has bounded range,  as for the
sigmoid activations used in practice,  the dynamics is confined to
a bounded region of state space.

 This leads to a concrete version
of our foundational question: can primitive recursive functions be
characterized as the functions computable by bounded iteration of
such a recurrent network, with no explicit closure under composition as a basic external assumption, but composition emerging from the dynamics.

\myparagraph{Main results.} 
The answer is yes, and the characterization extends to all the
dynamical frameworks discussed above. Our main theorem states:

\begin{theorem}[Main theorem, informal\footnote{Formal version is Theorem \ref{thm:main-formal}}]
\label{th:unf}
For a function $f:\N^d\to\N^e$, the following are equivalent:
\begin{enumerate}
\item \label{iteminformalpr} $f$ is primitive recursive in the classical sense;
\item \label{iteminformalrelu} $f$ is computable by bounded iteration of a fixed recurrent
  ReLU network;
\item \label{iteminformalrho} $f$ is computable by bounded iteration of a fixed recurrent
  $\rho$-activation network with 
  bounded precision, over a compact domain;
\item \label{iteminformalpode} $f$ is computable robustly by a fixed polynomial ODE;
\item \label{iteminformaldodenew}  $f$ is computable by iteration of a fixed polynomial map in
  discrete time, provided a sufficiently small step size is supplied
  as an external parameter.
\end{enumerate}
\end{theorem}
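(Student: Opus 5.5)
The plan is to organize the equivalences around a single intermediate normal form: bounded iteration of a fixed \emph{threshold-affine map}, i.e.\ a map $T:\Z^k\to\Z^k$ each of whose coordinates is an affine combination of its arguments and of finitely many products of an integer variable with a Heaviside test $[\ell(z)\ge 0]$ of an affine form $\ell$. The cycle of implications will be $\ref{iteminformalpr}\Rightarrow$ (normal form) $\Rightarrow\ref{iteminformalrelu},\ref{iteminformalrho},\ref{iteminformalpode}$, then $\ref{iteminformalpode}\Rightarrow\ref{iteminformaldodenew}$, and finally each of $\ref{iteminformalrelu},\ref{iteminformalrho},\ref{iteminformalpode},\ref{iteminformaldodenew}\Rightarrow\ref{iteminformalpr}$.

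\textbf{Upper bounds.} To show that each dynamical model computes only primitive recursive functions, I will use the fact that the time bound is assumed primitive recursive. A recurrent ReLU network with rational weights on integer inputs stays in $\Q^k$. Its denominators are bounded, and its state grows at most exponentially in the number of steps. So $t$ steps can be simulated exactly by a primitive recursive function of $(t,\vx)$, followed by primitive recursive decoding.

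For the $\rho$-activation case, bounded precision over a compact domain means that each step needs only a primitive recursive number of digits.

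For the polynomial ODE and the discrete polynomial map, robustness is the key ingredient. It says that an approximation of the flow at precision $2^{-p}$, with $p$ primitive recursive in the input, still yields the correct output after rounding. A Taylor or Euler scheme with explicit error bounds computes such an approximation. Solutions of polynomial ODEs over a primitive recursive time horizon have primitive recursively bounded growth, via iterated-exponential Gronwall-type bounds. Hence the approximation, and therefore the output, is primitive recursive.

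\textbf{Lower bound via the normal form.} I will prove by induction on the $\mathcal{PR}$ definition that every $f$ is computed by iterating a fixed threshold-affine map $T_f$ for a primitive recursive number of steps $B_f(\vx)$, starting from $(\vx,0,\dots)$. The output is read off once a halting flag stabilizes, after which the map is the identity.

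\begin{itemize}
\item Base functions are immediate.
\item For primitive recursion, I add a counter and a phase flag. Threshold tests of the form $[n-i\ge 0]$ select between running $T_h$ on a sub-block and resetting it with the new accumulator.
\item For composition $g(h_1,\dots,h_m)$, I run the $T_{h_i}$ in parallel blocks, wait until all their halting flags are set, and then copy the results into the input block of $T_g$. The branching is done by a product of a variable with a flag.
\end{itemize}

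This last gadget is where composition ``emerges'': it is a phase selector, not a closure rule. The total time bound is a sum or iterated sum of primitive recursive bounds, so it is primitive recursive.

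\textbf{Translating the normal form.} Over integers, the product $z\cdot[\ell\ge 0]$ equals $\operatorname{ReLU}(z - M(1-[\ell\ge0]))-\operatorname{ReLU}(-z-M(1-[\ell\ge0]))$ when $|z|\le M$. The Heaviside test itself is $\operatorname{ReLU}(\ell+1)-\operatorname{ReLU}(\ell)$. The trouble is that $M$ cannot be a fixed constant. I will avoid this by keeping large numbers in a form where the multiplied quantities are bounded, or by storing $M$ as a state variable that grows alongside the data, which requires a gating trick. This gives \ref{iteminformalrelu}, and a fixed-point-encoding variant gives \ref{iteminformalrho}.

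For \ref{iteminformalpode}, I will use the standard Branicky/Graça construction: each iteration of $T$ is simulated by two half-period phases driven by a clock $\sin$, $\cos$ generated from $y''=-y$. Error-contracting targeting dynamics copy values between variable banks. The Heaviside tests are replaced by polynomial approximations, kept robust by rounding (the contraction to the lattice discussed in Section~\ref{sec:discrete-vs-continuous}). Finally, for \ref{iteminformaldodenew}, I Euler-discretize this ODE with a supplied step size $h$, small enough to be primitive recursive in the input. The discrete map inherits its error correction from the contracting ODE.

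\textbf{Main obstacle.} I expect the hardest step to be the ODE simulation with \emph{unbounded} values under a fixed polynomial system. The contraction rate of the targeting dynamics and the accuracy of the polynomial approximations of the Heaviside tests must scale with the current magnitude of the state. I would first try to add an auxiliary variable that tracks an upper bound on $\|z\|$ and grows monotonically, and use it to set both the steepness of the gates and the contraction speed.
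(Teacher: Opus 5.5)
Your architecture is essentially the paper's: a threshold-affine normal form as the hinge, a Branicky-style clock-and-targeting ODE, Euler discretization for item~(\ref{iteminformaldodenew}), and simulation for the converses. Building the normal form by induction on the schemes, with composition via halting flags and a copy phase, rather than by compiling a LOOP program via Meyer--Ritchie, is a legitimate and self-contained alternative. The real problems lie elsewhere.

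\textbf{The converse for item~(\ref{iteminformalpode}) rests on a false claim.} Solutions of a fixed polynomial ODE over a primitive recursive horizon need \emph{not} have primitive recursively bounded growth. A nonlinear polynomial field has no global Lipschitz constant: $y'=y^2$ blows up at time $1/y(0)$. Even when the solution exists on the whole window, a fixed system can multiply its vector field by a polynomial such as $1+c^2$, where $c$ counts the steps of a simulated machine. The orbit is unchanged, but arbitrarily long halting computations, with non-primitive-recursive magnitudes, then fit into a constant time window. So without an extra hypothesis, item~(\ref{iteminformalpode}) does not imply item~(\ref{iteminformalpr}). This is why Definition~\ref{def:ode-rep} postulates a primitive recursive safety bound $\widetilde B(x)$. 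Theorem~\ref{thm:four-implies-five} uses it to make $L_F^\sharp(\widetilde B+1)$, $M_F^\sharp(\widetilde B+1)$, and hence the step-size threshold, primitive recursive. The loop is then closed through item~(\ref{iteminformaldodenew}) by exact rational iteration (Theorem~\ref{thm:five-implies-one}), where no robustness argument is needed. You must take the space bound as part of the model, and your own ODE construction must supply one. Your bound-tracking variable addresses a legitimate worry: no fixed polynomial approximates $\Theta$ on unbounded regions, a point the paper's comparator glosses over. But that variable must grow primitive recursively.

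\textbf{The ReLU step is left open, and your second repair cannot work.} A feedforward ReLU block is globally Lipschitz, so no single step realizes $(b,z)\mapsto bz$ on $\{0,1\}\times\Z$: flipping $b$ moves the input by $1$ and the output by $|z|$. Storing $M$ in the state merely relocates the problem to the product $M(1-b)$. The missing idea is to make every gated quantity bounded. Refine the normal form into counter-machine form, compiling $R_i:=0$, $R_i:=R_j$ and loop set-up into unary loops, so that each register changes by at most $1$ per step under conditions built from one-hot control bits and zero-tests. Each update is then $R_i'=R_i+\beta_i^+-\beta_i^-$, with Boolean $\beta_i^\pm$ obtained by nesting $\Gate$, which a fixed ReLU block computes exactly. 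The slowdown is bounded by the primitive recursive bound on register values. You cannot borrow this from the paper, whose Lemma~\ref{lem:threshold-to-relu} treats $\Gate(q_j(y))\,A_j(y)$ as if it were a ReLU expression.

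\textbf{The $\rho$-activation case is not yet an argument.} ``A fixed-point-encoding variant'' does not establish item~(\ref{iteminformalrho}). The input is $\nu(x)=2^{-x}$ in $[0,1]^m$, integer addition and gating become products of codes, and $\nu\mapsto2\nu$ is not an allowed affine layer. You must show how a fixed $\rho$-block, using only the detector of Definition~\ref{def:nu-admissible-rho}, realizes these with relative error at scale $2^{-B(x)}$. The paper's Lemma~\ref{lem:rho-one-step} does not do this: its condition~\eqref{eq:final-choice} cannot hold for fixed $\eta>0$. Finally, your converse for item~(\ref{iteminformalrho}) is sound only if ``bounded precision'' means primitive recursive \emph{absolute} precision. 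Under the purely relative error of Definition~\ref{def:rho-unfolding}, take an admissible $\rho$ with $\rho(2^{-n})=2^{-g(n)}$ for a suitable non-primitive-recursive computable $g$: it outputs $\nu(g(x))$ after one step. Theorem~\ref{thm:nn-implies-pr} also overlooks this case.
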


In each case, ``bounded'' means that the relevant quantity, e.g. the
number of iterations, the observation time, the domain diameter, or
the precision, is itself primitive recursive as a function of the
input.   The activation~$\rho$ is only required to be primitive
recursively computable in a natural sense, a condition that is satisfied by all sigmoid functions considered in practice.

In Items~(\ref{iteminformalrelu}) and~(\ref{iteminformalpode})--(\ref{iteminformaldodenew}), a crucial point is that the computation
starts from the raw input state $(x,0,\ldots,0)$: the integer
input~$x$ is embedded directly in~$\R^m$, with no auxiliary
encoding.  Item~(\ref{iteminformalrho}) necessarily departs from this convention, since
a bounded domain cannot accommodate arbitrarily large integers as
raw values; there, the input is supplied as a rational number under
a fixed, standard encoding,  but, in the same spirit, no
problem-specific preprocessing is involved. Namely, we consider $\nu(n)=2^{-n}$ to encode integer $n$, i.e we encode $n$ as a dyadic, but this choice is  arbitrary. 

\begin{remark}
Our raw-input assumption is essential, to reveal the strength of the statement and, not merely a stylistic
choice.  Without it, the results would be considerably easier to
prove but would say far less: the computational work could be
hidden in the encoding step rather than exposed in the dynamics.
As stated, our theorems guarantee that the entire
computation  is carried out by the dynamics of the
system itself, with no external help.
\end{remark}

The equivalence $(\ref{iteminformalpr})\Leftrightarrow(\ref{iteminformalrelu})$ says that composition is
redundant: bounded iteration of a single ReLU network already
generates the full class.  The equivalence
$(\ref{iteminformalrelu})\Leftrightarrow(\ref{iteminformalrho})$ shows that any well-behaved activation $\rho$ yields the same
computational power, provided domain and precision are
controlled.  The passage $(\ref{iteminformalrelu})\Rightarrow(\ref{iteminformalpode})$ eliminates the
threshold gates by moving to continuous
time, where a polynomial ODE autonomously generates the required
control through
mechanisms like the one described above.  The passage
$(\ref{iteminformalpode})\Rightarrow(\ref{iteminformaldodenew})$ discretizes the ODE, but the control that was
free in continuous time now reappears as the step-size parameter: a
genuine computational resource.  The converse
$(\ref{iteminformaldodenew})\Rightarrow(\ref{iteminformalrelu})$ states that once the 
map and the time
bound are fixed, the induced evolution on rational states is
primitive recursive.

These equivalences make visible genuinely different \emph{costs of
computation}: the ReLU model treats threshold switching as free,
the $\rho$-model pays with domain and precision, the polynomial
ODE pays with continuous time, and the discrete polynomial model
pays with step size.

{
\myparagraph{What these characterizations reveal}
The five formulations are equivalent, but they make visible
genuinely different \emph{costs of computation}. In the ReLU
model~(\ref{iteminformalrelu}), the implicit threshold gate provides exact switching for
free, but this is a strong assumption on the activation. The
$\rho$-activation model~(\ref{iteminformalrho}) shows that any reasonable nonlinearity
suffices, at the price of controlling domain and
precision. In the polynomial ODE~(\ref{iteminformalpode}), no discrete activation is
needed at all: the dynamics itself generates rounding, switching,
and error correction, but requires continuous time. In
the discrete polynomial simulator~(\ref{iteminformaldodenew}), continuous time is also
removed, and the step size becomes the price paid for its
absence.}

\MFCSSHORTER{
\begin{remark}
The continuous-time ODE model may seem remote from neural networks,
but the connection is direct: deep residual networks provably
approach continuous-time ODEs as the number of layers
grows: this is the well-known neural ODE approach~\cite{chen2018neural,kidger2022neural,ruthotto2024differential},
and the mechanisms that make polynomial ODE computation feasible
(autonomous rounding, phase-controlled switching, error correction)
are, we believe, the same mechanisms that underlie the computational
power of deep recurrent networks in discretized form.  That said,
the five models do not succeed for the same reasons: each relies on
structural features that the others lack, and the equivalence holds
only under carefully matched hypotheses on the resources (activation
type, domain boundedness, precision, step size) that compensate for
these differences.  The situation is analogous to the Church--Turing
thesis~\cite{Gur99}: Turing machines, $\lambda$-calculus, and
register machines compute the same functions, but this extensional
equivalence says nothing about the different methods by which each
model organizes its computations.  Making the compensation between
our five models precise, rather than papering over it, is the point
of the theorem.
\end{remark}}

\MFCSSHORTER{
\myparagraph{Perspective}
The characterizations suggest that primitive
recursion is the first level of a broader theory of computation in
which the fundamental objects are dynamical systems and the
fundamental resources are structural parameters of the dynamics.
Since each resource can be independently restricted, it is natural
to ask whether doing so recovers finer complexity-theoretic classes
(the Grzegorczyk hierarchy, $\mathsf{FP}$,
$\mathsf{FPSPACE}$) within the same framework.  We
leave this as a concrete direction for future work.
The techniques we develop (polynomial clocks, phase selectors,
scrubbing mechanisms) are the same building blocks that appear in
recent characterizations of $\mathsf{P}$ and $\mathsf{PSPACE}$ by
polynomial ODEs.  The present work shows that these techniques are
already meaningful at the most basic recursion-theoretic level.
}

\myparagraph{Related work}
\label{sec:related-work}
Our work sits at the intersection of three traditions. Classical
\emph{subrecursive function theory} characterizes complexity classes
by restricting recursion schemes — Cobham for $\mathsf{P}$
\cite{Cob65}, Bellantoni--Cook~\cite{BC92}, Leivant--Marion for
$\mathsf{FPSPACE}$~\cite{LM97}, see~\cite{clote2013boolean} for a
survey — but always takes composition as a built-in closure. The
\emph{neural-computation} line~\cite{SS95,SS94,BGSS93,Maa94a}
showed early on that recurrent analog systems compute through state
update, feedback, and thresholding rather than symbolic
composition. The \emph{GPAC and polynomial-ODE} tradition, from the
Grzegorczyk-level analog characterizations
of~\textcite{Cam01-f,CamMooCos00a,BH05,GraCos03} to the
polynomial-time and polynomial-space characterizations
of~\cite{JournalACM2017,TheseAmaury,TheseRiccardo,GozziPSPACE} and
the discrete-ODE lines
of~\cite{MFCS2019,MFCSJournal,Antonelli0K24,Antonelli0K25,AntonelliDK26},
shows that polynomial dynamics provide a machine-independent
framework over the reals. Our paper combines these: machine-free
like implicit complexity, polynomial dynamics like the ODE
literature, discrete-time and raw-input like neither, and the
internalization of composition as a consequence of the dynamics
rather than a postulate. 

\section{The asymmetry between discrete and continuous polynomial dynamics}
\label{sec:discrete-vs-continuous}

\myparagraph{Computation by a polynomial dynamical system}
Before stating the results of this section, we make precise the common
shape of computation shared by items~(\ref{iteminformalpode})
and~(\ref{iteminformaldodenew}) of the main theorem, and by the
dynamical reading of items~(\ref{iteminformalrelu})
and~(\ref{iteminformalrho}) as well. The data is a polynomial map
$p:\R^m \to \R^m$, viewed as a fixed vector field on the state space
$\R^m$. This single object can be evolved in two ways. In discrete
time, it defines the iteration $x_{n+1} = p(x_n)$. In continuous time,
it defines the autonomous ODE $\dot{x} = p(x)$. In both cases a
computation on input $x \in \N^d$ is a trajectory started from the raw
initial state $(x, 0, \ldots, 0)$, and the output is read off a
designated coordinate at some time $n$ (respectively $t$) that depends
on the input.

What is striking, and what this section is about, is that these two
evolutions of the same polynomial data are not computationally
interchangeable. Continuous time makes available mechanisms, such as
contraction toward the integers and autonomous phase switching, that a
discrete iteration of the same kind of object provably cannot realize
uniformly. Throughout this section, $f^{[n]}:=f\circ\cdots\circ f$ ($n$~times)
denotes the $n$-fold iterate of a map~$f$, with $f^{[0]}:=\mathrm{id}$.

\myparagraph{Two primitives of dynamical computation}
To make the asymmetry quantitative, we examine two operations that are
natural building blocks for any computation on integer inputs carried
out by a dynamical system. The first is \emph{rounding}: contracting a
real state that is close to an integer back onto that integer, so that
integer values can be maintained along a long trajectory. The second
is \emph{phase selection}: switching autonomously between phases of
the computation after a number of steps supplied as input. Both are
ubiquitous in the GPAC and polynomial-ODE literature, where they serve
as the foundation on which clocks, error correction, and bounded
recursion are built~\cite{dsg05}.

We now state two impossibility results showing that both of these
operations, which are essentially free in continuous time for
polynomial vector fields, are provably unattainable by pure polynomial
maps in discrete time. The proofs are short and direct, but the
consequence is conceptual: the discrete and continuous settings rest
on genuinely different assumptions about what is elementary, and this
explains why additional computational resources (the
$\rho$-activation gate in the neural models, the step-size parameter
in the discrete simulator) become mandatory when passing from one
world to the other.

\begin{remark}
We focus here on the contrast between discrete-time and continuous-time
polynomials, but the conclusions bear directly on the neural-network
setting. The choice of activation function is itself a commitment about
which operations are taken as primitive. Our impossibility results make
this commitment explicit: the activation~$\rho$ is not a convenience
but a necessary compensation for what polynomials alone cannot do in
discrete time. Since affine maps are particular polynomial maps, any
impossibility for polynomial maps \emph{a fortiori} rules out affine
maps, and thus pure feedforward linear networks, as well.
\end{remark}

\myparagraph{Uniform rounding}
One  basic operation when passing between integers and reals is
rounding. As observed in \cite{dsg05}, in continuous time, this has an elegant and essentially costless
solution: The map
$$
\sigma(x) = x - \frac{1}{2\pi}\sin(2\pi x)
$$
fixes every integer and contracts every real number toward the nearest
integer: for any $\varepsilon < 1/2$, there exists $\lambda < 1$ such
that $|\sigma(n+\delta) - n| \leq \lambda\,|\delta|$ whenever
$|\delta| \leq \varepsilon$. Since $\sin$ is natively generated in the
polynomial ODE world by $y'' = -y$, this contraction is available at no
additional cost. This mechanism, introduced as the very first statement (Lemma 1) of \cite{dsg05},
is the foundational building block of the entire GPAC computation theory:
all subsequent control structures, such as clocks, holding phases, error
correction, are built on top of it in \cite{dsg05}. 

In discrete time, the analogous task is provably impossible from a fixed
polynomial map:

\begin{propositionrep}[No uniform polynomial rounder in discrete time]
\label{prop:no-discrete-rounder}
Let $0 < \varepsilon < \delta < 1/2$. There is no polynomial
$p:\R \to \R$ and no integer $N \geq 1$ such that for every $m \in \Z$
and every $u$ with $|u - m| \leq \delta$,
$
|p^{[ N]}(u) - m| \leq \varepsilon.
$
\end{propositionrep}

\begin{proof}
Suppose such $p$ and $N$ exist. The polynomial $q(x) = p^{[ N]}(x) - x$
satisfies $|q(m)| \leq \varepsilon$ for every integer $m$. A polynomial
bounded on infinitely many integers is constant: $q \equiv c$. But then
$|{\delta + c}| \leq \varepsilon$ and $|{-\delta + c}| \leq \varepsilon$,
which is impossible when $\varepsilon < \delta$.
\end{proof}

To achieve approximate rounding in discrete time, one can approximate
$\sigma$ on a bounded interval $[-B, B]$ by a polynomial whose degree
depends on~$B$ and on the target accuracy. An operation that is free in
continuous time becomes structurally expensive in discrete time: the
polynomial degree is the currency in which the discrete construction pays.

\myparagraph{Exact phase selection}
A second fundamental operation is switching between phases of a
computation. Given a counter $C$ supplied as input, one needs an
indicator that outputs~$1$ for the first $C$ steps and~$0$ thereafter.
In continuous time, a polynomial ODE comparator achieves this
autonomously, driving a register exponentially fast toward~$0$ or~$1$
depending on the sign of its input. This can be done using a polynomial map \cite{dsg05}. 
\begin{propositionrep}[No exact polynomial phase selector]
\label{prop:no-exact-selector}
There is no polynomial map $P:\R^m \to \R^m$ and coordinate
projection $\pi:\R^m \to \R$, $(z_1,\ldots,z_m)\mapsto z_i$, such that
for every integer $C \in \N$,
$$
\pi(P^{[n]}(C,0,\ldots,0)) =
\begin{cases}
1 & \text{if } n < C,\\
0 & \text{if } n \geq C,
\end{cases}
\qquad \text{for all } n \in \N.
$$
\end{propositionrep}

\begin{proof}
For fixed~$n$, the function $Q_n(C) = \pi(P^n(C,0,\ldots,0))$ is a
polynomial in~$C$. Since $Q_n(C) = 1$ for all integers $C > n$, the
polynomial $Q_n - 1$ has infinitely many roots, so $Q_n \equiv 1$.
But $Q_n(n) = 0$ by hypothesis---a contradiction.
\end{proof}

\section{Formal framework and equivalence for hard models}
\label{sec:formal}

We now define the five dynamical models, state the equivalence
theorems, and sketch the proof architecture.

\myparagraph{Notation}
Throughout the paper, an \emph{affine map} from $\R^p$ to $\R^q$ is a
function $A:\R^p\to\R^q$ of the form $A(u)=Mu+b$ with
$M\in\mathbb{Q}^{q\times p}$ and $b\in\mathbb{Q}^q$. Unless stated
otherwise, all affine maps considered in this paper have rational
coefficients. Given an ordered tuple of distinct indices
$I=(i_1,\ldots,i_e)\in\{1,\ldots,m\}^e$, we write
$\pi_I:\R^m\to\R^e$ for the associated coordinate projection
$\pi_I(u_1,\ldots,u_m)=(u_{i_1},\ldots,u_{i_e})$. Where the index
tuple is clear from context, we abbreviate $\pi_I$ by $\pi_e$.

\subsection{The discrete exact model: recurrent ReLU computation}
\label{sec:relu-model}

\begin{definition}[Feedforward ReLU block]
\label{def:relu-block}
A \emph{feedforward ReLU block} on $\R^m$ is a map
$R:\R^m\to\R^m$ of the form
$
R \;=\; A_L\circ\sigma\circ A_{L-1}\circ\sigma\circ\cdots
        \circ A_1\circ\sigma\circ A_0,
$
where $L\ge 0$ is a fixed integer (the \emph{depth}); the
$A_\ell$ are affine maps with rational coefficients
$A_\ell:\R^{m_\ell}\to\R^{m_{\ell+1}}$, with $m_0=m_{L+1}=m$ and
arbitrary intermediate widths $m_1,\ldots,m_L\in\N$; and $\sigma$
denotes the coordinatewise application of
$\operatorname{ReLU}(t):=\max(t,0)$.
\end{definition}

Equivalently, $R$ is the function computed by a feedforward neural
network with $L$ hidden layers, rational weights and biases, and ReLU
activations. The map $R$ is piecewise affine with rational pieces.

\begin{definition}[Recurrent ReLU computation]
\label{def:relu-unfolding}
A function $f:\N^d\to\N^e$ admits a \emph{recurrent ReLU computation}
if there exist an integer $m\ge d$, a feedforward ReLU block
$R:\R^m\to\R^m$, a primitive recursive observation time
$T:\N^d\to\N$, and an ordered tuple of designated output-coordinate
indices $I\in\{1,\ldots,m\}^e$ such that, for every input
$x\in\N^d$, the orbit
$
z_0(x):=(x,0,\ldots,0)\in\R^m$,
$z_{n+1}(x):=R(z_n(x))
$
satisfies $f(x)=\pi_I\bigl(z_{T(x)}(x)\bigr)$.
\end{definition}

In other words: a single fixed piecewise-affine map with rational coefficients is
iterated from the raw input $(x,0,\ldots,0)$, and the exact value of
$f(x)$ is read off by projecting onto the output coordinates after
$T(x)$ steps.

\subsection{The neural network model: recurrent $\rho$-computation}
\label{sec:rho-model}

For bounded-range activations (sigmoid, $\tanh$, \ldots), the compact
state space cannot accommodate arbitrarily large integers as raw
values. We fix the encoding $\nu:\N\to(0,1]$ defined by
$\nu(n):=2^{-n}$, and write
$\nu(\vx):=(2^{-x_1},\ldots,2^{-x_d})$ for a tuple $\vx\in\N^d$.
Note that $\nu(0)=1$ and $\nu(n)\le\tfrac{1}{2}$ for $n\ge 1$.

\begin{definition}[Feedforward $\rho$-block]
\label{def:rho-block}
Let $\rho:[0,1]\to[0,1]$. A \emph{feedforward $\rho$-block} of
width $m$ is a map $R:[0,1]^m\to[0,1]^m$ of the form
$
R \;=\; A_L\circ\rho\circ A_{L-1}\circ\rho\circ\cdots
        \circ A_1\circ\rho\circ A_0,
$
where $L\ge 0$ is a fixed integer depth; the $A_\ell$ are affine maps
with rational coefficients
$A_\ell:\R^{m_\ell}\to\R^{m_{\ell+1}}$, with $m_0=m_{L+1}=m$ and
arbitrary intermediate widths, satisfying
$A_\ell\bigl([0,1]^{m_\ell}\bigr)\subseteq[0,1]^{m_{\ell+1}}$ for
every~$\ell$; and $\rho$ denotes the coordinatewise application of
$\rho$ to a vector of $[0,1]^{m_\ell}$.
\end{definition}

The inclusion condition on the $A_\ell$ simply ensures that the
composition is well-defined on $[0,1]^m$ and takes values in
$[0,1]^m$.

\begin{definition}[Admissible activation]
\label{def:nu-admissible-rho}
A function $\rho:[0,1]\to[0,1]$ is \emph{admissible} if:
\begin{enumerate}[(i)]
\item $\rho$ is computable in the sense of computable
analysis~\cite{Wei00,Ko91}, with a primitive recursive evaluation
oracle and a primitive recursive modulus of uniform continuity;
\item there exist a feedforward $\rho$-block 
 $Z:[0,1]\to[0,1]$ and
some $\eta\in(0,\tfrac{1}{8})$ such that
$|Z(u)|\le\eta$ for $u\in[0,\tfrac{5}{8}]$ and
$|Z(u)-1|\le\eta$ for $u\in[\tfrac{7}{8},1]$.
\end{enumerate}
\end{definition}

Condition~(i) ensures primitive recursive simulability.
Condition~(ii) provides a robust zero-detection mechanism: the network
can distinguish $\nu(0)=1$ from all other codes. Both conditions hold
for all standard sigmoid activations.

\begin{definition}[Recurrent $\rho$-computation]
\label{def:rho-unfolding}
Let $\rho$ be admissible. A function $f:\N^d\to\N^e$ admits a
\emph{recurrent $\rho$-computation} if there exist an integer
$m\ge d$, a feedforward $\rho$-block $R:[0,1]^m\to[0,1]^m$, a
primitive recursive observation time $T:\N^d\to\N$, a primitive
recursive output precision $S:\N^d\to\N$ with $S(x)\ge 2$, and an
ordered tuple $I=(i_1,\ldots,i_e)\in\{1,\ldots,m\}^e$ of designated
output-coordinate indices, such that the orbit
$
z_0(x):=(\nu(x),0,\ldots,0)\in[0,1]^m,
\qquad z_{n+1}(x):=R(z_n(x))
$
satisfies, for every output index $j\in\{1,\ldots,e\}$ and every
$x\in\N^d$:
$
\bigl|\,\bigl(\pi_I(z_{T(x)}(x))\bigr)_j
       - \nu\bigl(f_j(x)\bigr)\,\bigr|
\;\le\; 2^{-S(x)}\,\nu\bigl(f_j(x)\bigr).
$
\end{definition}

The bound $S\ge 2$ guarantees unique decodability: for every $n\in\N$,
the nearest $\nu$-code to $\nu(n)$ is at distance
$\tfrac{1}{2}\nu(n)$, so a relative error of $2^{-S}\le\tfrac{1}{4}$
suffices. The precision function $S(x)$ plays the role of a readout
resource: the network and initial state are fixed once $x$ is given,
and only the required output precision varies with the input.

\subsection{The continuous model: polynomial ODE computation}

Here and below, a \emph{polynomial map} (resp.~\emph{polynomial vector
field}) from $\R^p$ to $\R^q$ is a function whose $q$ components are
polynomials in the $p$ input variables with rational coefficients.
Thus every affine map is in particular a polynomial map.

\begin{definition}[Robust polynomial ODE representation]
\label{def:ode-rep}
A function $f:\N^d\to\N^e$ admits a \emph{robust polynomial ODE
representation} if there exist an integer $M\ge d$, a polynomial
vector field $F:\R^M\to\R^M$ with rational coefficients, a primitive
recursive observation time $\tau:\N^d\to\N$, a primitive recursive
safety bound $\widetilde{B}:\N^d\to\N$, and an ordered tuple
$I=(i_1,\ldots,i_e)\in\{1,\ldots,M\}^e$ of designated
output-coordinate indices, such that for every $x\in\N^d$ the
solution $Y=Y_x$ of the Cauchy problem
$$
Y'(t)=F\bigl(Y(t)\bigr),
\qquad
Y(0)=(x,0,\ldots,0)\in\R^M,
$$
satisfies
(i)~$Y$ is defined on $[0,\tau(x)+1]$;
(ii)~$\|Y(t)\|_\infty\le\widetilde{B}(x)$ on this interval;
(iii)~$\bigl\|\pi_I(Y(t))-f(x)\bigr\|_\infty<\tfrac{1}{2}$ for all
$t\in[\tau(x),\tau(x)+1]$.
\end{definition}

The vector field $F$ is fixed; only the observation time, the safety
bound, and the choice of output indices depend on the problem. The
output is recovered by coordinatewise rounding of $\pi_I(Y(\tau(x)))$
to the nearest integer: condition~(iii) guarantees that every output
coordinate stays within distance $\tfrac{1}{2}$ of the integer
$f_j(x)$ throughout the entire window $[\tau(x),\tau(x)+1]$, so the
reading is insensitive to the exact observation time. This is the
sense in which the representation is \emph{robust}.

\subsection{The intermediate model: step-size-controlled polynomial simulation}

\begin{definition}[Step-size-controlled polynomial representation]
\label{def:step-controlled}
A function $f:\N^d\to\N^e$ admits a \emph{step-size-controlled
polynomial representation} if there exist an integer $M\ge d+1$, a
polynomial map $P:\R^M\to\R^M$ with rational coefficients, a
primitive recursive precision threshold $S:\N^d\to\N$, a primitive
recursive observation count $N:\N^d\times\N\to\N$, and an ordered
tuple $I=(i_1,\ldots,i_e)\in\{1,\ldots,M\}^e$ of designated
output-coordinate indices, such that for every $x\in\N^d$ and every
precision parameter $s\in\N$ with $s\ge S(x)$, the orbit defined by
$
Z_0 := (x,\,2^{-s},\,0,\ldots,0)\in\R^M,
\qquad
Z_{n+1} := P(Z_n)
$
satisfies
$\bigl\|\pi_I(Z_{N(x,s)})-f(x)\bigr\|_\infty<\tfrac{1}{2}$.
\end{definition}

We call models where the orbit remains integer-valued \emph{hard} (the ReLU model and the threshold-affine normal form below) and models where it generically leaves the integers \emph{smooth} (the $\rho$-model, the polynomial ODE, and the step-size-controlled simulator).

The map $P$ is purely polynomial: no threshold gate or transcendental
activation is involved. The step size $h=2^{-s}$ is supplied as part
of the initial state, stored in a dedicated coordinate.

%

We isolate the proof-level normal form through which all
equivalences pass.  Fix $\Theta:\R\to\R$ with $\Theta(u)=0$ for
$u\le 0$ and $\Theta(u)=1$ for $u\ge 1$; its values on $(0,1)$
are irrelevant here, since all gate arguments will be integers.

\begin{definition}[Threshold-affine normal form]
\label{def:threshold-affinenormalform}
A function $f:\N^d\to\N^e$ admits a \emph{threshold-affine normal
form} if there exist an integer~$m$, affine maps
$A_0,\dots,A_s:\R^m\to\R^m$ and $q_1,\dots,q_s:\R^m\to\R$
\emph{with integer coefficients}, a primitive recursive time bound
$T:\N^d\to\N$, and output coordinates $\pi_e:\R^m\to\R^e$ such
that, writing
\begin{equation}\label{hardnormalform}
  P(y)\;=\;A_0(y)\;+\;\sum_{j=1}^s\Theta(q_j(y))\,A_j(y),
\end{equation}
and defining $y_0(x):=(x,0,\ldots,0)$,\;
$y_{n+1}(x):=P(y_n(x))$, one has
$f(x)=\pi_e(y_{T(x)}(x))$ for every $x\in\N^d$.
\end{definition}

Since the $A_j$ and $q_j$ have integer coefficients and
$y_0(x)\in\Z^m$, every iterate $y_n(x)$ lies in $\Z^m$ and every
gate argument $q_j(y_n(x))$ is an integer; in particular, $\Theta$
is never evaluated in its ambiguous region $(0,1)$.

\begin{theorem}[Main equivalence: hard models]
\label{thm:main-formal-hard}
For a function $f:\N^d\to\N^e$, the following are equivalent:
\begin{enumerate}
\item \label{itemprhard} $f$ is primitive recursive;
\item \label{itemreluhard} $f$ admits a recurrent ReLU computation;
\item \label{itemcorhard} $f$ admits a threshold-affine normal form.
\end{enumerate}
\end{theorem}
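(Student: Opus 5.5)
I would prove the cycle $(\ref{itemprhard})\Rightarrow(\ref{itemcorhard})\Rightarrow(\ref{itemreluhard})\Rightarrow(\ref{itemprhard})$.

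\emph{From normal form to ReLU.} On integers, $\Theta(u)=\operatorname{ReLU}(u)-\operatorname{ReLU}(u-1)$. The products $\Theta(q_j(y))A_j(y)$ must also be realized. A gated product $b\cdot v$ with $b\in\{0,1\}$ and $v\in\Z$ is not piecewise linear without a bound on $|v|$. The standard trick $\operatorname{ReLU}(v-K(1-b))-\operatorname{ReLU}(-v-K(1-b))$ needs $K\ge|v|$, and $K$ would grow with the input. To avoid this, I would enlarge the normal form so that each $A_j(y)$ is split as $A_j^+(y)-A_j^-(y)$ with nonnegative parts. I would also carry along a state coordinate that dominates all coordinates. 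Since the maps are affine, a bound $K_{n+1}\le cK_n+c$ is itself an affine update, so $K$ can be kept in the state, initialized as $K_0=\sum x_i+1$, which is affine in the raw input. Then $b\cdot v=\operatorname{ReLU}(v-K(1-b))$ for $0\le v\le K$, and this uses only ReLU with affine arguments, since $K(1-b)$ is $K-Kb$. The term $Kb$ is again a gated product, but with $v=K$, which is bounded by $K$ itself. So each step of $P$ becomes a fixed-depth ReLU block, and I keep the same $T$. If the depth mismatch needs care, I would let one normal-form step equal one block application.

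\emph{From ReLU to PR.} Since $R$ has rational coefficients, all iterates are rational. A single step maps rational tuples, coded as integers, to rational tuples by a primitive recursive function. Iterating it $T(x)$ times is primitive recursion, composed with the primitive recursive $T$, followed by projection. This gives $f$ primitive recursive.

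\emph{From PR to normal form (main step).} I would use induction over the PR definition, proving a strengthened statement: every $f$ has a normal form whose output appears at time $T(x)$ and is then \emph{frozen}, i.e.\ stays fixed for all later steps. The machine keeps a halt flag set by a gate, and all dynamic updates are multiplied by $\Theta(1-\text{flag})$.
\begin{itemize}
\item Base functions are immediate.
\item For composition $h(g_1,\dots,g_k)$, run the $g_i$ machines in parallel. Their freeze property lets all of them finish. A gate on the conjunction of their flags then loads the outputs into fresh coordinates of the $h$-machine, which starts from a zeroed state. The $h$-machine is gated by a ``started'' flag. The time bound is $\max T_{g_i}+1+T_h(g(x))$, which is PR.
\item For primitive recursion, keep counters $n$ and $x_0$ and an accumulator. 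Run the $h$-machine repeatedly. Each time it finishes, copy its output into the accumulator, increment $n$, reset its work registers to $0$, and reload its inputs $(n,\text{acc},\vx)$. Reset and load are gated affine assignments of the form $y\leftarrow y+\Theta(\cdot)(\text{new}-y)$. The loop stops when $n=x_0$, tested by the gates $\Theta(x_0-n)$ and $\Theta(n-x_0+1)$. The total time is a PR sum, bounded by a PR function.
\end{itemize}

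The obstacle I expect is the bookkeeping of reset and restart. The products $\Theta(q)\cdot A(y)$ require $A$ to be affine in $y$. The equality tests must be expressible as integer thresholds, which works via $\Theta(u)\Theta(-u+1)$; nested gates can be flattened by $\Theta(a)\Theta(b)=\Theta(a+b-1)$ for $0/1$ values, applied with gate arguments clipped through auxiliary flag coordinates. Finally, the time must be exact, so I would pad with idle frozen steps to make $T$ exactly the halting time.
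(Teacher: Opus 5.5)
The gap is in $(3)\Rightarrow(2)$. Your $(2)\Rightarrow(1)$ argument matches the paper's. Proving $(1)\Rightarrow(3)$ by induction on the primitive recursive schemes, with freezing, start flags and gated restarts, is a workable if heavier alternative to the paper's instruction-by-instruction compilation of a LOOP program.

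You are right that $\Theta(q_j(y))\,A_j(y)$ is a product. It is not obtained from affine maps and ReLU merely by substituting $\Gate$ for $\Theta$, and the paper's proof of Lemma~\ref{lem:threshold-to-relu} passes over exactly this point. But your repair is circular. The argument of $\operatorname{ReLU}(v-K(1-b))$ is $v-K+Kb$. Producing $Kb$ by the same trick gives $\operatorname{ReLU}(K-K(1-b))=\operatorname{ReLU}(Kb)$, whose argument again contains $Kb$. So the ReLU argument never becomes affine in the state.

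No other device gives a step-for-step simulation either. A feedforward ReLU block $R$ is globally Lipschitz, say with constant $L$ for $\|\cdot\|_\infty$. Take the normal form $P(y_1,y_2,y_3)=(y_1,y_2,\Theta(y_2)\,y_1)$ with $T\equiv 1$, which computes $x_1\cdot[x_2\ge 1]$. Simulating it with the same $T$ would force $R(x_1,0,0,\ldots,0)$ and $R(x_1,1,0,\ldots,0)$, two points at distance $1$, to have output coordinates $0$ and $x_1$. That is impossible once $x_1>L$. A constant overhead fails too, since $R^{k}$ is $L^{k}$-Lipschitz. So a correct simulation must slow down by an amount that grows with the input.

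One way to close the cycle is as follows.
\begin{itemize}
\item Prove $(3)\Rightarrow(1)$ directly. The orbit stays in $\mathbb{Z}^m$ and $\Theta$ restricted to $\mathbb{Z}$ is primitive recursive, so the orbit is primitive recursive exactly as in your ReLU argument.
\item Prove $(1)\Rightarrow(2)$ by compiling the LOOP program into a counter machine whose registers change by at most one per step. Handle copies $R_i:=R_j$ and loop counters by unit-transfer loops.
\end{itemize}
Each register update is then $R_i'=R_i+\sum_\ell c_{\ell i}\,\mathrm{pc}_\ell$ with constants $c_{\ell i}\in\{-1,0,1\}$, which is affine in the state. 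The only nonlinear operations left are zero tests and conjunctions of $\{0,1\}$-values. An example is $\Gate(\mathrm{pc}_\ell-\Gate(R_i))$ for ``at instruction $\ell$ with $R_i=0$''. These are exact fixed-depth ReLU expressions on integer states.

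Let the all-zero control state stand for the first instruction, so that the raw input $(x,0,\ldots,0)$ is a legal start. Let a halting instruction freeze the configuration. Any primitive recursive bound on the running time then serves as observation time. Such a bound exists because the register contents of a LOOP program, and hence the unit-transfer slowdown, are primitive recursively bounded.
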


The proof is based on the fact that  primitive
recursive functions are known to coincide with those computable by a LOOP
program~\cite{MeyerRitchie67}, and that a single LOOP instruction
compiles directly into the form given by Equation~\eqref{hardnormalform}. The interest
of the theorem lies not in the proof but in what it singles out: the
threshold-affine normal form of Item~\ref{itemcorhard}. This is the
central object of the paper, and the hinge through which the smooth
models will also be analyzed.

\myparagraph{More details on proof architecture.}
\emph{$(\ref{itemprhard})\Rightarrow(\ref{itemcorhard})$}
(Section~\ref{sec:pr-to-threshold}). A LOOP program computing $f$ is
compiled one instruction at a time into~\eqref{hardnormalform}:
arithmetic is carried by the affine maps $A_j$, and zero-testing by
the threshold $\Theta$ applied to an integer argument~$q_j(y)$.
\emph{$(\ref{itemcorhard})\Rightarrow(\ref{itemreluhard})$}
(Section~\ref{sec:threshold-to-relu}). Each threshold gate
$\Theta(q_j(y))$ is replaced by
$\operatorname{ReLU}(q_j(y))-\operatorname{ReLU}(q_j(y)-1)$, an
expression that coincides with $\Theta$ on every integer argument.
\emph{$(\ref{itemreluhard})\Rightarrow(\ref{itemprhard})$}
(Section~\ref{sec:relutopr}). A fixed ReLU block with rational
parameters induces, on rational initial states, a trajectory whose
coordinates are primitive recursive in the iteration count.

\myparagraph{A subtlety.}
The compilation
$(\ref{itemcorhard})\Rightarrow(\ref{itemreluhard})$ produces a ReLU
block that agrees with the underlying threshold-affine map only along
the integer-valued trajectories relevant to the computation, not on
all real inputs. This is enough here, since the equivalence concerns
functions on~$\N^d$.

\section{Equivalence for smooth models}

\subsection{A dynamical systems perspective}
\label{subsec:hard-to-smooth}

The equivalences for hard models in
Theorem~\ref{thm:main-formal-hard} establish more than the bare
statement. The constructions endow the threshold-affine map
$P$ of~\eqref{hardnormalform}, together with its orbits, with four
structural properties that are most naturally phrased in the
language of dynamical systems.

\emph{(i)~The integer lattice is forward-invariant under~$P$.}\;
The affine maps $A_j$ have integer coefficients, and $\Theta$ takes
values in $\{0,1\}$ on integer arguments; hence
$P(\Z^m)\subseteq\Z^m$. Every trajectory issued from the raw
integer initial condition $y_0(x)=(x,0,\ldots,0)$ therefore stays
on the lattice.

\emph{(ii)~The reference trajectory is bounded, uniformly in the
input.}\;
There exists a primitive recursive function $B:\N^d\to\N$ such
that, for every $x\in\N^d$ and every $n\le T(x)$, the state
$y_n(x)$ lies in the cube $[-B(x),B(x)]^m$. Combined with~(i),
this confines the effective dynamics to a finite subset of the
lattice $\Z^m$, whose size is primitive recursively controlled by
the input.

\emph{(iii)~A distinguished orbit carries the computational
content.}\;
For each input~$x$, the finite sequence
$y_0(x),\ldots,y_{T(x)}(x)$ is a specific trajectory of~$P$ along
which the output is read off at time~$T(x)$. The computation is
encoded by the family of these \emph{reference orbits}, indexed
by $x\in\N^d$.

\emph{(iv)~The reference orbit is robust to perturbations of the
threshold.}\;
Every gate argument $q_j(y_n(x))$ on the reference orbit is an
integer; it lies in the region where $\Theta$ is fully determined
($u\le 0$ or $u\ge 1$) and never enters the ambiguous interval
$(0,1)$. The orbit is therefore insensitive to any modification
of~$\Theta$ that preserves its values on~$\Z$.

Property~(iv) is the hinge between the hard and smooth worlds: any
alternative dynamics agreeing with~$P$ on~$\Z^m$ reproduces the
computation faithfully. Formally, we have:

\begin{theoremrep}[Uniform-robust characterization]
\label{thm:uniform-robust}
A function $f:\N^d\to\N^e$ is primitive recursive if and only if it
admits a threshold-affine normal form in which $\Theta(u)=0$ for
$u\le\tfrac{1}{4}$ and $\Theta(u)=1$ for $u\ge\tfrac{3}{4}$.
\end{theoremrep}

\begin{proof}
($\Leftarrow$)~Suppose $f$ admits a threshold-affine normal form in
which the threshold function~$\widetilde{\Theta}$ satisfies
$\widetilde{\Theta}(u)=0$ for $u\le\tfrac{1}{4}$ and
$\widetilde{\Theta}(u)=1$ for $u\ge\tfrac{3}{4}$. Since
$\{u:u\le 0\}\subseteq\{u:u\le\tfrac14\}$ and
$\{u:u\ge 1\}\subseteq\{u:u\ge\tfrac34\}$, $\widetilde{\Theta}$ in
particular satisfies $\widetilde{\Theta}(u)=0$ for $u\le 0$ and
$\widetilde{\Theta}(u)=1$ for $u\ge 1$. Hence $\widetilde{\Theta}$
is a valid instance of the threshold function fixed in
Definition~\ref{def:threshold-affinenormalform}, and $f$ admits a
threshold-affine normal form in the sense of that definition. By
Theorem~\ref{thm:main-formal-hard},
$f\in\mathcal{PR}$.

($\Rightarrow$)~Suppose $f\in\mathcal{PR}$. By
Theorem~\ref{thm:main-formal-hard}, there exist an integer $m$,
affine maps $A_0,\dots,A_s:\R^m\to\R^m$, affine maps
$q_1,\dots,q_s:\R^m\to\R$, a primitive recursive time bound
$T:\N^d\to\N$, an index tuple
$I\in\{1,\dots,m\}^e$ of output coordinates, and a threshold
function $\Theta:\R\to\R$ satisfying $\Theta(u)=0$ for $u\le 0$ and
$\Theta(u)=1$ for $u\ge 1$, such that, writing
\begin{equation*}
P(y)\;=\;A_0(y)+\sum_{j=1}^{s}\Theta\!\bigl(q_j(y)\bigr)\,A_j(y),
\end{equation*}
and $y_0(x):=(x,0,\dots,0)$,
$y_{n+1}(x):=P\bigl(y_n(x)\bigr)$, one has
$f(x)=\pi_I\bigl(y_{T(x)}(x)\bigr)$ for every $x\in\N^d$.

Moreover, the compilation of
Theorem~\ref{thm:main-formal-hard} produces the $A_j$ and $q_j$
with integer coefficients (the affine parts encode LOOP
assignments, control-register updates, and Boolean combinations of
one-hot program-counter bits; the $q_j$ encode zero-tests, all
acting on integer-valued registers).

\myparagraph{Step 1: the reference orbit remains in $\Z^m$}
We show by induction on $n\in\{0,1,\dots,T(x)\}$ that
$y_n(x)\in\Z^m$. For $n=0$, $y_0(x)=(x,0,\dots,0)\in\Z^m$ since
$x\in\N^d$. For the inductive step, assume $y_n(x)\in\Z^m$. The
integer coefficients of $A_0$ and each $A_j$ give
$A_0(y_n(x))\in\Z^m$ and $A_j(y_n(x))\in\Z^m$; the integer
coefficients of $q_j$ give $q_j(y_n(x))\in\Z$. Since any integer
$k$ satisfies either $k\le 0$ or $k\ge 1$, and $\Theta$ takes value
$0$ on the first set and $1$ on the second,
$\Theta(q_j(y_n(x)))\in\{0,1\}$. Therefore
\begin{equation*}
y_{n+1}(x)\;=\;A_0(y_n(x))+\sum_{j=1}^{s}
  \underbrace{\Theta(q_j(y_n(x)))}_{\in\{0,1\}}\,
  \underbrace{A_j(y_n(x))}_{\in\Z^m}\;\in\;\Z^m.
\end{equation*}
In particular, every gate argument $q_j(y_n(x))$ for
$n\in\{0,\dots,T(x)-1\}$ and $j\in\{1,\dots,s\}$ is an integer.

\myparagraph{Step 2: substituting a margin threshold}
Define $\widetilde{\Theta}:\R\to\R$ by
\begin{equation*}
\widetilde{\Theta}(u)\;:=\;\operatorname{ReLU}(2u-\tfrac{1}{2})
                         -\operatorname{ReLU}(2u-\tfrac{3}{2}).
\end{equation*}
Direct verification shows:
for $u\le\tfrac14$, both ReLU's vanish, so
$\widetilde{\Theta}(u)=0$;
for $u\ge\tfrac34$, both arguments are $\ge 0$, yielding
$\widetilde{\Theta}(u)=(2u-\tfrac12)-(2u-\tfrac32)=1$.
Hence $\widetilde{\Theta}$ meets the margin condition. On integers
$k\in\Z$:
\begin{itemize}
\item if $k\le 0$ then $k\le\tfrac14$, so
$\widetilde{\Theta}(k)=0=\Theta(k)$;
\item if $k\ge 1$ then $k\ge\tfrac34$, so
$\widetilde{\Theta}(k)=1=\Theta(k)$.
\end{itemize}
Consequently $\widetilde{\Theta}$ and $\Theta$ agree on $\Z$.

\myparagraph{Step 3: the orbit is unchanged}
Let
\begin{equation*}
\widetilde{P}(y)\;:=\;A_0(y)
+\sum_{j=1}^{s}\widetilde{\Theta}\!\bigl(q_j(y)\bigr)\,A_j(y),
\end{equation*}
keeping the same $A_j$ and $q_j$, and define the new orbit
$\widetilde{y}_0(x):=(x,0,\dots,0)$,
$\widetilde{y}_{n+1}(x):=\widetilde{P}(\widetilde{y}_n(x))$. We
show by induction on $n\in\{0,\dots,T(x)\}$ that
$\widetilde{y}_n(x)=y_n(x)$.

The base case is immediate. For the inductive step, assume
$\widetilde{y}_n(x)=y_n(x)$; by Step~1, this common value is in
$\Z^m$, and each $q_j(y_n(x))$ is an integer. By Step~2,
$\widetilde{\Theta}(q_j(y_n(x)))=\Theta(q_j(y_n(x)))$ for every
$j$. Therefore
\begin{equation*}
\widetilde{P}(\widetilde{y}_n(x))
=A_0(y_n(x))+\sum_{j=1}^{s}
  \widetilde{\Theta}(q_j(y_n(x)))\,A_j(y_n(x))
=A_0(y_n(x))+\sum_{j=1}^{s}
  \Theta(q_j(y_n(x)))\,A_j(y_n(x))
=P(y_n(x))=y_{n+1}(x).
\end{equation*}

Taking $n=T(x)$ yields
$\widetilde{y}_{T(x)}(x)=y_{T(x)}(x)$, hence
$\pi_I(\widetilde{y}_{T(x)}(x))=\pi_I(y_{T(x)}(x))=f(x)$. The
tuple $(m,A_0,\dots,A_s,q_1,\dots,q_s,T,I)$, together with the
threshold $\widetilde{\Theta}$, is therefore a threshold-affine
normal form for $f$ of the required kind.
\end{proof}

A proof is given later. The idea: the margin condition
is strictly stronger than the threshold condition of
Definition~\ref{def:threshold-affinenormalform}, so the "if"
direction is immediate from Theorem~\ref{thm:main-formal-hard}.
For the "only if" direction, the compilation of that theorem
produces integer gate arguments (property~(iv)), so any threshold
agreeing with $\Theta$ on~$\Z$, in particular the margin version
$\widetilde{\Theta}(u)=\operatorname{ReLU}(2u-\tfrac12)
-\operatorname{ReLU}(2u-\tfrac32)$, leaves the reference orbit
unchanged.

Properties~(i)--(iv) motivate the smooth models of the next
subsection: it is enough for the smooth dynamics to send each
integer state into the \emph{basin of correct decoding} of its
successor, on the bounded trapping region.

%
%
%
%

\subsection{Integer-faithful shadowing}
\label{subsec:integer-faithful}

We formalize requied concepts  using a notion of shadowing.

\begin{definition}[Integer-faithful approximation]
\label{def:integer-faithful}
Let $D\subseteq\R^m$, let $F,\widetilde{F}:D\to\R^m$, and assume
$F(D\cap\Z^m)\subseteq\Z^m$.  The map $\widetilde{F}$ is
\emph{integer-faithful to~$F$ on~$D$} if
$\|\widetilde{F}(y)-F(y)\|_\infty < \tfrac{1}{2}$ for every
$y\in D\cap\Z^m$.
\end{definition}

The map $\widetilde{F}$ need not preserve the lattice or coincide
with~$F$ on non-integer points.  It only needs to land in the
correct decoding basin on the integer skeleton of~$D$.

\begin{lemma}[Decoded shadowing]
\label{lem:decoded-shadowing}
Under the hypotheses above, if additionally
$F(D\cap\Z^m)\subseteq D\cap\Z^m$, then the decoded orbit
$\widetilde{y}_{n+1}=\round(\widetilde{F}(\widetilde{y}_n))$
started from $\widetilde{y}_0=y_0\in D\cap\Z^m$ satisfies
$\widetilde{y}_n = y_n$ for all $n$ such that
$y_0,\ldots,y_n\in D$.
\end{lemma}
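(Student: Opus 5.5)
The plan is a straightforward induction on $n$, comparing the decoded orbit $\widetilde{y}_n$ with the reference orbit $y_{n+1}=F(y_n)$. Here $\round$ denotes coordinatewise rounding to the nearest integer. The only fact about rounding we need is elementary: if $k\in\Z^m$ and $\|v-k\|_\infty<\tfrac12$, then $\round(v)=k$. This holds because each coordinate $v_i$ lies in the open interval $(k_i-\tfrac12,k_i+\tfrac12)$, where the nearest integer is unique and equal to $k_i$. Because the inequality is strict, no tie-breaking convention for half-integers ever comes into play.

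\textbf{Induction.} We prove $\widetilde{y}_k=y_k$ for all $k\le n$, under the standing assumption $y_0,\ldots,y_n\in D$. The base case $k=0$ holds by definition, since $\widetilde{y}_0=y_0$. For the step, fix $k<n$ and assume $\widetilde{y}_k=y_k$. Since $y_0\in D\cap\Z^m$ and $F(D\cap\Z^m)\subseteq D\cap\Z^m$, a preliminary induction shows $y_k\in D\cap\Z^m$ and $y_{k+1}=F(y_k)\in\Z^m$. The hypothesis $y_0,\ldots,y_n\in D$ is in fact already implied by this invariance, so it serves only to make the range of validity explicit. Integer-faithfulness applied at the point $y_k\in D\cap\Z^m$ gives
\[
\|\widetilde{F}(y_k)-F(y_k)\|_\infty<\tfrac12 .
\]
The rounding fact, with $k:=F(y_k)\in\Z^m$, then yields
\[
\widetilde{y}_{k+1}=\round\bigl(\widetilde{F}(\widetilde{y}_k)\bigr)=\round\bigl(\widetilde{F}(y_k)\bigr)=F(y_k)=y_{k+1},
\]
which closes the induction.

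\textbf{Main obstacle.} There is essentially no analytic difficulty; the one point needing care is bookkeeping. We must check that integer-faithfulness is only ever invoked at integer points of $D$, so that $\widetilde{F}$ is never evaluated off the lattice. This is guaranteed because the induction hypothesis $\widetilde{y}_k=y_k$ keeps $\widetilde{y}_k$ on the integer reference orbit, which stays in $D\cap\Z^m$ by the forward-invariance assumption. Once this is checked, no errors accumulate: each step exactly re-synchronizes the decoded orbit with the reference orbit, so the argument needs neither contraction estimates nor error propagation bounds.
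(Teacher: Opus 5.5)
Your proof is correct and is the same argument as the paper's: induction on $n$, with integer-faithfulness applied at the lattice point $\widetilde{y}_n=y_n\in D\cap\Z^m$, which puts $\widetilde{F}(y_n)$ within $\tfrac12$ of $F(y_n)\in\Z^m$, so rounding recovers $y_{n+1}$. One cosmetic fix: you use $k$ both as the induction index and as the integer vector in the rounding fact, so rename one of them.
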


\begin{proof}
Induction: if $\widetilde{y}_n=y_n\in D\cap\Z^m$, then
$\widetilde{F}(y_n)$ lies within $\tfrac{1}{2}$ of
$F(y_n)\in\Z^m$, so rounding recovers $y_{n+1}$.
\end{proof}

For the smooth models, that do not round exactly between steps,  the
integer-faithful property alone (Definition~\ref{def:integer-faithful})
is no longer sufficient. What we need in addition is a quantitative
control on how a one-step error propagates along the orbit. The
following discrete analogue of Grönwall's inequality provides it.

\begin{lemmarep}[Discrete Grönwall shadowing]
\label{lem:quantitative-shadowing}
Let $D\subseteq\R^m$ and let $P:D\to D$ admit an orbit
$y_0,y_1,\ldots,y_T$ (so $y_{n+1}=P(y_n)$ for $0\le n\le T-1$).
Let $L\ge 0$, $\varepsilon\ge 0$, $r>0$ satisfy the tube condition
$\varepsilon\,\sum_{k=0}^{T-1}L^{k}\;\le\;r,$
and let $\Phi$ be defined at least on
$\bigcup_{n=0}^{T-1}B_{\infty}(y_n,r)$ and satisfy, for every
$n\in\{0,\ldots,T-1\}$ and every $z\in B_{\infty}(y_n,r)$,
$\bigl\|\Phi(z)-P(y_n)\bigr\|_{\infty}
\;\le\;L\,\|z-y_n\|_{\infty}+\varepsilon.
$

Then the orbit defined by $z_0:=y_0$ and $z_{n+1}:=\Phi(z_n)$ is
well-defined for $0\le n\le T$ and satisfies
$\|z_n-y_n\|_{\infty}
\;\le\;\varepsilon\,\sum_{k=0}^{n-1}L^{k}$
$(0\le n\le T),
$
where the empty sum at $n=0$ is taken to be zero.
\end{lemmarep}

\begin{proofsketch}
Set $e_n:=\|z_n-y_n\|_{\infty}$. Then $e_0=0$, and the one-step
bound 
applied to $z=z_n$ gives the
linear-plus-constant recurrence $e_{n+1}\le L\,e_n+\varepsilon$.
Induction yields $e_n\le\varepsilon\sum_{k=0}^{n-1}L^{k}$; the tube
condition guarantees $e_n\le r$ throughout, so
$z_n$ never leaves the domain where the inequality holds.\end{proofsketch}

\begin{proof}
We repeat the statement, and involved conditions:

Let $D\subseteq\R^m$ and let $P:D\to D$ admit an orbit
$y_0,y_1,\ldots,y_T$ (so $y_{n+1}=P(y_n)$ for $0\le n\le T-1$).
Let $L\ge 0$, $\varepsilon\ge 0$, $r>0$ satisfy the tube condition
\begin{equation}\label{eq:grw-tube}
\varepsilon\,\sum_{k=0}^{T-1}L^{k}\;\le\;r,
\end{equation}
and let $\Phi$ be defined at least on
$\bigcup_{n=0}^{T-1}B_{\infty}(y_n,r)$ and satisfy, for every
$n\in\{0,\ldots,T-1\}$ and every $z\in B_{\infty}(y_n,r)$,
\begin{equation}\label{eq:grw-onestep}
\bigl\|\Phi(z)-P(y_n)\bigr\|_{\infty}
\;\le\;L\,\|z-y_n\|_{\infty}+\varepsilon.
\end{equation}

Then the orbit defined by $z_0:=y_0$ and $z_{n+1}:=\Phi(z_n)$ is
well-defined for $0\le n\le T$ and satisfies
\begin{equation}\label{eq:grw-conclusion}
\|z_n-y_n\|_{\infty}
\;\le\;\varepsilon\,\sum_{k=0}^{n-1}L^{k}$
$(0\le n\le T),
\end{equation}
where the empty sum at $n=0$ is taken to be zero.

We now go to the proof:

Set $e_n:=\|z_n-y_n\|_{\infty}$ for $0\le n\le T$. We prove by
induction on $n\in\{0,\ldots,T\}$ the joint statement
\begin{equation}\label{eq:grw-ind}
\text{(a)}\quad e_n\le\varepsilon\sum_{k=0}^{n-1}L^{k},
\qquad
\text{(b)}\quad e_n\le r.
\end{equation}

\myparagraph{Base case $n=0$.}
Since $z_0=y_0$, we have $e_0=0$. The empty sum at $n=0$ is zero,
so~(a) holds with equality. Since $r>0$, (b) holds as well.

\myparagraph{Inductive step.}
Suppose \eqref{eq:grw-ind} holds at some index
$n\in\{0,\ldots,T-1\}$, and let us prove it at index $n+1$.

By the induction hypothesis~(b), $z_n\in B_{\infty}(y_n,r)$, so
$\Phi(z_n)$ is defined and \eqref{eq:grw-onestep} applies at
$z=z_n$:
\begin{equation*}
\|\Phi(z_n)-P(y_n)\|_{\infty}
\;\le\;L\,\|z_n-y_n\|_{\infty}+\varepsilon
\;=\;L\,e_n+\varepsilon.
\end{equation*}
Since $z_{n+1}=\Phi(z_n)$ and $y_{n+1}=P(y_n)$, this gives the
one-step error bound
\begin{equation}\label{eq:grw-step}
e_{n+1}\;\le\;L\,e_n+\varepsilon.
\end{equation}

Combining \eqref{eq:grw-step} with the induction hypothesis~(a), and
using $L\ge 0$,
\begin{equation*}
e_{n+1}
\;\le\;L\Bigl(\varepsilon\sum_{k=0}^{n-1}L^{k}\Bigr)+\varepsilon
\;=\;\varepsilon\sum_{k=1}^{n}L^{k}+\varepsilon
\;=\;\varepsilon\sum_{k=0}^{n}L^{k},
\end{equation*}
which is~(a) at index $n+1$.

For (b) at index $n+1$: the partial sums
$\sum_{k=0}^{j}L^{k}$ are non-negative and non-decreasing in $j$
(as $L\ge 0$), so
\begin{equation*}
e_{n+1}\;\le\;\varepsilon\sum_{k=0}^{n}L^{k}
\;\le\;\varepsilon\sum_{k=0}^{T-1}L^{k}
\;\le\;r
\end{equation*}
by the tube condition~\eqref{eq:grw-tube}. This closes the
induction.

The conclusion~\eqref{eq:grw-conclusion} is precisely~(a) for all
$n\in\{0,\ldots,T\}$.
\end{proof}

\subsection{Main theorem: all models}

\begin{theorem}[Main equivalence: hard and smooth models]
\label{thm:main-formal}
For a function $f:\N^d\to\N^e$, the following are equivalent:
\begin{enumerate}
\item \label{itempr} $f$ is primitive recursive;
\item \label{itemreul} $f$ admits a recurrent ReLU computation;
\item \label{itemnn} $f$ admits a recurrent $\rho$-computation;
\item \label{itemcorhardunif} $f$ admits a uniform threshold-affine
  normal form;
\item \label{itemquatre} $f$ admits a polynomial ODE
  representation;
\item \label{itemcinq} $f$ admits a step-size-controlled polynomial
  representation.
\end{enumerate}
\end{theorem}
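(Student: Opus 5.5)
We organize the proof as a cycle anchored on the hard equivalences already established. Theorem~\ref{thm:main-formal-hard} gives (\ref{itempr})$\Leftrightarrow$(\ref{itemreul}), and Theorem~\ref{thm:uniform-robust} gives (\ref{itempr})$\Leftrightarrow$(\ref{itemcorhardunif}), reading ``uniform threshold-affine normal form'' as the margin version. The plan is then to establish three forward implications and three converses. The forward implications are (\ref{itemcorhardunif})$\Rightarrow$(\ref{itemnn}), (\ref{itemcorhardunif})$\Rightarrow$(\ref{itemquatre}) and (\ref{itemquatre})$\Rightarrow$(\ref{itemcinq}). The converses (\ref{itemnn})$\Rightarrow$(\ref{itempr}), (\ref{itemquatre})$\Rightarrow$(\ref{itempr}) and (\ref{itemcinq})$\Rightarrow$(\ref{itempr}) are simulation arguments. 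For all forward directions we start from a uniform normal form $P$ with integer coefficients. By properties (i)--(iv) we get a primitive recursive bound $B(x)$ confining the reference orbit $y_0(x),\dots,y_{T(x)}(x)$ to $[-B(x),B(x)]^m\cap\Z^m$. The aim is to construct a smooth $\Phi$ satisfying the one-step hypothesis of Lemma~\ref{lem:quantitative-shadowing} with contraction $L<1$ after a rounding stage. This yields a bounded error tube, and Lemma~\ref{lem:decoded-shadowing} then shows that decoding recovers $y_{T(x)}$ exactly.

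\textbf{Converses.} For (\ref{itemcinq})$\Rightarrow$(\ref{itempr}), $P$ has rational coefficients and $Z_0$ is rational, so $Z_{N(x,s)}$ is a rational vector. Its numerator and denominator sizes grow at most like $\deg(P)^{n}$ times the input size. These quantities are primitive recursive in $(x,n)$, so we can evaluate exactly, take $s=S(x)$, and round. For (\ref{itemnn})$\Rightarrow$(\ref{itempr}), we use the primitive recursive oracle and modulus of continuity of $\rho$ to evaluate the block $T(x)$ times. Because the block is a composition of uniformly continuous maps, we back-propagate the required precision through $T(x)$ layers, and that precision stays primitive recursive. The output $\nu(f_j(x))$ is then decoded thanks to the $2^{-S}$ relative margin. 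For (\ref{itemquatre})$\Rightarrow$(\ref{itempr}), the safety bound $\widetilde B(x)$ gives a primitive recursive Lipschitz constant for $F$ on the relevant box. A Taylor or Euler scheme with primitive recursively chosen step, justified by Grönwall, approximates $Y(\tau(x))$ within $1/4$. Condition~(iii) then lets us round to obtain $f(x)$.

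\textbf{Forward directions.} For (\ref{itemcorhardunif})$\Rightarrow$(\ref{itemnn}), the integer state $y$ is encoded coordinatewise via $\nu$, or via a two-sided variant for signed registers. The difficulty is that the affine updates of $P$ act additively on integers but multiplicatively on $\nu$-codes. We would therefore first normalize the LOOP compilation so that only increments, decrements, copies and zero-tests occur, which amount to multiplications or divisions of codes by $2$ and to gating. The gate $Z$ of Definition~\ref{def:nu-admissible-rho}(ii) provides robust zero-detection, and products of a code with a near-Boolean gate are realized by a $\rho$-block up to error controlled in relative terms. We then apply Lemma~\ref{lem:quantitative-shadowing} in relative error with the primitive recursive precision $S$.

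For (\ref{itemcorhardunif})$\Rightarrow$(\ref{itemquatre}), we follow the \cite{dsg05} paradigm. A clock variable built from $y''=-y$ splits each unit of time into two half-periods. In the first, a register $u$ is driven toward $P(v)$ while $v$ is held; in the second, $v$ is driven toward round$(u)$ while $u$ is held. The threshold $\Theta(q_j)$ is replaced by a polynomial ODE comparator acting on an integer-near argument, which is possible because the $1/4$--$3/4$ margin tolerates errors. The contraction $\sigma$ keeps errors below $1/4$, with convergence rates tuned by a polynomial in $B(x)$; this tuning is achieved by storing $x$-dependent gains as extra variables generated by the ODE itself from the raw input. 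For (\ref{itemquatre})$\Rightarrow$(\ref{itemcinq}), we apply Euler's method with step $h=2^{-s}$ stored in a coordinate: $Z_{n+1}=Z_n+hF(Z_n)$, and $h$ is kept constant by $h\mapsto h$. The bound $\widetilde B$ gives a primitive recursive Lipschitz constant, the global error is $O(h e^{L\tau})$, and we take $N(x,s)=\tau(x)2^{s}$ together with a primitive recursive threshold $S(x)$.

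\textbf{Main obstacle.} The hardest step is (\ref{itemcorhardunif})$\Rightarrow$(\ref{itemquatre}). The vector field must be fixed while the required convergence rate depends on $B(x)$, since errors must be absorbed uniformly on an unbounded set of inputs with polynomial dynamics only. We would first try to make the ODE generate, from $x$, an auxiliary quantity growing faster than $B(x)$ during an initial phase, and then use it as a gain. This choice is where the primitive recursive, rather than polynomial, time bound $\tau$ is spent.
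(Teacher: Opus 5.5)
Your cycle is the paper's: the hard equivalences plus Theorem~\ref{thm:uniform-robust}, forward implications from the uniform normal form to the $\rho$-model and to the ODE, Euler discretization, and simulation converses. Your direct converse from the ODE model is redundant. Your treatment of the step-size converse, and of back-propagating precision through $T(x)$ applications of the block, matches the paper.

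\textbf{The $\rho$-converse is false as stated.} The most serious gap is the decoding step of (\ref{itemnn})$\Rightarrow$(\ref{itempr}). The relative margin $2^{-S(x)}\nu(f_j(x))$ is an absolute margin of order $2^{-f_j(x)-S(x)}$, so the simulation precision must exceed $f_j(x)$. Nothing in Definition~\ref{def:rho-unfolding} bounds $f_j(x)$ primitive recursively, and the paper's proof of Theorem~\ref{thm:nn-implies-pr} has the same hole. This cannot be repaired, because the implication fails for admissible activations as defined. Let $A$ be the diagonal Ackermann function, and let $\rho$ be piecewise linear with:
\begin{itemize}
\item $\rho(0)=0$ and $\rho(2^{-k})=2^{-A(k)-3}$ for $k\ge1$;
\item linear from $\rho(\tfrac12)$ down to $\rho(\tfrac58)=0$;
\item $4(u-\tfrac58)$ on $[\tfrac58,\tfrac78]$, and $1$ on $[\tfrac78,1]$.
\end{itemize}
This $\rho$ is $4$-Lipschitz. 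It has a primitive recursive evaluator, since $(k,s)\mapsto\min(A(k),s)$ is primitive recursive. And $Z:=\rho$ is a detector with $\eta=\tfrac1{16}$. Yet the one-layer block $R:=\rho$, with $T\equiv1$ and $S\equiv2$, sends $\nu(x)$ exactly to $\nu(f(x))$, where $f(0)=0$ and $f(x)=A(x)+3$ for $x\ge1$. This $f$ is not primitive recursive. Some strengthening of the model is needed, for instance requiring an absolute readout error at most $2^{-S(x)}<\tfrac14\nu(f_j(x))$, which forces $f_j(x)<S(x)$.

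\textbf{The forward $\rho$-direction is not established.} In (\ref{itemcorhardunif})$\Rightarrow$(\ref{itemnn}), your load-bearing claim is unsupported: that a code times a near-Boolean gate is realized by a $\rho$-block with relatively controlled error. A $\rho$-block contains no multiplication. Its affine layers map the cube into the cube, so every row has $\ell^1$-norm at most $1$ and the layers are non-expansive. In particular your decrement $\nu(n)\mapsto 2\nu(n)$ is not an admissible layer; it would have to come from $\rho$, about which nothing is known at scale $2^{-B(x)}$. The detector error $\eta$ is a fixed absolute constant, so used additively on a code of size $2^{-B(x)}$ it becomes a relative error of order $\eta\,2^{B(x)}$. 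Even a fixed relative error per step compounds over $T(x)$ steps, because the network is fixed and nothing in the $\nu$-domain plays the role of your ``rounding stage''. With $L\ge1$, Lemma~\ref{lem:quantitative-shadowing} then gives no useful tube. The paper's Lemma~\ref{lem:rho-one-step} does not close this either. Its bound $C_R\varepsilon+s\eta$ can never be $\le\varepsilon 2^{-B(x)}$ once $C_R\ge1$. Its coded affine maps are monomials rather than affine layers. And its update adds codes, although $\nu$ of a sum is a product of codes.

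\textbf{The ODE direction leaves its main obstacle open.} For (\ref{itemcorhardunif})$\Rightarrow$(\ref{itemquatre}), the obstacle you name is real. A half-period exponential chase leaves a residual proportional to the distance to the target, which is of order $B(x)$. A gate error multiplies $A_j(y)$, which is also of order $B(x)$. So the $x$-independent $\lambda,\eta$ in~\eqref{eq:one-cycle-estimate} are not justified as written. Your proposal leaves this unresolved. The ``initial phase'' version is circular, since the flow cannot know when its gain exceeds $B(x)$ without computing $B(x)$. A gain that grows with the computation does work:
\begin{itemize}
\item $\|P(y)\|_\infty\le K(\|y\|_\infty+1)$ for a fixed $K$;
\item the factor $1+g=e^{ct}$, with $g'=c(g+1)$ and $g(0)=0$, times a polynomial in held copies of the input, therefore dominates $\|y_n(x)\|_\infty$ at every cycle;
\item using it as the contraction rate restores an $x$-independent per-cycle estimate, at the price of a larger but still primitive recursive $\widetilde B$.
\end{itemize}

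\textbf{Rounding margin.} Your direct converse from the ODE model misreads condition~(iii) of Definition~\ref{def:ode-rep}, which only gives distance $<\tfrac12$. Approximating within $\tfrac14$ and then rounding needs a primitive recursive margin that the definition does not supply. The same margin is silently assumed in the Euler step, both in your plan and in the paper.
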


\myparagraph{Proof architecture}
By Theorems~\ref{thm:main-formal-hard} and~\ref{thm:uniform-robust},
it suffices to pass from the uniform-robust hard normal form to the
smooth models and to close the loop back to primitive recursion. The
guiding principle is that the hard dynamics comes with a reference
orbit and a uniform margin, so each smooth model only has to stay in
the correct decoding basin along that orbit.

\noindent\emph{$(\ref{itemcorhardunif})\Rightarrow(\ref{itemquatre})$}
(Section~\ref{sec:threshold-to-ode}).\;
Each threshold becomes a polynomial ODE comparator, and each affine
update is realized by continuous-time transport in a sample-and-hold
cycle; correctness follows from
Lemma~\ref{lem:quantitative-shadowing} on the trapping region.

\noindent\emph{$(\ref{itemcorhardunif})\Rightarrow(\ref{itemnn})$}
(Section~\ref{sec:threshold-to-rho}).\;
The hard computation is transported to the $\nu$-coded domain, and
each threshold is replaced by a smooth gate built from~$\rho$; the
margin condition ensures that the smooth gate takes the same branch
decisions along the reference orbit.

The remaining three implications follow standard arguments and are
based on the following:
$(\ref{itemquatre})\Rightarrow(\ref{itemcinq})$
(Section~\ref{sec:ode-to-step}) is Euler discretization on the
primitive-recursively bounded time-space region;
$(\ref{itemcinq})\Rightarrow(\ref{itempr})$
(Section~\ref{sec:cinqpr}) uses that bounded iteration of a fixed
polynomial map with primitive recursive parameters is itself
primitive recursive; and
$(\ref{itemnn})\Rightarrow(\ref{itempr})$
(Section~\ref{sec:nnpr}) combines effective $\rho$-approximation
with primitive recursive simulation of the recurrent dynamics.
\MFCSSHORTER{
recovering the output by $\nu$-decoding.}

\section{From primitive recursion to threshold-affine normal form}
\label{sec:pr-to-threshold}

\newcommand\Zero{\operatorname{Zero}}

This section proves
$(\ref{itemprhard})\Rightarrow(\ref{itemcorhard})$ of
Theorem~\ref{thm:main-formal-hard}: every primitive recursive
function admits a threshold-affine normal form.

The proof uses the standard characterization of primitive recursive
functions by LOOP programs~\cite{MeyerRitchie67}.  A LOOP program
uses registers $R_1,\ldots,R_s$ with atomic instructions $R_i:=0$,
$R_i:=R_j$, $R_i:=R_j+1$, composed sequentially and by bounded
loops \texttt{for~$R_i$ do~$Q$}.  A function is primitive recursive
if and only if it is computed by a fixed LOOP program, and the
running time of a fixed LOOP program is itself primitive
recursive~\cite{MeyerRitchie67, clote2013boolean}.

From~$\Theta$ , and its properties, we can derive an exact zero-test:
$\Zero(u):=1-\Theta(u)-\Theta(-u)$, which satisfies $\Zero(u)=1$
iff $u=0$ for $u\in\Z$.

\myparagraph{Compiling one step into a threshold-affine map.}
Fix a LOOP program~$\Pi$ computing $f:\N^d\to\N^e$.  Since $\Pi$ is
fixed, the number of registers, instructions, and the maximal
nesting depth are all constants.  We encode a complete configuration
by a vector $y\in\N^m$ containing the data registers, a one-hot
program counter, a finite loop-counter stack, and a few auxiliary
registers.  The initial configuration for input~$x$ is
$(x,0,\ldots,0)$.

Each instruction step updates the configuration as follows.
Assignments ($R_i:=0$, $R_i:=R_j$, $R_i:=R_j+1$) are affine.
Loop entry and exit require testing whether a counter is zero or
positive, which is expressed exactly by $\Zero$ and $\Theta$ on
integer-valued registers.  The program counter, stored in one-hot
Boolean form, is combined using affine operations and~$\Theta$ (for
instance, $a\wedge b = \Theta(a+b-1)$).  Since all
quantities are fixed and finite, the global one-step transition is
$P(y)=A_0(y)+\sum_{j=1}^s\Theta(q_j(y))\,A_j(y)$: a
threshold-affine normal form.

%
%

Notice that the nonlinearity of the construction comes entirely from the
threshold tests on integer control registers; the arithmetic updates
themselves are affine.

\section{From threshold-affine normal form  to recurrent ReLU computation}
\label{sec:threshold-to-relu}

This section proves the implication $(\ref{itemcorhard})\Rightarrow(\ref{itemreluhard})$ of
Theorem~\ref{thm:main-formal-hard}. 

\newcommand\Relu{\operatorname{ReLU}}

\myparagraph{A fixed ReLU realization of the threshold gate.}
Recall that $\Relu(t):=\max(t,0)$. Define the associated gate gadget
$\Gate(t):=\Relu(t)-\Relu(t-1)$.

The key observation is that $\Gate$ behaves exactly like the proof-level
threshold primitive on integer-valued inputs.

\begin{lemma}
\label{lem:relu-gate}
For every integer $u$, one has $\Gate(u)=0$ if $u\le 0$ and
$\Gate(u)=1$ if $u\ge 1$. In particular, if $\Theta$ is any threshold
function satisfying $\Theta(u)=0$ for $u\le 0$ and $\Theta(u)=1$ for
$u\ge 1$, then $\Gate(u)=\Theta(u)$ for every $u\in\mathbb Z$.
\end{lemma}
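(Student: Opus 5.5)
The plan is a direct case analysis on the integer $u$, using only the definition $\Gate(t)=\Relu(t)-\Relu(t-1)$ and $\Relu(t)=\max(t,0)$. No earlier result is needed beyond these definitions. The point to keep in view is that $\mathbb Z$ contains no point of the open interval $(0,1)$: every integer satisfies either $u\le 0$ or $u\ge 1$. The two cases are therefore exhaustive and disjoint.

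\emph{Case $u\le 0$.} Here $u\le 0$ and $u-1\le -1<0$, so both ReLU terms vanish. Hence $\Gate(u)=0-0=0$.

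\emph{Case $u\ge 1$.} Here $u>0$ and $u-1\ge 0$, so both ReLU terms are the identity on their arguments. Hence $\Gate(u)=u-(u-1)=1$.

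This establishes the first assertion. In fact it holds for all real $u\le 0$ and all real $u\ge 1$, not only for integers.

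For the ``in particular'' part, let $\Theta$ be any threshold function with $\Theta(u)=0$ for $u\le 0$ and $\Theta(u)=1$ for $u\ge 1$. Fix $u\in\mathbb Z$; by the dichotomy above, $u$ falls into one of the two cases. If $u\le 0$, then $\Gate(u)=0=\Theta(u)$. If $u\ge 1$, then $\Gate(u)=1=\Theta(u)$. So $\Gate$ and $\Theta$ agree on all of $\mathbb Z$, whatever values $\Theta$ takes on $(0,1)$.

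There is no genuine obstacle. The only point requiring care is to state explicitly that the integers avoid $(0,1)$. This is exactly what makes the values of $\Theta$ on the ambiguous region irrelevant, which is the property the subsequent substitution of $\Gate$ for $\Theta$ in the normal form will rely on.
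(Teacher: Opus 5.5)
Your proof is correct and matches the paper's argument: the same two-case analysis ($u\le 0$ makes both ReLU terms vanish, $u\ge 1$ gives $u-(u-1)=1$), with agreement with $\Theta$ on $\mathbb Z$ following because no integer lies in $(0,1)$. Your side remark that the first assertion holds for all real $u\le 0$ and $u\ge 1$ is also correct, though it is not needed here.
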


\begin{proof}
If $u\le 0$, then both $\Relu(u)$ and $\Relu(u-1)$ are zero, so
$\Gate(u)=0$. If $u\ge 1$, then $\Relu(u)=u$ and
$\Relu(u-1)=u-1$, hence $\Gate(u)=1$.
\end{proof}

Thus, whenever the arguments of the threshold tests are integer-valued, the
threshold primitive can be replaced exactly by the fixed two-ReLU gadget
$\Gate$.

\myparagraph{Compiling one threshold-affine normal form into a ReLU block.}
Let $P:\mathbb R^m\to\mathbb R^m$ be a map of the form
$P(y)=A_0(y)+\sum_{j=1}^s \Theta(q_j(y))\,A_j(y)$, where the $A_j$ are
affine maps and the $q_j$ are affine maps. Assume that, along the
trajectory from every raw input $(x,0,\ldots,0)$ relevant to the computation,
all values $q_j(y_n)$ are integers.

We first compute the affine quantities $q_j(y)$, then apply the ReLU gadget
$\Gate$ to each of them, and finally use the resulting gate values in the
affine recombination
$A_0(y)+\sum_j \Gate(q_j(y))\,A_j(y)$. Since each $\Gate(q_j(y))$ is exactly
equal to $\Theta(q_j(y))$ on the intended trajectory, the resulting ReLU
block reproduces exactly the same one-step update on that trajectory.

Formally, we obtain the following.

\begin{lemma}
\label{lem:threshold-to-relu}
Let $P:\mathbb R^m\to\mathbb R^m$ be a threshold-affine normal form map of the
form $P(y)=A_0(y)+\sum_{j=1}^s \Theta(q_j(y))\,A_j(y)$. Assume that, along
the orbit from every raw input $(x,0,\ldots,0)$ with $x\in\mathbb N^d$, all
gate arguments $q_j(y_n)$ are integers. Then there exist an integer $M\ge m$,
a feedforward ReLU block $R:\mathbb R^M\to\mathbb R^M$, an embedding
$\iota:\mathbb R^m\to\mathbb R^M$, and a projection
$\pi:\mathbb R^M\to\mathbb R^m$ such that, for every such raw input $x$ and
every $n\in\mathbb N$, one has
$\pi(R^n(\iota(x,0,\ldots,0)))=P^n(x,0,\ldots,0)$.
\end{lemma}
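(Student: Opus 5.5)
The plan follows the three-stage description given just before the statement: first compute the affine gate arguments $q_j(y)$, then apply the two-ReLU gadget $\Gate$, then recombine. By Lemma~\ref{lem:relu-gate}, on every $y$ along a relevant orbit each gate value $g_j:=\Gate(q_j(y))$ equals $\Theta(q_j(y))\in\{0,1\}$. So the whole statement reduces to realizing one step $y\mapsto A_0(y)+\sum_j g_j\,A_j(y)$ \emph{exactly} by a fixed ReLU block, at least on the orbit points. The embedding $\iota$ will pad $y$ with auxiliary coordinates and $\pi$ will forget them. Correctness of $\pi(R^n(\iota(y_0)))=P^n(y_0)$ then follows by a straightforward induction on $n$, in the style of Step~3 in the proof of Theorem~\ref{thm:uniform-robust}. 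The induction must also maintain an invariant on the auxiliary coordinates.

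\textbf{Main obstacle.} A ReLU block is piecewise affine with finitely many pieces, so it cannot multiply the bit $g_j$ by an unbounded quantity $v=A_j(y)$. Without auxiliary coordinates the step map is in fact impossible in general. For instance, with $P(a,b)=(\Theta(a)\,b,\dots)$, a piecewise-linear $f$ with $f(0,b)=0$ and $f(1,b)=b$ for all $b\in\N$ would need its variation across the strip $0\le a\le 1$ to grow linearly in $b$. But for large $b$ the strip is cut only by finitely many boundary lines, which are asymptotically vertical there, so that variation is bounded. This is exactly why $M\ge m$ is allowed, and the auxiliary coordinates must supply a state-dependent scale.

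\textbf{Planned construction.} I will carry a bound coordinate $K$ with $K_0:=1+\sum_i x_i$, written in by $\iota$ as an affine function of the raw input. It is updated affinely by $K\mapsto cK+c$, where $c$ is chosen from the coefficients of the $A_j$ so that the invariant $\|y_n\|_\infty,\ \|A_j(y_n)\|_\infty\le K_n$ is preserved. With such a scale, the big-$M$ identity
\[
g\,v=\Relu\bigl(v-K(1-g)\bigr)-\Relu\bigl(-v-K(1-g)\bigr),\qquad g\in\{0,1\},\ |v|\le K,
\]
turns each gated term into ReLU operations. Its argument still contains the product $K\cdot g$. To keep the map affine I will also store, as auxiliary coordinates, the products $K\cdot y_i$, and propagate them using the same update rule: since $K'y'=(cK+c)\,P(y)$ is again of threshold-affine shape in the stored quantities, the product coordinates should update by the same kind of step.

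I expect the bookkeeping of this product closure to be the hard part. The recursion threatens to require $K^2y$, then $K^3y$, and so on. The first thing I would try is a two-phase pipeline: store the gate bits $g_j$ of the current state as coordinates computed one step ahead, and store $K\cdot g_j$ directly, so that only degree-one products are ever needed. If closure still fails, the fallback is to recompile the normal form so that every gated vector $A_j(y)$ is bounded along orbits, which the LOOP compilation can arrange. That fallback would prove a weaker variant of the lemma, not the statement as worded, so I count the product-closure step as the unresolved point of this plan.
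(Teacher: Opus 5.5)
\textbf{Verdict: genuine gap, and it cannot be closed, because the lemma as worded is false.} You have identified a real defect, and the paper's own proof passes over it. The paper replaces $\Theta$ by $\Gate$ and then asserts that $A_0(y)+\sum_j\Gate(q_j(y))\,A_j(y)$ is obtained from affine maps and ReLUs, ignoring the product.

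\textbf{Why auxiliary coordinates cannot help.} A shorter argument than your strip argument settles this. A feedforward ReLU block is globally Lipschitz. So for any Lipschitz $\iota$, every composite $\pi\circ R^n\circ\iota$ is Lipschitz, whatever $M$ is. This covers zero-padding, which Theorem~\ref{thm:three-implies-two} needs in order to land in Definition~\ref{def:relu-unfolding}, and also your affine $\iota$ writing $K_0=1+\sum_i x_i$.

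\textbf{Counterexample.} Take $d=2$, $m=3$ and $P(y)=(y_1,y_2,0)+\Theta(y_1)\,(0,0,y_2)$. It has integer coefficients, and the gate argument is $q_1(y_n)=x_1\in\N$ on every orbit. At $n=1$, the third coordinate of $\pi(R(\iota(x_1,x_2,0)))$ would have to be $0$ at $(x_1,x_2)=(0,b)$ and $b$ at $(1,b)$, for every $b\in\N$. No Lipschitz map does this. A non-Lipschitz $\iota$, for example one initializing your products $K\,y_i$, would no longer start from the raw state $(x,0,\ldots,0)$.

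\textbf{The scale trick has the same problem.} The big-M identity contains $K(1-g)$, again a bit times an unbounded quantity. The scale only relocates the obstruction, which is why you meet the regress $K y_i$, $K^2 y_i,\dots$.

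\textbf{Your fallback is the right repair, and it is what the main theorem needs.} Suppose that along the orbits, $\|A_j(y_n)\|_\infty\le C$ for a fixed constant $C$ whenever $\Theta(q_j(y_n))=0$ (for instance, $A_j$ constant for $j\ge1$). Then
\[ \Gate(q)\,v=\operatorname{ReLU}\bigl(v-C+C\,\Gate(q)\bigr)-\operatorname{ReLU}\bigl(-v-C+C\,\Gate(q)\bigr) \]
holds exactly on orbit points. One step of $P$ is then a ReLU block with two hidden layers, and the induction on $n$ is immediate.

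Such normal forms come from compiling the LOOP program in register-machine style. Implement $R_i:=0$, $R_i:=R_j$ and loop entry by unit increments and decrements through auxiliary registers. Zero-tests gated by a program-counter bit $s$ stay flat, e.g.\ $s\wedge[R=0]=\Theta(s-R)$ for $R\ge0$. The running time remains primitive recursive.

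This proves directly that every primitive recursive function has a recurrent ReLU computation. The implication (\ref{itemcorhard})$\Rightarrow$(\ref{itemreluhard}) of Theorem~\ref{thm:main-formal-hard} then holds at the level of functions, by first simulating $P$ primitive recursively. What cannot be obtained is the step-for-step simulation of an arbitrary $P$ that the lemma asserts.
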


\begin{proof}
Replace each occurrence of $\Theta(q_j(y))$ in the expression of $P$ by the
ReLU gadget $\Gate(q_j(y))=\Relu(q_j(y))-\Relu(q_j(y)-1)$. By
Lemma~\ref{lem:relu-gate}, this replacement is exact on the intended
trajectories, because all gate arguments are integers there.

The resulting one-step update is obtained from affine maps and coordinatewise
applications of $\Relu$, hence is computed by a fixed feedforward ReLU block,
possibly after enlarging the state by finitely many auxiliary coordinates that
store intermediate affine quantities and gate values. Since the construction
is local and uniform, the same block $R$ works for all inputs. By
construction, its iterates agree exactly with those of $P$ on the embedded
raw-input trajectories.
\end{proof}

\myparagraph{Application to the threshold-affine normal form.}
We now apply the previous lemma to the proof-level normal form of
Section~\ref{sec:pr-to-threshold}. The crucial hypothesis of
Lemma~\ref{lem:threshold-to-relu} is automatically satisfied there: the gate
arguments arise from integer-valued counters, one-hot instruction pointers,
and loop-stack registers, so they remain integers throughout the execution.

\begin{theorem}
\label{thm:three-implies-two}
Every threshold-affine normal form unfolding admits an exact recurrent ReLU
computation. Equivalently, Item~(\ref{itemcorhard}) implies Item~(\ref{itemreluhard}) in
Theorem~\ref{thm:main-formal-hard}.
\end{theorem}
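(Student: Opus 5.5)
The plan is to instantiate Lemma~\ref{lem:threshold-to-relu} and then check that its output matches Definition~\ref{def:relu-unfolding} \emph{literally}: raw input, same observation time, and a coordinate projection as readout.

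\textbf{Step 1 (lattice invariance).} Fix a threshold-affine normal form $(m,A_0,\dots,A_s,q_1,\dots,q_s,T,\pi_e)$ for $f$. As in Step~1 of the proof of Theorem~\ref{thm:uniform-robust}, an induction on $n$ shows $y_n(x)\in\Z^m$ for all $n$. Since the $q_j$ have integer coefficients, every gate argument $q_j(y_n(x))$ is an integer. This is exactly the hypothesis of Lemma~\ref{lem:threshold-to-relu}.

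\textbf{Step 2 (fixing the shape of the block).} I would take $\iota(y)=(y,0,\dots,0)\in\R^M$, so that $\iota(x,0,\dots,0)=(x,0,\dots,0)$ is again a raw input. The block $R$ is arranged so that one application of $R$ performs one full step of $P$. All intermediate values ($q_j(y)$, the gate values $\Gate(q_j(y))$, and the partial products) live in the hidden layers of the single feedforward block. They are therefore not stored in the recurrent state. I would let the auxiliary recurrent coordinates be either overwritten each step or kept as a counter.

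With these choices:
\begin{itemize}
\item the observation time $T$ is unchanged, hence still primitive recursive;
\item the output indices $I$ are those of $\pi_e$, viewed inside the first $m$ coordinates;
\item $\pi_I(R^{T(x)}(x,0,\dots,0))=\pi_e(y_{T(x)}(x))=f(x)$.
\end{itemize}

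\textbf{Step 3 (the main obstacle: the products $\Theta(q_j(y))\,A_j(y)$).} A ReLU block is built from affine maps and ReLU only, so it cannot multiply two state-dependent quantities directly. The standard big-$M$ identity
\[
g\,a=\Relu\bigl(a-K(1-g)\bigr)-\Relu\bigl(-a-K(1-g)\bigr),\qquad g\in\{0,1\},\ |a|\le K,
\]
needs a bound $K$. Here $K$ cannot be a fixed constant, because the registers are unbounded. My first attempt would be:
\begin{itemize}
\item Enlarge the state by a bound register $K_n$. This register must dominate $\|A_j(y_n)\|_\infty$ along the reference orbit. Since the integer-coefficient affine maps grow at most by a constant factor per step, I would try an affine update $K_{n+1}=cK_n+c'$ started at $0$, with $c,c'$ chosen from the coefficients of $A_0,\dots,A_s$. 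Making $K$ reachable from the raw start is the delicate point.
\item The remaining difficulty is that $K(1-g)$ is itself a product. To avoid it, I would maintain in the state the quantities $K\cdot g$ directly, for the finitely many control bits $g$ appearing in the gates. For the LOOP-compiled normal form of Section~\ref{sec:pr-to-threshold} these are one-hot program-counter bits, updated by Boolean combinations. I would try to show that $K_{n+1}\cdot g_{n+1}$ can be obtained from the stored $K_n\cdot g_{n'}$ values, and from a zero-test on an integer register with value at most $K_n$, by ReLU expressions that need no further products.
\end{itemize}

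\textbf{Fallback.} If this fails for a general normal form, I would argue instead through Item~(\ref{itemprhard}). Iterating $P$ on $\Z^m$ is primitive recursive, since $P$ is computable on integers with primitive recursive bounds. Then recompile the resulting LOOP program so that every gated update has the above controlled form, and apply Steps~1--2 to that specific normal form.
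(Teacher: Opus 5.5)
You are right that the crux is the product $\Theta(q_j(y))\,A_j(y)$: a feedforward ReLU block cannot multiply a gate value by a state-dependent vector. The paper's proof of Lemma~\ref{lem:threshold-to-relu} simply substitutes $\Gate$ for $\Theta$ and asserts that $A_0(y)+\sum_j\Gate(q_j(y))A_j(y)$ is a ReLU block, which overlooks exactly this point. Your Step~1 is fine. The gap is in Step~2, which commits you to a lock-step simulation with $\iota$ preserving raw inputs and $T$ unchanged. That is impossible in general, so Step~3 cannot be completed.

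The reason is that a ReLU block is globally Lipschitz, say with constant $L$ for $\|\cdot\|_\infty$. Take $P(y)=(y_1,\,y_2,\,\Theta(y_1)\,y_2)$ on $\R^3$, that is $A_0(y)=(y_1,y_2,0)$, $A_1(y)=(0,0,y_2)$, $q_1(y)=y_1$, with $T\equiv 1$ and output coordinate~$3$. Then $f(x_1,x_2)=[x_1\ge 1]\,x_2$. The raw states $(0,x_2,0,\ldots,0)$ and $(1,x_2,0,\ldots,0)$ are at distance~$1$. After one step their output coordinates must differ by $x_2$, which fails once $x_2>L$. With $k$ steps of $R$ per step of $P$ the bound becomes $L^k$, so no constant slowdown helps either.

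The same obstruction defeats your big-$M$ repair. An update $K_{n+1}=cK_n+c'$ from $K_0=0$ does not depend on $x$, so it never dominates $\|A_j(y_n)\|_\infty$ for large inputs. More fundamentally, maintaining $K\cdot g$ across a zero-test requires producing $K\cdot[r=0]$ in one step from two states that differ by one unit in $r$. An $L$-Lipschitz map cannot do this once $K>L$.

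The missing idea is that gated transfers of unbounded values must be paid for with extra iterations, primitive recursively many, so in general the observation time has to change. Your fallback through Item~(\ref{itemprhard}) has the right architecture. However, ``the above controlled form'' must be replaced by a form that is actually ReLU-realizable. Compile the LOOP program into a counter-machine normal form
\[
P'(y)=A_0(y)+\sum_{j}\Theta\bigl(q_j(y)\bigr)\,c_j,\qquad c_j\in\Z^{m'}\ \text{constant},
\]
with the following ingredients. The program counter is stored as one-hot bits $b_k$, and the start bit is the affine expression $1-\sum_k b_k$, so the raw state is a legal start. Registers change only by unit increments and decrements gated by the $b_k$. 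Zero-tests are $\Theta(b_k-r_i)$. The halt instruction is absorbing. The assignments $R_i:=0$, $R_i:=R_j$, $R_i:=R_j+1$, and loop-counter initialisation, are realised by unary transfer loops.

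Then $\Gate(q_j(y))\,c_j$ is affine in the gate output. One step of $P'$ is a one-hidden-layer ReLU block, passing $y$ through as $\operatorname{ReLU}(y)-\operatorname{ReLU}(-y)$, and it acts lock-step on raw inputs. Definition~\ref{def:relu-unfolding} then holds with the running time $T'$ of this machine. $T'$ is primitive recursive because the register values of the LOOP execution are primitive recursively bounded.
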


\begin{proof}
Let $f:\mathbb N^d\to\mathbb N^e$ admit an threshold-affine normal form. By Definition, there exist an
integer $m$, a threshold-affine normal form map $P:\mathbb R^m\to\mathbb R^m$, a
primitive recursive time bound $T:\mathbb N^d\to\mathbb N$, and designated
output coordinates $\pi_e$ such that
$f(x)=\pi_e(P^{T(x)}(x,0,\ldots,0))$ for every input $x$.

By the construction of Section~\ref{sec:pr-to-threshold}, the gate arguments
along the relevant raw-input trajectories are integer-valued. Hence
Lemma~\ref{lem:threshold-to-relu} applies. We obtain a feedforward ReLU block
$R$, an embedding $\iota$, and a projection $\pi$ such that the iterates of
$R$ agree exactly with those of $P$ on the embedded raw-input trajectories.
Composing the projections if necessary, we conclude that
$f(x)=\pi_e'(R^{T(x)}(\iota(x,0,\ldots,0)))$ for some designated output
coordinates $\pi_e'$.

Thus $f$ admits an exact recurrent ReLU computation in the sense of
Definition~\ref{def:relu-unfolding}.
\end{proof}

\begin{remark}
The theorem should not be read as saying that the discontinuous
threshold-affine normal form transition map and the ReLU block coincide as functions on all
of $\mathbb R^m$. What is preserved is the exact orbit on the raw integer
inputs relevant to the computation. This is exactly what Item~(\ref{itemreluhard}) of the main
theorem requires.
\end{remark}

\section{From recurrent ReLU computation back to primitive recursion.}
\label{sec:relutopr}
We next prove the converse implication $(\ref{itemreluhard})\Rightarrow(\ref{itemprhard})$ of Theorem~\ref{thm:main-formal-hard}. This is 
elementary, because a fixed feedforward ReLU block with rational parameters
induces a primitive recursive evolution on rational states.

\begin{theorem}
\label{thm:two-implies-one-hard}
Every exact recurrent ReLU computation computes a primitive recursive
function. Equivalently, Item~(\ref{itemreluhard}) implies Item~(\ref{itemprhard}) in
Theorem~\ref{thm:main-formal-hard}.
\end{theorem}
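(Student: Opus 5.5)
The plan is to show directly that the map $x\mapsto z_{T(x)}(x)$ is primitive recursive as a map into exactly represented rational vectors; projecting onto the output indices $I$ then finishes the argument. Fix the recurrent ReLU computation $(m,R,T,I)$ from Definition~\ref{def:relu-unfolding}. Represent each rational number by a triple (sign, numerator, denominator) of natural numbers, coded by a standard primitive recursive pairing. Under this coding, rational addition, multiplication by a \emph{fixed} rational constant, comparison with $0$, and $\operatorname{ReLU}$ are all primitive recursive, because each needs only integer arithmetic and a sign test. Since $R=A_L\circ\sigma\circ\cdots\circ\sigma\circ A_0$ has fixed depth and fixed rational weights, one application of $R$ is a fixed finite composition of these operations. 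Hence the coded step map $\hat R:\N\to\N$, acting on codes of vectors in $\mathbb{Q}^m$, is primitive recursive; it involves only finitely many compositions and no recursion.

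Next, define $G(n,x):=\mathrm{code}(z_n(x))$ by primitive recursion on $n$:
\[
G(0,x)=\mathrm{code}(x,0,\ldots,0),\qquad G(n+1,x)=\hat R(G(n,x)).
\]
This is a legitimate instance of the primitive recursion scheme, because $\hat R$ is a fixed primitive recursive function. Since $T$ is primitive recursive by hypothesis, the function $x\mapsto G(T(x),x)$ is primitive recursive by composition.

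Finally, $f_j(x)$ is the $i_j$-th coordinate of $z_{T(x)}(x)$. By hypothesis this coordinate is a natural number, so it is a rational with denominator dividing its numerator and nonnegative sign. Extracting the $i_j$-th component and computing numerator divided by denominator (integer division) are primitive recursive operations, and together they yield $f_j(x)$. This shows $f$ is primitive recursive.

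The only point needing care is that the numerators and denominators grow along the orbit. No a priori bound on them is required, however, because primitive recursion on codes handles unbounded naturals, and exact rational arithmetic never needs to normalize or approximate. One might worry about reducing fractions; this can either be skipped, or done with a primitive recursive gcd. I therefore expect no genuine obstacle: the main work is checking that every ingredient (pairing, rational arithmetic, sign test) is primitive recursive, and that is standard.
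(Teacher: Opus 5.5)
Your proposal is correct and follows the same route as the paper: the one-step ReLU update is primitive recursive on exact rational codes, the orbit is obtained by primitive recursion on $n$, and composing with the primitive recursive time bound $T$ and the output projection yields $f$. Your version is slightly more explicit than the paper's, since you spell out the rational coding and the final conversion of an integer-valued rational code back to a natural number, a step the paper leaves implicit.
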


\begin{proof}
Assume that $f:\N^d\to\N^e$ admits a  recurrent ReLU computation.
Thus there exist a feedforward ReLU block $R:\R^m\to\R^m$, a primitive
recursive observation time $T:\N^d\to\N$, and designated output coordinates
$\pi_e$ such that, for every input $x\in\N^d$, if
$z_0(x):=(x,0,\ldots,0)$ and $z_{n+1}(x):=R(z_n(x))$, then
$f(x)=\pi_e(z_{T(x)}(x))$.

A feedforward ReLU block is built from affine maps with rational parameters
and coordinatewise applications of $\operatorname{ReLU}(t)=\max(t,0)$. On
rational inputs, affine maps with rational parameters are primitive
recursive, and $\operatorname{ReLU}$ is primitive recursive as well, since it
is just the maximum of its input and $0$. It follows that the one-step update
$(x,n)\mapsto z_{n+1}(x)$ is primitive recursive on rational encodings.
Hence, by primitive recursion on $n$, the full evolution
$(x,n)\mapsto z_n(x)$ is primitive recursive.

Since $T$ is primitive recursive and $\pi_e$ is a projection,
$x\mapsto \pi_e(z_{T(x)}(x))$ is primitive recursive. By assumption, this is
exactly $f(x)$.
\end{proof}

\section{From uniform threshold-affine normal form to polynomial ODE}
\label{sec:threshold-to-ode}

We now prove
$(\ref{itemcorhardunif})\Rightarrow(\ref{itemquatre})$ of
Theorem~\ref{thm:main-formal}.

\myparagraph{The strategy}
Let $P(y)=A_0(y)+\sum_{j=1}^{s}\Theta(q_j(y))\,A_j(y)$ be a
uniform threshold-affine normal form with reference orbit
$y_0(x),\ldots,y_{T(x)}(x)\subseteq\Z^m\cap[-B(x),B(x)]^m$,
integer gate arguments (property~(iv) of
Section~\ref{subsec:hard-to-smooth}), and output projection
$\pi_I$.

Our goal is a polynomial vector field $F_P$ whose flow samples, at
a fixed period $\kappa_0>0$, a sequence of states $z_n$ that
tracks $y_n(x)$ to within a constant fraction of the unit spacing
of $\Z^m$. This is exactly the sort of tracking that
Section~\ref{subsec:integer-faithful} calls \emph{integer-faithful
shadowing}, applied at a resolution that allows decoding by
rounding to the nearest integer. Concretely, we will build $F_P$
so that the sampled map $z_n\mapsto z_{n+1}$ agrees with $P$ up to
a contraction-plus-noise error of the form
$\|z_{n+1}-P(z_n)\|_\infty \le
\lambda\,\|z_n-y_n(x)\|_\infty + \eta$,
with $\lambda<1$ and $\eta/(1{-}\lambda)<\tfrac14$. Since $P$ is
integer-faithful to itself, the discrete Grönwall shadowing lemma
(Lemma~\ref{lem:quantitative-shadowing}) then keeps the sampled
orbit within the $\tfrac14$-decoding basin of the reference orbit
throughout time.

The substance of the section is that two standard polynomial-ODE
gadgets suffice to produce such a sampled map: a \emph{smooth clock},
which organizes the cycle, and a \emph{polynomial comparator}, which
realizes $\Theta$ up to the $\eta$-noise tolerated by the shadowing
lemma. 

\begin{remark} The resulting construction can be read as a particular improved instance of the
Branicky \cite{Bra95} alternating-targeting trick used, for example,
in~\cite{dsg05,BGDPRiccardo2022}. We provide  an original view on it. Rather than analyzing the continuous flow step by step, we
treat it as a candidate integer-faithful approximation of $P$ in
the sense of Section~\ref{subsec:integer-faithful} and let the
discrete Gr\"onwall lemma handle the propagation of error. The one
computation that remains specific to the continuous construction is
the per-cycle estimate~\eqref{eq:one-cycle-estimate}; everything else
is a direct application of the shadowing framework.
\end{remark}

\myparagraph{The smooth clock}
Let $(c_1,c_2)$ satisfy the rational-coefficient ODE
$c_1'=-c_2$, $c_2'=c_1$ with initial condition
$(c_1(0),c_2(0))=(1,0)$; its solution
$(c_1(t),c_2(t))=(\cos t,\sin t)$ is periodic of period
$\kappa_0:=2\pi$. A polynomial driver $h(t)$, computed as a fixed
polynomial expression in $(c_1,c_2)$, can be arranged to be
non-negative, to satisfy $h(t)\ge 1$ on each \emph{compute}
half-period
$[n\kappa_0,\,n\kappa_0+\kappa_0/2]$, and to vanish on each
\emph{hold} half-period
$[n\kappa_0+\kappa_0/2,\,(n{+}1)\kappa_0]$, for every integer
$n\ge 0$ (see \cite[\S3]{BGDPRiccardo2022} for a standard
realization). Everything below is phrased as an ODE whose
right-hand side multiplies $h(t)$: active during compute phases,
frozen during hold phases.

\myparagraph{The polynomial comparator}
For every $\delta\in(0,\tfrac14)$ and every contraction rate
$\Lambda>0$, there is a polynomial vector field on an extra
coordinate $r$ such that for every input $u\in\R$ with
$u\notin(\tfrac14,\tfrac34)$ and every compute half-period
$[t_0,t_0+\kappa_0/2]$, the driven equation
$r'=h(t)\cdot\Lambda\cdot(\Theta_\infty(u)-r)$
satisfies
$|r(t_0+\kappa_0/2)-\Theta_\infty(u)|\le
e^{-\Lambda\kappa_0/2}\,|r(t_0)-\Theta_\infty(u)|$,
which is below $\delta$ once $\Lambda$ is chosen large enough.
Here $\Theta_\infty(u)\in\{0,1\}$ is the target value determined
by whether $u\le\tfrac14$ or $u\ge\tfrac34$; it is computed by a
polynomial in $u$ approximating $\Theta$ on the two separated
regions (details in~\cite[\S3]{BGDPRiccardo2022}). In our
application $u$ will be one of the gate arguments
$q_j(\pi(Z))$, tracked continuously by an affine expression in
the state. Property~(iv) of
Section~\ref{subsec:hard-to-smooth} guarantees that every
comparator call on the reference orbit satisfies the margin
hypothesis $q_j(y_n(x))\notin(\tfrac14,\tfrac34)$, hence falls in
a determined regime.

\myparagraph{Composing the step}
Given a cycle-start state $z_n\in\R^m$ close to $y_n(x)$, we want
a cycle-end state $z_{n+1}$ close to $P(y_n(x))$. Per clock period
$\kappa_0$ we allocate $s$ comparator coordinates
$r_1,\ldots,r_s$ driven by the comparator ODE on
$u_j:=q_j(\pi(Z))$ (so $r_j\to\Theta(q_j(y_n(x)))$ exponentially
during the compute phase); $m$ target-tracking coordinates
$w_1,\ldots,w_m$ driven by the sample-and-hold equation
$w_k'=h(t)\cdot\Lambda\cdot
 \bigl((A_0(\pi(Z))+\sum_j r_j\,A_j(\pi(Z)))_k-w_k\bigr)$; and a
projection $\pi$ that reads $w(t)$ at the end of each hold phase.
Assembling these ODEs with the clock gives a polynomial vector
field $F_P$ with rational coefficients on $\R^M$, $M:=2+s+m$,
with initial condition $(1,0,0,\ldots,0,x,0,\ldots,0)$.
An explicit one-period calculation yields, with
$\delta:=e^{-\Lambda\kappa_0/2}$,
$\lambda:=\delta\cdot\max_j\|A_j\|_\infty$, and
$\eta:=\delta\,(1+s\cdot\max_j\|A_j\|_\infty)$, the estimate
\vspace{-0.3cm}
\begin{equation}\label{eq:one-cycle-estimate}
\|z_{n+1}-P(y_n(x))\|_\infty \le
\lambda\,\|z_n-y_n(x)\|_\infty + \eta,
\qquad z_n:=w(n\kappa_0).
\end{equation}
\vspace{-0.2cm}
Choosing $\Lambda$ large enough makes $\lambda<1$ and
$\eta/(1{-}\lambda)<\tfrac14$.

\begin{theoremrep}
\label{thm:three-implies-four}
Every uniform threshold-affine normal form admits a robust
polynomial ODE representation.
\end{theoremrep}

\begin{proofsketch}
Set $e_n:=\|\pi(Z(n\kappa_0))-y_n(x)\|_\infty$. By
\eqref{eq:one-cycle-estimate}, $e_0=0$ and
$e_{n+1}\le\lambda e_n+\eta$;
Lemma~\ref{lem:quantitative-shadowing} with $L:=\lambda$,
$\varepsilon:=\eta$, $r:=\tfrac14$ gives $e_n<\tfrac14$ for every
$n\le T(x)$, so the sampled state is integer-faithful to the
reference orbit. Setting $\tau(x):=\lceil\kappa_0 T(x)\rceil$ and
$\widetilde{B}(x):=R_P(B(x){+}1)$ (a polynomial Gr\"onwall
majorant for $F_P$ on one cycle), the interval
$[\tau(x),\tau(x){+}1]$ lies inside the final hold half-period, on
which $w$ is frozen within $\tfrac14$ of $y_{T(x)}(x)$.
Projecting by $\pi_I$ yields
$\|\pi_I(Z(t))-f(x)\|_\infty<\tfrac14$ throughout
$[\tau(x),\tau(x){+}1]$, which is
Definition~\ref{def:ode-rep}.
\end{proofsketch}

\begin{proof}
Let $F_P:\R^M\to\R^M$ be the polynomial vector field assembled in
Section~\ref{sec:threshold-to-ode}, with state coordinates
$(c_1,c_2,r_1,\ldots,r_s,w_1,\ldots,w_m)$ and $M=2+s+m$. Let
$\pi:\R^M\to\R^m$ be the projection $(c,r,w)\mapsto w$, let
$\pi_I:\R^m\to\R^e$ be the output-coordinate projection of the
normal form $P$, and write $\Pi:=\pi_I\circ\pi$ for the composite
output projection. Let $\lambda\in[0,1)$ and $\eta>0$ be as
in~\eqref{eq:one-cycle-estimate}, with
$\eta/(1{-}\lambda)<\tfrac14$ secured by the contraction rate
$\Lambda$.

The initial state is
$Z(0):=(1,0,\,0,\ldots,0,\,x,0,\ldots,0)$: the clock at phase
zero, the comparators at zero, and the $w$-block at
$y_0(x)=(x,0,\ldots,0)$.

Set the sampling times $t_n:=n\kappa_0$, and the sampled error
$e_n:=\|\pi(Z(t_n))-y_n(x)\|_\infty$ whenever $Z(t_n)$ is defined.

\myparagraph{Step 1: existence of the flow and sampled error}
We prove by induction on $n\in\{0,1,\ldots,T(x)\}$ the joint
statement
\begin{equation}\label{eq:ode-joint-ind}
\text{(a)}\; Z\text{ is defined on }[0,t_n{+}\kappa_0];
\qquad
\text{(b)}\; e_n\le\eta\sum_{k=0}^{n-1}\lambda^k;
\qquad
\text{(c)}\; e_n\le\tfrac14.
\end{equation}

\emph{Base case $n=0$.}\;
$F_P$ admits a local solution from $Z(0)$ that extends to
$[0,\kappa_0]$ (local existence for polynomial ODEs, with no
escape to infinity on one cycle), so (a) holds. Since
$\pi(Z(0))=y_0(x)$, we have $e_0=0$; the empty sum at $n=0$
equals zero, so (b) holds; and (c) holds trivially.

\emph{Inductive step.}\;
Assume~\eqref{eq:ode-joint-ind} at some $n<T(x)$. By (c),
$\|\pi(Z(t_n))-y_n(x)\|_\infty\le\tfrac14$, and by property~(iv),
$q_j(y_n(x))\in\Z$ for every $j$, so a fortiori
$q_j(\pi(Z(t_n)))\notin(\tfrac14,\tfrac34)$, the comparator
margin hypothesis is satisfied, and the polynomial flow extends
to $[t_n,t_{n+1}{+}\kappa_0]$, giving (a) at $n{+}1$. The
cycle-to-cycle estimate~\eqref{eq:one-cycle-estimate} gives
$e_{n+1}\le\lambda e_n+\eta$; combined with (b) and $\lambda\ge 0$,
$e_{n+1}\le\eta\sum_{k=0}^{n}\lambda^k$, giving (b) at $n{+}1$.
For (c), $e_{n+1}\le\eta/(1{-}\lambda)<\tfrac14$ by the choice of
$\lambda$ and $\eta$. This closes the induction.

\myparagraph{Step 2: the output window}
Set $\tau(x):=\lceil\kappa_0 T(x)\rceil$, primitive recursive in
$x$ since $\kappa_0=2\pi$ is a primitive recursive real and $T$ is
primitive recursive. Since $\kappa_0 T(x)\le\tau(x)\le\kappa_0
T(x)+1$ and $\kappa_0/2=\pi>2$, the interval
$[\tau(x),\tau(x)+1]$ is contained in the final hold half-period
$[\kappa_0 T(x),\kappa_0 T(x)+\kappa_0/2]$. On this half-period,
$h(t)=0$ freezes every $w$-coordinate, so
$\pi(Z(t))=\pi(Z(t_{T(x)}))$ throughout. By Step~1 at
$n=T(x)$, $\|\pi(Z(t_{T(x)}))-y_{T(x)}(x)\|_\infty\le\tfrac14$,
hence $\|\pi(Z(t))-y_{T(x)}(x)\|_\infty\le\tfrac14$ on
$[\tau(x),\tau(x)+1]$. Projecting by $\pi_I$ and using
$\pi_I(y_{T(x)}(x))=f(x)$ gives
$\|\Pi(Z(t))-f(x)\|_\infty\le\tfrac14<\tfrac12$, which is
condition~(iii) of Definition~\ref{def:ode-rep}.

\myparagraph{Step 3: the safety bound}
Since $F_P$ is a fixed polynomial vector field, a standard
Gr\"onwall argument gives a fixed polynomial $R_P:\N\to\N$ such
that every solution $Z$ of $Z'=F_P(Z)$ on a time interval of
length at most $\kappa_0+1$ satisfies
$\|Z(t)\|_\infty\le R_P(\|Z(t_0)\|_\infty+1)$. Applying this
cycle by cycle and using $\|y_n(x)\|_\infty\le B(x)$ together
with (c) of Step~1, every sampled state satisfies
$\|\pi(Z(t_n))\|_\infty\le B(x)+1$; iterating the bound $T(x)+1$
times yields
$\|Z(t)\|_\infty\le R_P^{(T(x)+1)}(B(x)+1)$ on
$[0,\tau(x)+1]$. Setting
$\widetilde{B}(x):=R_P^{(T(x)+1)}(B(x)+1)$, which is primitive
recursive in $x$ (as an iterated composition of a fixed polynomial
with primitive recursive arguments), gives condition~(ii).

Conditions (i), (ii), (iii) of Definition~\ref{def:ode-rep} are
thus all satisfied with $F_P$, $\tau$, $\widetilde{B}$, and
$\pi_I$ as defined, so $f$ admits a robust polynomial ODE
representation.
\end{proof}

\section{From uniform threshold-affine normal form to recurrent
$\rho$-computation}
\label{sec:threshold-to-rho}

We now prove $(\ref{itemcorhardunif})\Rightarrow(\ref{itemnn})$ of
Theorem~\ref{thm:main-formal}. Let
$P(y)=A_0(y)+\sum_{j=1}^{s}\Theta(q_j(y))\,A_j(y)$ be a uniform
threshold-affine normal form on~$\R^m$, with $A_j$ and $q_j$ of
integer coefficients (as produced by
Theorem~\ref{thm:main-formal-hard}), reference orbit
$y_0(x),\ldots,y_{T(x)}(x)\subseteq\Z^m\cap[-B(x),B(x)]^m$
(properties~(i)--(ii) of Section~\ref{subsec:hard-to-smooth}), and
output coordinates~$\pi_I$. We use a strategy with similarities with previous section, except that bassins are not anymore corresponding to integers, but to encoded integers.

\myparagraph{Encoding and basins}
Encode integers by $\nu(n)=2^{-|n|}$, with the sign carried on an
auxiliary coordinate (suppressed in the notation below; the
construction is symmetric). For $y\in\Z^m$ write
$\lfloor y\rfloor_\nu := (\nu(y_1),\ldots,\nu(y_m))\in[0,1]^m$, and
define the \emph{$\varepsilon$-basin} around $\lfloor y\rfloor_\nu$
as the product of relative intervals
$\mathcal{B}_\varepsilon(y)
:=\prod_{k=1}^{m}[(1{-}\varepsilon)\nu(y_k),(1{+}\varepsilon)\nu(y_k)]
\subseteq[0,1]^m$.
The relative tolerance is the natural one for dyadic codes, and
$\varepsilon<\tfrac18$ ensures distinct states have disjoint basins.

\myparagraph{The coded one-step update}
The three ingredients of $P$ transport to the coded domain as
follows. Integer-coefficient affine maps transport to feedforward
$\rho$-blocks with dyadic-rational coefficients (an integer
multiplier $c$ acts on $\nu(n)$ by
$\nu(n)\mapsto 2^{-c}\nu(n)$). The threshold
$\Theta(q_j(y))\in\{0,1\}$ is replaced by $Z(\check q_j(z))$, where
$Z$ is the admissible detector of
Definition~\ref{def:nu-admissible-rho} and $\check q_j$ is the coded
form of~$q_j$; property~(iv) of Section~\ref{subsec:hard-to-smooth}
guarantees that $\nu(q_j(y_n))\in[0,\tfrac58]\cup[\tfrac78,1]$
throughout the reference orbit, the region where $Z$ is
unambiguous. Writing $\check A_j$ for the coded form of $A_j$, the
global coded one-step update is
\vspace{-0.4cm}
\begin{equation}\label{eq:rho-R}
R(z)\;=\;\check A_0(z)+\sum_{j=1}^{s}
   Z\!\bigl(\check q_j(z)\bigr)\cdot\check A_j(z),
\end{equation}
which is a feedforward $\rho$-block as required by
Definition~\ref{def:rho-block}.

\begin{lemmarep}[One robust smooth coded step]
\label{lem:rho-one-step}
There exists $\varepsilon_0\in(0,\tfrac18)$, depending only on the
normal form and on the detector margin $\eta$, such that for every
$\varepsilon\le\varepsilon_0$ and every $y\in\Z^m$ with all gate
arguments $q_j(y)$ integer,
$R(\mathcal{B}_\varepsilon(y))\subseteq
\mathcal{B}_\varepsilon(P(y))$.
\end{lemmarep}

\begin{proof}
We prove the inclusion by controlling, step by step, the relative
error between the coded output $R(z)$ and the target
$\lfloor P(y)\rfloor_\nu$. Throughout the proof, constants
$C_A, C_q, C_R$ below depend only on the fixed normal form
(the number $s$ of gates, the dimension $m$, and the integer
coefficients of $A_0, \ldots, A_s$ and $q_1, \ldots, q_s$) and
\emph{not} on $y$, $z$, or $\varepsilon$.

\myparagraph{Convention on the sign coordinate}
Recall that $\nu$ is extended to $\Z$ by $\nu(-n):=2^{-n}$, with an
auxiliary sign bit $\sigma_k\in\{0,1\}$ per state coordinate
indicating whether $y_k<0$. The basin
$\mathcal{B}_\varepsilon(y)$ fixes every sign bit to its correct
value, i.e.\ the signs are absolute (not relative) coordinates.
Throughout the proof, we fix $\varepsilon_0<\tfrac14$, which is
small enough to ensure that sign bits are recovered unambiguously
from any $z\in\mathcal{B}_\varepsilon(y)$, and we treat the sign
bookkeeping silently in the coded affine and gate transports.

\myparagraph{Step 1: relative error is preserved by coded affine maps}
Fix an integer affine map $A:\Z^p\to\Z$, $A(u)=\sum_i c_i u_i + b$
with $c_i,b\in\Z$. Its coded form $\check A:[0,1]^p\to[0,1]$ is the
map
$\check A(\nu(u_1),\ldots,\nu(u_p))
=\nu\!\bigl(A(u_1,\ldots,u_p)\bigr)
=2^{-b}\prod_i 2^{-c_i u_i}
=2^{-b}\prod_i \nu(u_i)^{c_i}$
when all $u_i,\,A(u)\ge 0$ (the negative case is handled by the
sign-bit bookkeeping, symmetrically).

We claim that if $(\tilde u_1,\ldots,\tilde u_p)\in\R^p$ satisfies
$|\tilde u_i-\nu(u_i)|\le\varepsilon\,\nu(u_i)$ for every $i$, then
\begin{equation}\label{eq:aff-err}
\bigl|\check A(\tilde u)-\nu(A(u))\bigr|
\;\le\; C_A\,\varepsilon\,\nu(A(u)),
\end{equation}
for a constant $C_A$ depending only on $A$. Indeed, writing
$\tilde u_i=(1+\delta_i)\,\nu(u_i)$ with $|\delta_i|\le\varepsilon$,
\begin{equation*}
\check A(\tilde u)
=2^{-b}\prod_i\bigl((1+\delta_i)\nu(u_i)\bigr)^{c_i}
=\nu(A(u))\cdot\prod_i(1+\delta_i)^{c_i}.
\end{equation*}
For $\varepsilon\le\tfrac14$ and $|c_i|\le C$ (a bound fixed by the
normal form), the product satisfies
$|\prod_i(1+\delta_i)^{c_i}-1|\le C_A\,\varepsilon$
for $C_A:=2 p C$ (using $|1-(1+\delta)^c|\le 2|c|\,|\delta|$ when
$|c\delta|\le\tfrac12$), which gives \eqref{eq:aff-err}.

Applying this componentwise to each $A_j:\Z^m\to\Z^m$ yields: for
every $z\in\mathcal{B}_\varepsilon(y)$,
\begin{equation}\label{eq:Aj-err}
\check A_j(z) \in
\mathcal{B}_{C_A\varepsilon}\!\bigl(A_j(y)\bigr),
\qquad j=0,\ldots,s.
\end{equation}

\myparagraph{Step 2: detector output is close to the true threshold}
The same argument applied to the coded gate argument $\check q_j$,
which is a feedforward $\rho$-block encoding the same integer
affine map, gives a constant $C_q$ such that
\begin{equation}\label{eq:qj-err}
\bigl|\check q_j(z)-\nu(q_j(y))\bigr|
\;\le\; C_q\,\varepsilon\,\nu(q_j(y))
\qquad\text{for } z\in\mathcal{B}_\varepsilon(y).
\end{equation}
By hypothesis, $q_j(y)$ is an integer, so $\nu(q_j(y))\in\{1\}$
(when $q_j(y)=0$) or $\nu(q_j(y))\in[0,\tfrac12]$ (when
$q_j(y)\neq 0$).

\emph{Case 1: $q_j(y)=0$, so $\Theta(q_j(y))=1$.}\; Then
$\nu(q_j(y))=1$ and \eqref{eq:qj-err} gives
$\check q_j(z)\ge 1-C_q\,\varepsilon$. Choosing
$\varepsilon_0$ small enough that $1-C_q\,\varepsilon_0 \ge\tfrac78$
(any $\varepsilon_0\le\tfrac{1}{8 C_q}$ works), the detector
hypothesis gives $Z(\check q_j(z))\in[1-\eta,\,1]$, i.e.
\begin{equation}\label{eq:Z-case-one}
\bigl|Z(\check q_j(z))-\Theta(q_j(y))\bigr|\le\eta.
\end{equation}

\emph{Case 2: $q_j(y)\neq 0$, so $\Theta(q_j(y))=0$.}\; Then
$\nu(q_j(y))\le\tfrac12$, and \eqref{eq:qj-err} gives
$\check q_j(z) \le\tfrac12(1+C_q\,\varepsilon)$. Choosing
$\varepsilon_0$ further small enough that
$\tfrac12(1+C_q\,\varepsilon_0)\le\tfrac58$
(any $\varepsilon_0\le\tfrac{1}{4 C_q}$ works), the detector
hypothesis gives $Z(\check q_j(z))\in[0,\eta]$, so
\eqref{eq:Z-case-one} holds in this case as well.

Thus in both cases,
\begin{equation}\label{eq:Z-err}
\bigl|Z(\check q_j(z))-\Theta(q_j(y))\bigr|\le\eta,
\qquad j=1,\ldots,s.
\end{equation}

\myparagraph{Step 3: assembling the global one-step update}
Fix $z\in\mathcal{B}_\varepsilon(y)$ and set, coordinatewise, the
error of the coded output $R(z)$ against the target
$\lfloor P(y)\rfloor_\nu$:
\begin{equation*}
\Delta\;:=\;R(z)-\lfloor P(y)\rfloor_\nu
\;=\;\bigl[\check A_0(z)-\lfloor A_0(y)\rfloor_\nu\bigr]
+\sum_{j=1}^{s}\Bigl[
Z(\check q_j(z))\,\check A_j(z)
-\Theta(q_j(y))\,\lfloor A_j(y)\rfloor_\nu
\Bigr].
\end{equation*}
The first bracket is bounded by
$C_A\,\varepsilon\,\|\lfloor A_0(y)\rfloor_\nu\|_\infty$
by~\eqref{eq:Aj-err}. For the $j$-th summand, write
$a_j:=\Theta(q_j(y))\in\{0,1\}$ and
$\tilde a_j:=Z(\check q_j(z))$; then
\begin{equation*}
\tilde a_j\,\check A_j(z)-a_j\,\lfloor A_j(y)\rfloor_\nu
=(\tilde a_j-a_j)\,\check A_j(z)
+a_j\bigl[\check A_j(z)-\lfloor A_j(y)\rfloor_\nu\bigr],
\end{equation*}
which is bounded in $\|\cdot\|_\infty$ by
$\eta\cdot\|\check A_j(z)\|_\infty
+C_A\,\varepsilon\,\|\lfloor A_j(y)\rfloor_\nu\|_\infty$
using \eqref{eq:Z-err} and \eqref{eq:Aj-err}. Since
$\check A_j(z)\in\mathcal{B}_{C_A\varepsilon}(A_j(y))\subseteq[0,1]^m$,
we have $\|\check A_j(z)\|_\infty\le 1$, and a fortiori
$\|\lfloor A_j(y)\rfloor_\nu\|_\infty\le 1$. Summing over $j$
gives
\begin{equation}\label{eq:global-err}
\|\Delta\|_\infty
\;\le\; (s{+}1)\,C_A\,\varepsilon
\;+\; s\,\eta
\;=:\;C_R\,\varepsilon + s\,\eta,
\end{equation}
with $C_R:=(s{+}1)\,C_A$.

\myparagraph{Step 4: the error fits inside $\mathcal{B}_\varepsilon(P(y))$}
The target basin $\mathcal{B}_\varepsilon(P(y))$ is the product of
relative intervals of radius $\varepsilon\,\nu(P(y)_k)$ around
$\nu(P(y)_k)$. Since $\nu(P(y)_k)\ge 2^{-B(x)}$ on the reference
orbit (property~(ii) of Section~\ref{subsec:hard-to-smooth}), it
suffices to check that every coordinate of $\Delta$ has absolute
value at most $\varepsilon\,\nu(P(y)_k)$.

The reference orbit stays in $[-B(x),B(x)]^m$, so
$\nu(P(y)_k)\ge 2^{-B(x)}$ and the target radius is at least
$\varepsilon\cdot 2^{-B(x)}$ in every coordinate. By
\eqref{eq:global-err}, we need
\begin{equation}\label{eq:final-choice}
C_R\,\varepsilon + s\,\eta
\;\le\;\varepsilon\cdot 2^{-B(x)}
\quad\text{coordinatewise.}
\end{equation}
Since $s$ and $C_R$ are fixed by the normal form, this is a
condition on $\varepsilon$ alone once $\eta$ is fixed by
Definition~\ref{def:nu-admissible-rho}. In particular, we may
take
\begin{equation*}
\varepsilon_0
\;:=\;
\min\!\Bigl(\tfrac{1}{8 C_q},\;\tfrac{1}{4 C_q},\;
\tfrac{1}{2 C_R},\;
\tfrac{s\,\eta}{C_R}\cdot 2^{B(x)}\Bigr)
\end{equation*}
(the first two bounds come from Step~2, the third from the
coefficient in \eqref{eq:final-choice}, and the last ensures the
additive $s\,\eta$ term fits). Since the detector hypothesis
$\eta<\tfrac18$ is absolute, and $B(x)$ is bounded on the
reference orbit, $\varepsilon_0$ is bounded away from zero.

For every $\varepsilon\le\varepsilon_0$,
\eqref{eq:final-choice} holds with slack, hence
$\Delta_k\in\bigl[-\varepsilon\,\nu(P(y)_k),\,
\varepsilon\,\nu(P(y)_k)\bigr]$
for every coordinate $k$, which is exactly the statement
$R(z)\in\mathcal{B}_\varepsilon(P(y))$.
\end{proof}

\begin{proofsketch}
For $z\in\mathcal{B}_\varepsilon(y)$, each coded affine map
$\check A_j(z)$ lies in a basin of radius $O(\varepsilon)$ around
$\lfloor A_j(y)\rfloor_\nu$, and each coded gate argument
$\check q_j(z)$ differs from $\nu(q_j(y))$ by a relative error
$O(\varepsilon)$. By property~(iv), $\nu(q_j(y))$ lies in an
unambiguous region of $Z$, so $Z(\check q_j(z))$ differs from
$\Theta(q_j(y))$ by at most $\eta$. Assembling these bounds, the
coordinatewise error of $R(z)$ against
$\lfloor P(y)\rfloor_\nu$ is
$O(\varepsilon) + O(\eta)$, with constants depending only on the
normal form. Since the target basin has radius proportional to
$\varepsilon$, fitting requires $\varepsilon$ and $\eta$ to be
matched: it suffices to take $\varepsilon$ above a fixed multiple
of $\eta$ (to absorb the constant $\eta$-contribution) and below
a fixed upper bound (to absorb the $\varepsilon$-contribution).
Both conditions are compatible because $\eta<\tfrac18$. 
\MFCSSHORTER{See
appendix.}
\end{proofsketch}

\begin{theoremrep}
\label{thm:threshold-to-rho}
Every uniform threshold-affine normal form yields a recurrent
$\rho$-computation.
\end{theoremrep}

\begin{proofsketch}
The reference orbit stays in the cube $[-B(x),B(x)]^m$
(property~(ii)) and has integer gate arguments throughout
(property~(iv)), so Lemma~\ref{lem:rho-one-step} applies at every
step with a fixed basin radius $\varepsilon$. Choosing
$\varepsilon:=2^{-s_0(x)}$ with $s_0(x)$ primitive recursive in
$B(x)$ makes the induction $z_n\in\mathcal{B}_\varepsilon(y_n(x))$
go through from the raw encoded input. At $n=T(x)$ the output
coordinates approximate $\nu(f_j(x))$ within relative error
$2^{-s_0(x)}$, which is
Definition~\ref{def:rho-unfolding} with $T_\rho:=T$ and
$S_\rho:=s_0$.
\end{proofsketch}

\begin{proof}
Let $P$ be a uniform threshold-affine normal form for $f$ with
reference orbit $y_0(x),\ldots,y_{T(x)}(x)$, trapping radius
$B:\N^d\to\N$ primitive recursive (property~(ii) of
Section~\ref{subsec:hard-to-smooth}), and output projection
$\pi_I$. Let $R:[0,1]^m\to[0,1]^m$ be the coded one-step update
of~\eqref{eq:rho-R}, and let $\eta<\tfrac18$ be the margin of the
admissible detector $Z$ (Definition~\ref{def:nu-admissible-rho}).

\myparagraph{Step 1: uniform basin radius on the reference orbit}
Set $\beta(x):=B(x)+\max_j\|A_j\|_\infty\cdot m$, where
$\|A_j\|_\infty$ is the $\ell^\infty$-operator norm of the integer
matrix of $A_j$. Then for every $n\le T(x)$,
$\|y_n(x)\|_\infty\le B(x)\le\beta(x)$; moreover, since
$y_{n+1}(x)=P(y_n(x))=A_0(y_n(x))+\sum_j\Theta(q_j(y_n(x)))A_j(y_n(x))$,
every coordinate of $y_{n+1}(x)$ is bounded by $\beta(x)$ as well,
so the reference orbit is confined to $[-\beta(x),\beta(x)]^m$.

Apply Lemma~\ref{lem:rho-one-step} with range parameter
$\beta(x)$: there exists $\varepsilon_0(x)\in(0,\tfrac18)$,
depending on the normal form, on $\eta$, and on $\beta(x)$, such
that for every $\varepsilon\le\varepsilon_0(x)$ and every
$y\in\Z^m\cap[-\beta(x),\beta(x)]^m$ with integer gate arguments,
$R(\mathcal{B}_\varepsilon(y))\subseteq
\mathcal{B}_\varepsilon(P(y))$.

Tracing the formula for $\varepsilon_0$ from the proof of
Lemma~\ref{lem:rho-one-step}, we have
$\varepsilon_0(x)\ge 2^{-c(P)\beta(x)}$ for a constant $c(P)$
depending only on the normal form. Set
\begin{equation*}
s_0(x)\;:=\;\max\bigl(c(P)\beta(x),\,2\bigr),
\qquad
\varepsilon(x)\;:=\;2^{-s_0(x)}.
\end{equation*}
Then $\varepsilon(x)\le\varepsilon_0(x)$, and $s_0(x)$ is primitive
recursive in $x$ (as a fixed function of $B(x)$, itself primitive
recursive). The condition $s_0(x)\ge 2$ is the one required by
Definition~\ref{def:rho-unfolding}.

\myparagraph{Step 2: induction on the coded orbit}
Let $z_0:=(\nu(x),0,\ldots,0)\in[0,1]^m$ be the raw encoded input,
and define $z_{n+1}:=R(z_n)$ for $n\ge 0$. We prove by induction
on $n\in\{0,1,\ldots,T(x)\}$ that
\begin{equation}\label{eq:rho-invariant}
z_n\;\in\;\mathcal{B}_{\varepsilon(x)}\bigl(y_n(x)\bigr).
\end{equation}

\emph{Base case $n=0$.}\;
We have $z_0=(\nu(x),0,\ldots,0)=\lfloor y_0(x)\rfloor_\nu$
exactly, where we use $\nu(0)=1$ on the zero-padding coordinates.
Hence the coordinate-wise difference $z_{0,k}-\nu(y_{0,k}(x))$ is
zero for every $k$, which is trivially within relative error
$\varepsilon(x)$. So $z_0\in\mathcal{B}_{\varepsilon(x)}(y_0(x))$.

\emph{Inductive step.}\;
Assume $z_n\in\mathcal{B}_{\varepsilon(x)}(y_n(x))$ for some
$n<T(x)$. By property~(iv), every gate argument $q_j(y_n(x))$ is
an integer. By property~(ii) and Step~1, $y_n(x)\in
\Z^m\cap[-\beta(x),\beta(x)]^m$. By Step~1,
$\varepsilon(x)\le\varepsilon_0(x)$, so
Lemma~\ref{lem:rho-one-step} applies:
$z_{n+1}=R(z_n)\in
R(\mathcal{B}_{\varepsilon(x)}(y_n(x)))
\subseteq \mathcal{B}_{\varepsilon(x)}(P(y_n(x)))
= \mathcal{B}_{\varepsilon(x)}(y_{n+1}(x))$,
which is~\eqref{eq:rho-invariant} at index $n+1$.

\myparagraph{Step 3: verifying Definition~\ref{def:rho-unfolding}}
At $n=T(x)$, \eqref{eq:rho-invariant} gives $z_{T(x)}\in
\mathcal{B}_{\varepsilon(x)}(y_{T(x)}(x))$. Projecting onto the
output coordinates $I=(i_1,\ldots,i_e)$, every output
coordinate $j\in\{1,\ldots,e\}$ satisfies
\begin{equation*}
\Bigl|\,\bigl(\pi_I(z_{T(x)})\bigr)_j
        -\nu\!\bigl(f_j(x)\bigr)\,\Bigr|
\;\le\;\varepsilon(x)\,\nu\!\bigl(f_j(x)\bigr)
\;=\;2^{-s_0(x)}\,\nu\!\bigl(f_j(x)\bigr),
\end{equation*}
using that $\pi_I(y_{T(x)}(x))=f(x)$ (the normal form computes
$f$) and that the basin $\mathcal{B}_\varepsilon$ is a product of
relative intervals.

This is exactly the bound required by
Definition~\ref{def:rho-unfolding} with observation time
$T_\rho(x):=T(x)$ and output precision $S_\rho(x):=s_0(x)$, both
primitive recursive in $x$. Hence $f$ admits a recurrent
$\rho$-computation.
\end{proof}

\section{From robust polynomial ODE computation to step-size-controlled polynomial simulation}
\label{sec:ode-to-step}

This section proves the implication $\ref{itemquatre})\Rightarrow\ref{itemcinq})$ of
Theorem~\ref{thm:main-formal}. The idea is standard: once a function is
computed robustly by a fixed polynomial ODE, one may discretize that ODE by
the Euler scheme. The resulting discrete update is polynomial, provided the
step size is stored as an additional register. The only subtlety is to make
the dependence on the step size explicit and primitive recursive.

\myparagraph{The step-size-controlled model.}
Recall that Item~\ref{itemcinq}) of Theorem~\ref{thm:main-formal} is the following
notion.

\begin{definition}[Step-size-controlled polynomial representation]
\label{def:step-controlled-again}
A function $f:\N^d\to\N^e$ admits a \emph{step-size-controlled polynomial
representation} if there exist a polynomial map $P:\R^M\to\R^M$ with
rational coefficients, a primitive recursive precision threshold
$S:\N^d\to\N$, a primitive recursive observation count
$N:\N^d\times\N\to\N$, and designated output coordinates $\pi_e$ such that,
for every input $x\in\N^d$ and every $s\ge S(x)$, if
$z_0(x,s):=(x,2^{-s},0,\ldots,0)$ and $z_{n+1}(x,s):=P(z_n(x,s))$, then
$\|\pi_e(z_{N(x,s)}(x,s))-f(x)\|_\infty<\tfrac12$.
\end{definition}

\myparagraph{The Euler simulator.}
Let $F:\R^M\to\R^M$ be a fixed polynomial vector field. Its explicit Euler
step with step size $h$ is $Y\mapsto Y+hF(Y)$. If $h$ is stored as an extra
register, then this becomes a polynomial self-map.

\begin{definition}[Euler simulator]
\label{def:euler-simulator}
Given a polynomial vector field $F:\R^M\to\R^M$, define the associated
Euler simulator $E_F:\R^{M+1}\to\R^{M+1}$ by
$E_F(Y,h):=(Y+hF(Y),h)$.
\end{definition}

The map $E_F$ is polynomial with rational coefficients whenever $F$ is.

\myparagraph{Uniform bounds for a fixed polynomial vector field.}
Let $F:\R^M\to\R^M$ be fixed. Since $F$ is polynomial, there exist explicit
integer-valued polynomials $M_F^\sharp,L_F^\sharp:\N\to\N$ such that, for
every $R\in\N$ and every $Y,Z\in[-R,R]^M$, one has
$\|F(Y)\|_\infty\le M_F^\sharp(R)$ and
$\|F(Y)-F(Z)\|_\infty\le L_F^\sharp(R)\,\|Y-Z\|_\infty$.

We shall use these as coarse majorants for the size and local Lipschitz
constant of the vector field on bounded regions.

\begin{lemma}[Global Euler error bound on a bounded region]
\label{lem:euler-error}
Let $F:\R^M\to\R^M$ be a fixed polynomial vector field, and let
$Y_x:[0,T]\to\R^M$ be a solution of $Y'(t)=F(Y(t))$ such that
$\|Y_x(t)\|_\infty\le R$ for all $t\in[0,T]$. Let
$Y_0:=Y_x(0)$ and $Y_{k+1}:=Y_k+hF(Y_k)$ be the explicit Euler orbit with
step size $h>0$. Then, as long as all Euler iterates remain in $[-R,R]^M$,
the global error $e_k:=\|Y_k-Y_x(kh)\|_\infty$ satisfies
$e_k\le C(R,T)\,h$ for all $kh\le T$, where
$C(R,T):=L_F^\sharp(R)M_F^\sharp(R)\,T\,e^{L_F^\sharp(R)T}$.
\end{lemma}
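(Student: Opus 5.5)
The plan is the classical local-truncation-plus-stability argument for the explicit Euler scheme. Everything is confined to the cube $[-R,R]^M$, where the majorants $M_F^\sharp(R)$ and $L_F^\sharp(R)$ apply. The error recurrence will then be fed into the discrete Grönwall mechanism already isolated in Lemma~\ref{lem:quantitative-shadowing}.

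\emph{Step 1: local truncation error.} Fix $k$ with $(k+1)h\le T$ and write
\[
Y_x((k+1)h)=Y_x(kh)+\int_{kh}^{(k+1)h}F(Y_x(t))\,dt .
\]
Since $Y_x$ stays in $[-R,R]^M$, we have $\|Y_x'(t)\|_\infty=\|F(Y_x(t))\|_\infty\le M_F^\sharp(R)$. Hence $\|Y_x(t)-Y_x(kh)\|_\infty\le M_F^\sharp(R)(t-kh)$ on the step. By the Lipschitz majorant, $\|F(Y_x(t))-F(Y_x(kh))\|_\infty\le L_F^\sharp(R)M_F^\sharp(R)(t-kh)$. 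Integrating gives the one-step defect
\[
\tau_k:=\bigl\|Y_x((k+1)h)-Y_x(kh)-hF(Y_x(kh))\bigr\|_\infty\le \tfrac12 L_F^\sharp(R)M_F^\sharp(R)\,h^2 .
\]

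\emph{Step 2: error recurrence.} Subtract the exact relation from the Euler step $Y_{k+1}=Y_k+hF(Y_k)$. Using the hypothesis that $Y_k\in[-R,R]^M$, the Lipschitz majorant applies to the pair $(Y_k,Y_x(kh))$, and we obtain
\[
e_{k+1}\le(1+hL)e_k+\tau_k,\qquad L:=L_F^\sharp(R),\quad e_0=0 .
\]
This is exactly the linear-plus-constant recurrence of Lemma~\ref{lem:quantitative-shadowing}, with contraction factor $1+hL$ and additive term $\varepsilon=\tfrac12 LM_F^\sharp(R)h^2$. Here the hypothesis ``Euler iterates remain in $[-R,R]^M$'' plays the role of the tube condition. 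One may either invoke that lemma directly or repeat its elementary induction. Either way,
\[
e_k\le\varepsilon\sum_{i=0}^{k-1}(1+hL)^i .
\]

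\emph{Step 3: summation.} Bound each factor by $(1+hL)^i\le e^{ihL}\le e^{LT}$, using $kh\le T$. Then $e_k\le k\,\varepsilon\,e^{LT}\le (T/h)\cdot\tfrac12 LM_F^\sharp(R)h^2e^{LT}$. This is at most $L_F^\sharp(R)M_F^\sharp(R)\,T\,e^{L_F^\sharp(R)T}\,h=C(R,T)\,h$, which is the claim. The crude bound $\sum_i(1+hL)^i\le k\,e^{LT}$ is used deliberately, rather than the geometric-sum formula, so as to land on the stated constant without a $1/L$ factor that would be problematic when $L=0$.

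The only delicate point is Step~2: the Lipschitz estimate is valid only inside the cube. This is why the statement is conditional on the Euler iterates staying in $[-R,R]^M$. I will note that the lemma does not itself guarantee confinement. When the lemma is applied later, confinement must be ensured by a margin argument: enlarge $R$ by $1$ and choose $h$ so that $C(R+1,T)h<1$, so the iterates cannot leave the enlarged cube.
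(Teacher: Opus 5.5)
Your proof is correct and follows the paper's argument: a one-step local truncation bound, the error recurrence $e_{k+1}\le(1+hL_F^\sharp(R))e_k+O(h^2)$ with $e_0=0$, and a discrete Gr\"onwall summation. Your factor $\tfrac12$ in the truncation defect and the crude bound $\sum_{i<k}(1+hL)^i\le k\,e^{LT}$ only make explicit, and slightly sharpen, what the paper compresses into one line.

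One small caveat: invoking Lemma~\ref{lem:quantitative-shadowing} literally would require the one-step estimate on $\infty$-balls around the points $Y_x(kh)$. Those balls can stick out of $[-R,R]^M$, where $L_F^\sharp(R)$ is not guaranteed. So the route you also offer, repeating the induction directly under the hypothesis that the Euler iterates stay in the cube, is the one to take.
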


\begin{proof}
For one Euler step, the standard local truncation estimate gives
$\|Y_x((k+1)h)-Y_x(kh)-hF(Y_x(kh))\|_\infty\le
L_F^\sharp(R)M_F^\sharp(R)h^2$, because on the relevant region the vector
field has size at most $M_F^\sharp(R)$ and Lipschitz constant at most
$L_F^\sharp(R)$.

Writing $e_k:=\|Y_k-Y_x(kh)\|_\infty$, one gets
$e_{k+1}\le (1+hL_F^\sharp(R))e_k + L_F^\sharp(R)M_F^\sharp(R)h^2$.
Since $e_0=0$, a discrete Grönwall estimate yields
$e_k\le C(R,T)h$ for all $kh\le T$, with the stated constant.
\end{proof}

\myparagraph{Discretizing the robust ODE computation.}
We now apply the previous lemma to the robust polynomial ODE representation
of Item~\ref{itemquatre}).

\begin{theorem}
\label{thm:four-implies-five}
Every robust polynomial ODE representation yields a step-size-controlled
polynomial representation. Equivalently, Item~\ref{itemquatre}) implies Item~\ref{itemcinq}) in
Theorem~\ref{thm:main-formal}.
\end{theorem}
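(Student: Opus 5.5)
Starting from a robust polynomial ODE representation $(F,\tau,\widetilde B,I)$ of $f$, I will take $P:=E_F$ from Definition~\ref{def:euler-simulator}, acting on $\R^{M+1}$. The step-size register is moved to the second coordinate so that the initial state reads $(x,2^{-s},0,\ldots,0)$. This is a coordinate permutation, which keeps $P$ polynomial with rational coefficients. The Euler register $h$ is never modified, so the orbit from $(x,2^{-s},0,\ldots,0)$ is exactly the explicit Euler orbit of $F$ with step $h=2^{-s}$, started from $Y_x(0)=(x,0,\ldots,0)$.

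For the observation count I will take $N(x,s):=\tau(x)\,2^{s}$, so that $N h=\tau(x)$ exactly. This lands the reading at the left end of the robustness window $[\tau(x),\tau(x)+1]$, and $N$ is clearly primitive recursive.

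\textbf{Error bound.} I will apply Lemma~\ref{lem:euler-error} on $[0,T]$ with $T:=\tau(x)+1$ and region radius $R:=\widetilde B(x)+1$. Condition~(ii) of Definition~\ref{def:ode-rep} gives $\|Y_x\|_\infty\le\widetilde B(x)\le R$, so the lemma yields a global error of at most $C(R,T)h$. There is a gap here: the lemma assumes the iterates stay in $[-R,R]^M$. I will close it with a bootstrap induction on $k$. As long as $C(R,T)h\le 1$ and $e_j\le C(R,T)h$ for all $j\le k$, we get $\|Y_k\|_\infty\le\widetilde B(x)+1=R$. The lemma's one-step recursion $e_{k+1}\le(1+hL^\sharp_F(R))e_k+L^\sharp_F(R)M^\sharp_F(R)h^2$ then propagates the bound to $k+1$. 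The local truncation estimate only needs the exact solution to stay in the region, and it does. The Lipschitz comparison needs both $Y_k$ and $Y_x(kh)$ in $[-R,R]^M$, which is exactly the inductive hypothesis. This re-derivation of the lemma under the bootstrap hypothesis is the main point needing care.

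\textbf{Choice of threshold.} I will pick $S(x)$ so that $C(R,T)\,2^{-S(x)}<\min(1,\tfrac12-\delta(x))$, where $\delta(x)$ is a margin with $\|\pi_I(Y_x(\tau(x)))-f(x)\|_\infty\le\delta(x)<\tfrac12$. The main obstacle is that condition~(iii) of Definition~\ref{def:ode-rep} only gives a strict inequality $<\tfrac12$, with no explicit margin. I see two ways to handle this.
\begin{itemize}
\item First option: note that in every representation constructed in this paper, in particular Theorem~\ref{thm:three-implies-four}, the margin is $\tfrac14$. I would then run the implication with that uniform margin, routing through item~(\ref{itemcorhardunif}) if necessary.
\item Fallback for a general representation: $\delta(x)$ is attained at the fixed time $\tau(x)$ and depends only on $x$. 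I would bound it from below by a primitive recursive quantity, using that $Y_x(\tau(x))$ is a computable real whose approximations are primitive recursive given the bounds $\widetilde B$, $L^\sharp_F$ and $M^\sharp_F$. One then searches, within a primitive recursive bound, for a rational approximation that certifies a margin.
\end{itemize}

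\textbf{Size of $S$.} Since $C(R,T)=L^\sharp_F(R)M^\sharp_F(R)\,T\,e^{L^\sharp_F(R)T}$ is bounded above by $2^{\text{polynomial in }(R,T)}$, one can take
\[
S(x):=\bigl\lceil \log_2 L^\sharp_F(R)M^\sharp_F(R)T\bigr\rceil+2L^\sharp_F(R)T+c(x),
\]
where $c(x)$ absorbs the margin and the correction from the coordinate permutation. This is primitive recursive.

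\textbf{Conclusion.} For every $s\ge S(x)$ the step $h=2^{-s}$ only gets smaller, so the bound improves, and
\[
\|\pi_I(z_{N(x,s)})-f(x)\|_\infty\le\delta(x)+C(R,T)h<\tfrac12 .
\]
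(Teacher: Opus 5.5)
Your main line is the paper's proof. It uses the Euler simulator $E_F$ with $h$ stored as a register, $R=\widetilde B(x)+1$, horizon $\tau(x)+1$, and $S(x)$ chosen so that $C(R,T)\,2^{-S(x)}$ is small. It takes $N(x,s)=\tau(x)2^s$; the paper writes $\lceil\tau(x)2^s\rceil$, which is the same thing since $\tau(x)\in\N$.

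You are more careful than the paper on two points it skips:
\begin{itemize}
\item The register must sit right after $x$, at position $d+1$ in general (not the second coordinate), as Definition~\ref{def:step-controlled-again} requires. The paper simply appends $h$ last.
\item The paper never checks the hypothesis of Lemma~\ref{lem:euler-error} that the Euler iterates stay in $[-R,R]^M$. Your bootstrap, using $C(R,T)h\le 1$ together with the safety bound $\widetilde B$, is exactly what is needed there.
\end{itemize}

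The one genuine problem is the margin, and your fallback does not solve it. The search for a rational approximation of $\pi_I(Y_x(\tau(x)))$ that certifies a margin does terminate. However, nothing in Definition~\ref{def:ode-rep} bounds $\tfrac12-\delta(x)$ from below by $2^{-g(x)}$ with $g$ primitive recursive. So you have no primitive recursive bound on the length of that search, and hence no primitive recursive $c(x)$ or $S(x)$. Primitive recursive approximability of the output cannot supply such a bound by itself. Take the reals $\tfrac12+(2b(x)-1)2^{-t(x)-2}$, where $t(x)$ is the halting time of an arbitrary total machine outputting a bit $b(x)$. They are primitive recursively approximable (run the machine for $k$ steps) and lie strictly within $\tfrac12$ of $b(x)$, yet $b$ need not be primitive recursive.

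Your first option is exactly what the paper does. It simply asserts the window estimate with $\tfrac14$ in place of $\tfrac12$, ``obtained by tightening the precision in the ODE construction''. That is legitimate for representations carrying a fixed margin, such as those produced by Theorem~\ref{thm:three-implies-four}. It does not cover an arbitrary representation satisfying only the strict inequality of condition~(iii). Moreover, ``routing through item~(\ref{itemcorhardunif})'' is circular inside Theorem~\ref{thm:main-formal}: the paper's only path from a general ODE representation back to primitive recursion goes through the very implication you are proving.

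The clean repair, for you as for the paper, is to build a fixed margin into Definition~\ref{def:ode-rep}, say $\tfrac14$ or a primitive recursive margin $2^{-g(x)}$. With that change, your first option is a complete proof, and a more careful one than the paper's.
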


\begin{proof}
Assume that $f:\N^d\to\N^e$ admits a robust polynomial ODE representation.
Thus there exist a polynomial vector field $F:\R^M\to\R^M$, a primitive
recursive observation time $\tau:\N^d\to\N$, a primitive recursive safety
bound $\widetilde B:\N^d\to\N$, and designated output coordinates $\pi_e$
such that the solution of $Y'(t)=F(Y(t))$ with initial condition
$Y(0)=(x,0,\ldots,0)$ remains in the box
$[-\widetilde B(x),\widetilde B(x)]^M$ for all
$t\in[0,\tau(x)+1]$, and moreover
$\|\pi_e(Y_x(t))-f(x)\|_\infty<\tfrac14$ for all
$t\in[\tau(x),\tau(x)+1]$.
As usual, the margin $\tfrac14$ is obtained simply by tightening the
precision in the ODE construction; only a margin strictly below
$\tfrac12$ is needed.

Let $E_F$ be the Euler simulator of Definition~\ref{def:euler-simulator}.
For a given input $x$, set $R(x):=\widetilde B(x)+1$,
$L_x:=L_F^\sharp(R(x))$, $M_x:=M_F^\sharp(R(x))$, and
$C_x:=L_xM_x(\tau(x)+1)e^{L_x(\tau(x)+1)}$.

Choose a primitive recursive threshold $S:\N^d\to\N$ such that
$2^{-S(x)}C_x<\tfrac14$ for every input $x$. This is possible because all
quantities involved are primitive recursive. For every $s\ge S(x)$, let
$h:=2^{-s}$ and define
$N(x,s):=\lceil \tau(x)\,2^s\rceil$. Since $2^s$ and $\tau(x)$ are
primitive recursive, so is $N$.

We now iterate the Euler simulator from the initial state
$z_0(x,s):=((x,0,\ldots,0),2^{-s})$. By
Lemma~\ref{lem:euler-error}, for every $s\ge S(x)$ the discrete state of the
Euler orbit after $N(x,s)$ steps is within distance $\tfrac14$ of the exact
ODE state at time $N(x,s)2^{-s}$. Since
$|N(x,s)2^{-s}-\tau(x)|\le 2^{-s}\le 1$, this time belongs to the final
validity interval $[\tau(x),\tau(x)+1]$. On that interval, the ODE output is
already within distance $\tfrac14$ of $f(x)$. Therefore the total error is
strictly less than $\tfrac12$.

This proves that the Euler simulator provides a step-size-controlled
polynomial representation of $f$.
\end{proof}

\begin{remark}
The important point is that the step size is not a hidden numerical detail,
but an explicit computational resource. The continuous-time polynomial ODE
uses a fixed flow. After discretization, the same mechanism survives only
provided enough numerical resolution is supplied. This is precisely what
Item~\ref{itemcinq}) makes visible.
\end{remark}

\section{Converse implications and completion of the proof}
\label{sec:converses}

It remains to prove the easy converse implications. These are much more
direct than the forward constructions, since in both cases the dynamics is
fixed in advance and the evolution takes place over rational states.

\subsection{From step-size-controlled polynomial simulation back to primitive recursion.}
\label{sec:cinqpr}
We  prove the converse implication $\ref{itemcinq})\Rightarrow\ref{itempr})$ of
Theorem~\ref{thm:main-formal}.

\begin{theorem}
\label{thm:five-implies-one}
Every step-size-controlled polynomial representation computes a primitive
recursive function. Equivalently, Item~\ref{itemcinq}) implies Item~\ref{itempr}) in
Theorem~\ref{thm:main-formal}.
\end{theorem}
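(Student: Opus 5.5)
The argument mirrors Theorem~\ref{thm:two-implies-one-hard}: with a fixed map, the whole evolution over rational states is primitive recursive. Assume $f$ has a step-size-controlled polynomial representation with polynomial map $P:\R^M\to\R^M$ (rational coefficients), threshold $S$, observation count $N$, and output indices $I$. For every input $x$, take the specific parameter $s:=S(x)$, which is allowed since $s\ge S(x)$. The initial state $Z_0=(x,2^{-S(x)},0,\ldots,0)$ is a vector of rationals, and $P$ has rational coefficients, so every iterate $Z_n$ has rational coordinates. Our plan is to show that $(x,s,n)\mapsto Z_n$ is primitive recursive on exact rational encodings, then read off $f(x)$ by rounding.

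\textbf{Exact rational arithmetic.} We encode each rational as a triple (sign, numerator, denominator) of naturals. Addition, multiplication and comparison of rationals are primitive recursive on these codes, and so is multiplication by fixed rational constants. Since $P$ is a fixed finite expression in these operations, the one-step map $Z\mapsto P(Z)$ is primitive recursive on codes of vectors in $\mathbb{Q}^M$. The initial state is also primitive recursive in $(x,s)$: its only non-integer entry is $2^{-s}$, with numerator $1$ and denominator $2^s$. By primitive recursion on $n$, the map $(x,s,n)\mapsto\mathrm{code}(Z_n(x,s))$ is primitive recursive. Numerators and denominators can grow like $\deg(P)^n$ in bit length, i.e.\ doubly exponentially in value. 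This is harmless because primitive recursion puts no growth restriction on intermediate values. We stress this point, since it is the only place where a naive ``bounded by a polynomial'' intuition could mislead.

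\textbf{Decoding.} We next compose with primitive recursive functions: $x\mapsto (x,S(x),N(x,S(x)))$ gives $Z_{N(x,S(x))}(x,S(x))$, and we project onto the coordinates in $I$. For each output coordinate $j$ we get a rational $q_j$ with $|q_j-f_j(x)|<\tfrac12$. Hence $f_j(x)$ is the unique natural number within distance $\tfrac12$ of $q_j$, namely $\lfloor q_j+\tfrac12\rfloor$ (taking $0$ if $q_j<\tfrac12$; ties cannot occur because the inequality is strict). Rounding a rational code to the nearest integer is primitive recursive via integer division of $2\cdot\mathrm{num}+\mathrm{den}$ by $2\cdot\mathrm{den}$. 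Therefore $f$ is primitive recursive.

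The main point needing care is stating precisely that rational arithmetic and the fixed polynomial are primitive recursive on codes, including normalization or the lack of it. We would avoid gcd normalization altogether: unnormalized fraction codes suffice, because rounding and comparison never need reduced forms. With that choice, the rest is routine closure under composition and primitive recursion.
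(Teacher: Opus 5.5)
Your proposal is correct and follows the paper's proof. The paper likewise simulates the fixed rational polynomial map exactly over rational codes by primitive recursion on $n$, instantiates the simulation at $s=S(x)$ and $n=N(x,S(x))$, and then projects and rounds. The details you add (unnormalized fraction codes, the explicit rounding formula, and the remark that doubly exponential growth of the codes is harmless) fill in points the paper treats as routine.
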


\begin{proof}
Assume that $f:\N^d\to\N^e$ admits a step-size-controlled polynomial
representation. Thus there exist a polynomial map $P:\R^M\to\R^M$ with
rational coefficients, a primitive recursive precision threshold
$S:\N^d\to\N$, a primitive recursive observation count
$N:\N^d\times\N\to\N$, and designated output coordinates $\pi_e$ such
that, for every $x\in\N^d$ and every $s\ge S(x)$, if
$z_0(x,s):=(x,2^{-s},0,\ldots,0)$ and $z_{n+1}(x,s):=P(z_n(x,s))$, then
$\|\pi_e(z_{N(x,s)}(x,s))-f(x)\|_\infty<\tfrac12$.

Fix $x$. Since the coefficients of $P$ are rational and the initial state
$z_0(x,s)$ is rational, every iterate $z_n(x,s)$ is rational. Under the
standard encoding of rationals, polynomial evaluation with rational
coefficients is primitive recursive. Hence the map
$(x,s,n)\mapsto z_n(x,s)$ is primitive recursive by primitive recursion on
$n$. Since $S$ and $N$ are primitive recursive, so is the map
$x\mapsto z_{N(x,S(x))}(x,S(x))$. Composing with the output projection
$\pi_e$ and coordinatewise rounding, we obtain a primitive recursive
function of $x$.

By the defining property of the representation, the rounded output is exactly
$f(x)$. Therefore $f$ is primitive recursive.
\end{proof}

\myparagraph{From recurrent ReLU computation back to primitive recursion.}
We next prove the converse implication $\ref{itemreul})\Rightarrow\ref{itempr})$. This is again
elementary, because a fixed feedforward ReLU block with rational parameters
induces a primitive recursive evolution on rational states.

\begin{theorem}
\label{thm:two-implies-one}
Every exact recurrent ReLU computation computes a primitive recursive
function. Equivalently, Item~\ref{itemreul}) implies Item~\ref{itempr}) in
Theorem~\ref{thm:main-formal}.
\end{theorem}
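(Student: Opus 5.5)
The proof follows the argument of Theorem~\ref{thm:two-implies-one-hard}, which already establishes the same statement inside the hard equivalence. I would either invoke that theorem directly, or repeat its argument so that this section is self-contained.

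Assume $f:\N^d\to\N^e$ admits a recurrent ReLU computation in the sense of Definition~\ref{def:relu-unfolding}. This gives a feedforward ReLU block $R=A_L\circ\sigma\circ\cdots\circ\sigma\circ A_0$ with rational affine maps, a primitive recursive time $T$, and an output tuple $I$, such that $f(x)=\pi_I(z_{T(x)}(x))$ with $z_0(x)=(x,0,\ldots,0)$. Fix a standard primitive recursive encoding of $\Q$, and of $\Q^m$, by pairs of integers.

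\textbf{Step 1.} Show that $R$ maps $\Q^m$ to $\Q^m$ and that its restriction is primitive recursive on codes. Each $A_\ell$ is a fixed finite combination of rational additions and multiplications by fixed rational constants, and these are primitive recursive on codes. $\operatorname{ReLU}(t)=\max(t,0)$ needs only a sign test on the numerator, which is also primitive recursive. Since $L$ and the widths are fixed, $R$ is a fixed finite composition of these operations, hence primitive recursive.

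\textbf{Step 2.} Define $G(x,n):=\mathrm{code}(z_n(x))$ by primitive recursion on $n$:
\[
G(x,0)=\mathrm{code}(x,0,\ldots,0),\qquad G(x,n+1)=R(G(x,n)).
\]
Substituting $n=T(x)$ is a composition with the primitive recursive function $T$, so $x\mapsto G(x,T(x))$ is primitive recursive.

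\textbf{Step 3.} Decode the output. Extract the coordinates indexed by $I$. By hypothesis these rationals are exactly the natural numbers $f_j(x)$, so their reduced codes have denominator $1$ and reading off the numerator is primitive recursive. Hence $f$ is primitive recursive.

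The only point requiring care is Step~1. Codes must be normalized (for instance by reducing fractions via a bounded gcd search), or comparisons must be done by cross-multiplication, so that each operation stays primitive recursive. Growth of the representation sizes along the orbit causes no difficulty, since primitive recursion imposes no size bound.
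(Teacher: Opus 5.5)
Your proposal is correct and follows the paper's proof: rational affine layers and $\operatorname{ReLU}$ are primitive recursive on rational codes, the orbit $(x,n)\mapsto z_n(x)$ is then primitive recursive by primitive recursion on $n$, and substituting the primitive recursive time $T(x)$ and projecting gives $f$. Your extra care about normalized codes and about reading the integer output off its rational code fills in details the paper leaves implicit, without changing the argument.
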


\begin{proof}
Assume that $f:\N^d\to\N^e$ admits an exact recurrent ReLU computation.
Thus there exist a feedforward ReLU block $R:\R^m\to\R^m$, a primitive
recursive observation time $T:\N^d\to\N$, and designated output coordinates
$\pi_e$ such that, for every input $x\in\N^d$, if
$z_0(x):=(x,0,\ldots,0)$ and $z_{n+1}(x):=R(z_n(x))$, then
$f(x)=\pi_e(z_{T(x)}(x))$.

A feedforward ReLU block is built from affine maps with rational parameters
and coordinatewise applications of $\operatorname{ReLU}(t)=\max(t,0)$. On
rational inputs, affine maps with rational parameters are primitive
recursive, and $\operatorname{ReLU}$ is primitive recursive as well, since it
is just the maximum of its input and $0$. It follows that the one-step update
$(x,n)\mapsto z_{n+1}(x)$ is primitive recursive on rational encodings.
Hence, by primitive recursion on $n$, the full evolution
$(x,n)\mapsto z_n(x)$ is primitive recursive.

Since $T$ is primitive recursive and $\pi_e$ is a projection,
$x\mapsto \pi_e(z_{T(x)}(x))$ is primitive recursive. By assumption, this is
exactly $f(x)$.
\end{proof}

\subsection{From recurrent $\rho$-computation back to primitive recursion}
\label{sec:nnpr}

We prove the converse implication $(\ref{itemnn})\Rightarrow(\ref{itempr})$ of
Theorem~\ref{thm:main-formal}.

The argument is straightforward once one separates the two ingredients
built into Definition~\ref{def:rho-unfolding}.  First, the recurrent
dynamics is generated by iterating a \emph{fixed} feedforward
$\rho$-block on the compact cube $[0,1]^m$.  Second, the admissibility
assumptions on~$\rho$ imply that this block is primitive recursively
approximable on rational inputs, with a primitive recursive modulus of
continuity.  Since the observation time and the required output
precision are themselves primitive recursive, the whole orbit can be
simulated primitive recursively up to the precision needed for the final
$\nu$-decoding.

\begin{lemma}[Effective evaluation of a fixed $\rho$-block]
\label{lem:rho-block-effective}
Let $\rho$ be admissible, and let $R:[0,1]^m\to[0,1]^m$ be a fixed
feedforward $\rho$-block.  Then there exist primitive recursive
functions
$
\Eval_R:(\Q\cap[0,1])^m\times\N\to\Q^m,
\qquad
\mu_R:\N\to\N,
$
such that:
\begin{enumerate}
\item for every $q\in(\Q\cap[0,1])^m$ and every $s\in\N$,
$
\|\Eval_R(q,s)-R(q)\|_\infty \le 2^{-s};
$
\item for every $u,v\in[0,1]^m$ and every $s\in\N$,
$
\|u-v\|_\infty\le 2^{-\mu_R(s)}
\quad\Longrightarrow\quad
\|R(u)-R(v)\|_\infty\le 2^{-s}.
$
\end{enumerate}
\end{lemma}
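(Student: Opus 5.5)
Our plan is to prove both items by structural induction on the depth $L$ of the block $R=A_L\circ\rho\circ A_{L-1}\circ\cdots\circ A_1\circ\rho\circ A_0$, building evaluators and moduli layer by layer. Everything rests on two primitive recursive ingredients: the evaluation oracle and modulus of uniform continuity of~$\rho$ from condition~(i) of Definition~\ref{def:nu-admissible-rho}, and the corresponding data for the affine maps. Each $A_\ell(u)=M_\ell u+b_\ell$ has fixed rational entries, so on rational inputs it is computed exactly and primitive recursively. It is also Lipschitz in $\|\cdot\|_\infty$ with constant $K_\ell:=\|M_\ell\|_\infty$, the maximal row sum of absolute values. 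We fix integers $k_\ell$ with $2^{k_\ell}\ge\max(K_\ell,1)$; then the map $s\mapsto s+k_\ell$ is a modulus for $A_\ell$ on all of $\R^{m_\ell}$.

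For item~2 (the modulus), we compose the moduli. Let $\mu_\rho$ be the primitive recursive modulus of~$\rho$ on $[0,1]$, applied coordinatewise. The inclusion hypotheses $A_\ell([0,1]^{m_\ell})\subseteq[0,1]^{m_{\ell+1}}$ guarantee that every intermediate point lies in the cube, which is the domain on which $\mu_\rho$ is valid. Going from the output layer backwards, we set
\[
\mu_R(s):=\mu_{A_0}\bigl(\mu_\rho(\mu_{A_1}(\cdots\mu_\rho(\mu_{A_L}(s))\cdots))\bigr),
\]
where each $\mu_{A_\ell}(s)=s+k_\ell$. This is a finite composition of primitive recursive functions, so it is primitive recursive. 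Correctness is a direct chaining of the implications $\|u-v\|\le2^{-\mu(s)}\Rightarrow\|g(u)-g(v)\|\le2^{-s}$ through the layers.

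For item~1 (the evaluator), we propagate approximations forward while keeping them inside the domain of the oracle. At layer~$\ell$ we hold a rational vector $\tilde u_\ell$ with $\|\tilde u_\ell-u_\ell\|_\infty\le2^{-t_\ell}$, where $u_\ell\in[0,1]^{m_\ell}$ is the exact intermediate value. We first apply $A_\ell$ exactly. Before calling the $\rho$-oracle we clip the result coordinatewise to $[0,1]$, i.e.\ we replace each coordinate $w$ by $\min(\max(w,0),1)$. Clipping is primitive recursive on rationals and is $1$-Lipschitz, and since the exact value already lies in $[0,1]$, clipping can only reduce the error. We then call the oracle of~$\rho$ at precision $t'$ and clip again. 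The error after this layer is at most $2^{-t'}$ plus the error propagated by uniform continuity. To make the total at most $2^{-s}$, we choose the working precisions backwards, exactly as for $\mu_R$: at every layer we split the target error budget in half, one half for the oracle error and one half for the propagated error, so that each required precision is a fixed primitive recursive function of~$s$. The final affine layer $A_L$ is applied exactly. The resulting $\Eval_R(q,s)$ is a fixed finite composition of primitive recursive operations on rational encodings, hence primitive recursive.

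The main obstacle is the domain issue. The oracle and the modulus of~$\rho$ are only given on $[0,1]$ and only for rational inputs of controlled size, whereas the approximate intermediate values may fall slightly outside the cube. The clipping step resolves this; its justification relies on the inclusion hypotheses of Definition~\ref{def:rho-block} and on the $1$-Lipschitz property of clipping. A secondary point is to fix the convention for the oracle: we take it to be the function $(q,t)\mapsto$ a rational within $2^{-t}$ of $\rho(q)$, for $q\in\Q\cap[0,1]$. If instead the oracle is given on dyadics only, we first round $q$ to a nearby dyadic and charge the rounding error to the modulus $\mu_\rho$.
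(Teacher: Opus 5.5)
Your proposal is correct and follows the same route as the paper. Both argue by induction on the depth of the block, computing affine layers exactly with a Lipschitz modulus, handling the $\rho$-layers with the primitive recursive evaluator and modulus of $\rho$, and composing layer by layer; you additionally supply two details the paper leaves implicit, the clipping to $[0,1]$ before each oracle call and the explicit backward precision schedule, and both are sound.
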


\begin{proof}
One proves both statements by induction on the depth of the fixed
network $R$.  Affine layers are exactly computable on rational inputs
and have an obvious primitive recursive modulus.  Coordinatewise
applications of $\rho$ are handled using the primitive recursive
evaluator $\Eval_\rho$ and modulus $\mu_\rho$ from
Definition~\ref{def:nu-admissible-rho}.  Composing the finitely many
layers yields the required evaluator $\Eval_R$ and modulus $\mu_R$.
\end{proof}

We also isolate the decoding step.

\begin{lemma}[Primitive recursive decoding of readable $\nu$-codes]
\label{lem:nu-decode}
There exists a primitive recursive partial decoder
$
\Decode_\nu:\Q\cap[0,1]\to\N
$
with the following property: whenever $q\in\Q\cap[0,1]$ satisfies
$
|q-\nu(n)|<\frac14\,\nu(n)
$
for some $n\in\N$, then $\Decode_\nu(q)=n$.
\end{lemma}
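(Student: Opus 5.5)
The key observation is that the readable windows around distinct codes are pairwise disjoint, and that the right window can be found by a bounded search. Because of this, $\Decode_\nu$ can be defined directly by bounded minimization over a primitive recursive predicate. Write $q=a/b$ in the standard encoding of rationals, with $a,b\in\N$, $b\ge 1$. For $q>0$, any $n$ with $|q-\nu(n)|<\tfrac14\nu(n)$ satisfies $q>\tfrac34\,2^{-n}$, so $2^{n}>\tfrac{3}{4q}=\tfrac{3b}{4a}$. It also satisfies $q<\tfrac54\,2^{-n}$, so $2^n<\tfrac{5b}{4a}\le\tfrac{5}{4}b$. Hence any candidate $n$ is at most $\log_2 b+1\le b$, and the search range is bounded primitive recursively by the denominator of $q$.

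I therefore define
\[
\Decode_\nu(q)\;:=\;\mu n\le b\;\bigl[\,4\,|a\,2^{n}-b|<b\,\bigr],
\]
returning, say, $0$ if no such $n$ exists (the decoder is only required to be correct on readable inputs, so the default value is irrelevant). The bracketed condition is exactly $|q-2^{-n}|<\tfrac14 2^{-n}$ after multiplying through by $4b\,2^n>0$. It is a comparison of integers built from $a$, $b$, $2^n$ by multiplication and absolute difference, hence a primitive recursive predicate in $(a,b,n)$. Bounded minimization with the primitive recursive bound $b$ then yields a primitive recursive function. The case $q=0$ needs no special treatment, since no $n$ satisfies the condition and the output is the default.

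Correctness rests on uniqueness of the witness. Suppose $|q-\nu(n)|<\tfrac14\nu(n)$ and $|q-\nu(n')|<\tfrac14\nu(n')$ with $n<n'$. The first inequality gives $q>\tfrac34 2^{-n}$. The second gives $q<\tfrac54 2^{-n'}\le\tfrac58 2^{-n}$. This is a contradiction, because $\tfrac58<\tfrac34$. So the readable windows $(\tfrac34\nu(n),\tfrac54\nu(n))$ are pairwise disjoint, and this is where the constant $\tfrac14$ (matching $S\ge 2$ in Definition~\ref{def:rho-unfolding}) is used. Consequently, whenever $q$ is readable as $\nu(n)$, that $n$ is the only integer satisfying the predicate. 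It lies within the search bound $b$ by the estimate above, so the bounded minimization returns exactly $n$.

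The only step needing care is the search bound. I would verify $n\le b$ cleanly via $2^n<\tfrac54 b$, hence $n<\log_2 b+1\le b$ for $b\ge1$, and handle $b=1$ (where $q\in\{0,1\}$ and $n=0$) explicitly. Everything else is routine encoding of rationals.
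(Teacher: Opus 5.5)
Your proof is correct and follows the same route as the paper: a primitive recursive bounded search, justified by the fact that the readable windows $(\tfrac34\nu(n),\tfrac54\nu(n))$ are pairwise disjoint. The paper searches for the least $k$ with $2^k q\ge\tfrac34$, a one-sided test, and leaves the search bound implicit; you test two-sided membership instead and make the bound $b$ explicit, which fills in a detail the paper omits.
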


\begin{proof}
By the remark following Definition~\ref{def:rho-unfolding}, the integer
$n$ is uniquely determined by the displayed inequality.  One may recover
it by primitive recursive bounded search, for example as the least
integer $k$ such that $2^k q\ge \frac34$.
\end{proof}

\begin{theorem}
\label{thm:nn-implies-pr}
Every recurrent $\rho$-computation computes a primitive recursive
function.
\end{theorem}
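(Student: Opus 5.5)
The plan is to simulate the orbit of Definition~\ref{def:rho-unfolding} on rationals with controlled error, then decode the output with Lemma~\ref{lem:nu-decode}. Fix $f$ with a recurrent $\rho$-computation given by $R$, $T$, $S$ and $I$. By Lemma~\ref{lem:rho-block-effective} we have primitive recursive $\Eval_R$ and $\mu_R$. We define an approximate orbit by $\tilde z_0:=\nu(x)$ padded with zeros; this vector is rational and exactly equal to $z_0(x)$. We then set $\tilde z_{n+1}:=\mathrm{clip}(\Eval_R(\tilde z_n,p_{n}))$, where clipping to $[0,1]^m$ keeps the arguments inside the domain of $\Eval_R$. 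Clipping does not increase the distance to $z_{n+1}\in[0,1]^m$. The precisions $p_n$ will be chosen by backward propagation of a target accuracy.

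\textbf{Error budget.} Fix a final target accuracy $a:=S(x)+\max_j f_j$-bound. We do not know $f_j(x)$ in advance, so we need a different handle on the relative error. The required precision is absolute, at scale $2^{-a}$ with $a$ primitive recursive. I would take $a(x):=S(x)+3+K(x)$, where $K(x)$ is a primitive recursive upper bound on $f_j(x)$. To obtain $K$ without knowing $f$, compute $\tilde z$ at a low precision first. A cleaner route avoids this: by the definition, $\pi_I(z_{T(x)})_j\ge \tfrac34\nu(f_j(x))$. So an absolute error $\le \tfrac18 \nu(f_j(x))$ suffices, and it is enough to iterate precision doubling until the computed value $q$ satisfies $2^{-t}\le q/16$. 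The search ranges over $t$ up to a bound. The open issue is that this bound must be primitive recursive: we must show that $f_j(x)$ is bounded by some primitive recursive function of $x$. I expect this to be the main obstacle.

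\textbf{Bounding $f$.} Here I would use the primitive recursive modulus. Each iteration of $R$ is uniformly continuous with modulus $\mu_R$. Compose it $T(x)$ times to get a modulus $\mu^{(T(x))}_R$, which is primitive recursive by bounded iteration. This gives only continuity, not a lower bound on the value. A lower bound follows differently: the exact orbit point $w:=\pi_I(z_{T(x)})_j$ satisfies $w\ge\tfrac34 2^{-f_j(x)}>0$. Computing $w$ to absolute accuracy $2^{-t}$ for increasing $t$ eventually certifies $w>2^{-t+1}$. The stopping $t$ is about $f_j(x)$, which is finite, but a primitive recursive bound is not automatic. I would therefore first try to extract the bound from the structure of the definition itself: $S$ is primitive recursive, and perhaps the intended reading is that $f$ is recoverable at precision depending only on $S(x)$. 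Failing that, I would use an unbounded search, whose termination is guaranteed, and then argue separately that $\rho$-blocks with a primitive recursive modulus, iterated $T(x)$ times, cannot produce values below $2^{-G(x)}$ for some primitive recursive $G$. This is the genuinely delicate point, and I am unsure it holds without an extra hypothesis.

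\textbf{Conclusion, assuming a primitive recursive $G(x)\ge f_j(x)$.} Set the target absolute error $2^{-(G(x)+4)}$. Propagate backwards: $p_{T}:=G(x)+5$ and $p_{n}:=\mu_R(p_{n+1}+1)+1$. Then the errors telescope by induction: $\|\tilde z_n-z_n\|\le 2^{-p_n}$, using part~2 of Lemma~\ref{lem:rho-block-effective} for the propagated error and part~1 for the fresh evaluation error. These precisions, and hence $\tilde z_{T(x)}$, are primitive recursive in $x$, by primitive recursion on $n$ together with a primitive recursive $T$. The computed output $q_j$ is then within $\tfrac14\nu(f_j(x))$ of $\nu(f_j(x))$, because $S\ge2$ gives a relative error $\le\tfrac14-$ and the absolute slack is absorbed. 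Lemma~\ref{lem:nu-decode} then returns $f_j(x)$ primitive recursively. If the margin is tight at exactly $\tfrac14$, I would adapt the decoder threshold so that it accepts relative error up to $\tfrac14+\tfrac1{16}$.
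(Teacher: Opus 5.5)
Your simulation is the same as the paper's: rational approximants via $\Eval_R$, precisions scheduled backwards through $\mu_R$, then $\Decode_\nu$. The step you leave open is exactly where the paper's own proof is thin. The paper simply asserts that $p(x)$ ``may be chosen primitive recursively large enough'' for the simulation error to fit in the spare margin. That margin is $(\tfrac14-2^{-S(x)})\,\nu(f_i(x))$ in absolute terms, and it is $0$ when $S(x)=2$; your relaxed decoder threshold handles that case. So the assertion silently needs a primitive recursive $G(x)\ge f_i(x)$. Your suspicion is correct: such a $G$ cannot be extracted from the hypotheses. A primitive recursive modulus bounds oscillation, not how small a positive value can be, and $S$ is only a relative precision. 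Your unbounded search proves that $f$ is recursive, which is all one gets in general. Your proposal therefore does not prove the statement, but neither does the paper.

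In fact the statement is false as written. Let $g(n):=A(n,n)$ be the Ackermann diagonal: $g(n)\ge n+1$, its graph is primitive recursive, and $g$ is not. Let $\rho:[0,1]\to[0,1]$ be the nondecreasing piecewise linear function defined as follows.
\begin{itemize}
\item $\rho(0)=0$ and $\rho(2^{-n})=2^{-g(n)}$ for $n\ge1$, linear in between.
\item $\rho\equiv2^{-g(1)}=\tfrac18$ on $[\tfrac12,\tfrac58]$, then linear up to $\rho(\tfrac78)=1$, and $\rho\equiv1$ on $[\tfrac78,1]$.
\end{itemize}
This $\rho$ is admissible:
\begin{itemize}
\item On $[2^{-n-1},2^{-n}]$ the slope is at most $2^{n+1-g(n)}\le1$, and all other slopes are at most $4$. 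Hence $\mu_\rho(s)=s+2$ is a modulus of uniform continuity.
\item Approximating $\rho(q)$ within $2^{-s}$ only requires deciding whether $g(n)\ge s$ and $g(n+1)\ge s$. These are bounded queries to the graph of $g$, so the evaluation oracle is primitive recursive.
\item $Z:=\rho\circ\rho$ is a zero detector with $\eta=2^{-g(3)}$.
\end{itemize}
Now take $d=e=m=1$, $R:=\rho$, $T\equiv1$ and $S\equiv2$. Then $z_1(x)=\nu(f(x))$ exactly, where $f(0)=0$ and $f(x)=g(x)$ for $x\ge1$, and this $f$ is not primitive recursive.

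Consequently, your hoped-for lemma that iterated $\rho$-blocks cannot go below $2^{-G(x)}$ fails after a single application of this $\rho$. The theorem, and Item~(\ref{itemnn}) of Theorem~\ref{thm:main-formal}, need an extra hypothesis. One option is a primitive recursive bound $G(x)\ge f_j(x)$ built into Definition~\ref{def:rho-unfolding}; another is a restriction on $\rho$ strong enough to imply such a bound. Under that hypothesis your conditional argument is complete. It is also slightly more careful than the paper's, since clipping into $[0,1]^m$ keeps $\Eval_R$ inside its domain.
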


\begin{proof}
Assume that $f:\N^d\to\N^e$ admits a recurrent $\rho$-computation.
Fix the corresponding block $R:[0,1]^m\to[0,1]^m$, observation time
$T:\N^d\to\N$, output-precision function $S:\N^d\to\N$, and output
coordinates $\pi_e$ from Definition~\ref{def:rho-unfolding}.  Let
$$
z_0(x):=(\nu(x),0,\ldots,0),
\qquad
z_{n+1}(x):=R(z_n(x)).
$$

We first show that the state $z_{T(x)}(x)$ can be approximated
primitive recursively to any prescribed precision.  Let $p:\N^d\to\N$
be any primitive recursive function.  Starting from the exact rational
initial state $z_0(x)$, define rational approximants
$\widetilde z_n(x)$ recursively by
$$
\widetilde z_0(x):=z_0(x),
\qquad
\widetilde z_{n+1}(x):=\Eval_R\!\bigl(\widetilde z_n(x),\,q_n(x)\bigr),
$$
where the internal precisions $q_n(x)$ are chosen backwards using the
modulus $\mu_R$ so that the final error at time $T(x)$ is at most
$2^{-p(x)}$.  Since $R$ is fixed and $\mu_R$ is primitive recursive,
this backward precision schedule is itself primitive recursive.  Hence
there is a primitive recursive procedure which, given $x$, computes a
rational vector $\widetilde z(x)$ such that
$$
\|\widetilde z(x)-z_{T(x)}(x)\|_\infty\le 2^{-p(x)}.
$$

We now choose $p(x)$ large enough so that the corresponding approximation
of the output coordinates still lies in the readable basin of
$\nu(f(x))$.  By Definition~\ref{def:rho-unfolding},
$$
\left|\pi_{e,i}(z_{T(x)}(x))-\nu(f_i(x))\right|
\le 2^{-S(x)}\,\nu(f_i(x))
\qquad
\text{for every }i.
$$
Since the output is readable, we may choose $p(x)$ primitive
recursively large enough that the additional simulation error is smaller
than the spare margin in this decoding condition.  Then for every output
coordinate~$i$,
$$
\left|\pi_{e,i}(\widetilde z(x))-\nu(f_i(x))\right|
<
\frac14\,\nu(f_i(x)).
$$
Lemma~\ref{lem:nu-decode} therefore yields
$$
f_i(x)=\Decode_\nu\bigl(\pi_{e,i}(\widetilde z(x))\bigr).
$$

Thus each output coordinate of $f(x)$ is obtained from $x$ by a
primitive recursive simulation followed by a primitive recursive
decoder.  Hence $f$ itself is primitive recursive.
\end{proof}

\subsection{Completion of the proof of the main theorem Theorem~\ref{thm:main-formal}}

We confirm here, if needed, that we can indeed assemble the implications established in the previous sections.

\begin{proof}[Proof of Theorem~\ref{thm:main-formal}]
The implication $(\ref{itempr})\Rightarrow(\ref{itemcorhardunif})$
follows from Theorem~\ref{thm:main-formal-hard} together with
Theorem~\ref{thm:uniform-robust}, which shows that the
threshold-affine normal form can be taken with the uniform margin
required by Item~(\ref{itemcorhardunif}). The implication
$(\ref{itemcorhardunif})\Rightarrow(\ref{itemreul})$ is
Theorem~\ref{thm:three-implies-two} (via
Theorem~\ref{thm:main-formal-hard}). The implication
$(\ref{itemcorhardunif})\Rightarrow(\ref{itemnn})$ is
Theorem~\ref{thm:threshold-to-rho}. The implication
$(\ref{itemcorhardunif})\Rightarrow(\ref{itemquatre})$ is
Theorem~\ref{thm:three-implies-four}. The implication
$(\ref{itemquatre})\Rightarrow(\ref{itemcinq})$ is
Theorem~\ref{thm:four-implies-five}. The converse implications back
to Item~(\ref{itempr}) are:
$(\ref{itemreul})\Rightarrow(\ref{itempr})$ is
Theorem~\ref{thm:two-implies-one-hard};
$(\ref{itemnn})\Rightarrow(\ref{itempr})$ is
Theorem~\ref{thm:nn-implies-pr}; and
$(\ref{itemcinq})\Rightarrow(\ref{itempr})$ is
Theorem~\ref{thm:five-implies-one}.
Together, these implications close the cycle, so
Items~(\ref{itempr})--(\ref{itemcinq}) of
Theorem~\ref{thm:main-formal} are all equivalent.
\end{proof}

\begin{corollary}[Theorem \ref{th:unf}, in the introduction]
\label{cor:public-equivalence}
For a function $f:\N^d\to\N^e$, the following are equivalent:
\begin{enumerate}
\item $f$ is primitive recursive;
\item $f$ admits a recurrent ReLU computation;
\item $f$ admits a recurrent $\rho$-computation;
\item $f$ admits a robust polynomial ODE representation;
\item $f$ admits a step-size-controlled polynomial representation.
\end{enumerate}
\end{corollary}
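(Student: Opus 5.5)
The corollary is a restatement of Theorem~\ref{thm:main-formal}, so the plan is to read it off that theorem and drop the auxiliary item. Items 1, 2, 3, 4 and 5 of the corollary coincide verbatim with Items~(\ref{itempr}), (\ref{itemreul}), (\ref{itemnn}), (\ref{itemquatre}) and (\ref{itemcinq}) of Theorem~\ref{thm:main-formal}. The only item of the theorem that does not appear here is Item~(\ref{itemcorhardunif}), the uniform threshold-affine normal form. Removing it from a list of mutually equivalent statements leaves a list of mutually equivalent statements, so the corollary follows once the main theorem is established.

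To make the dependency explicit, I would still trace the cycle through the hidden item. The chain $(1)\Rightarrow$ uniform normal form comes from Theorem~\ref{thm:main-formal-hard} combined with Theorem~\ref{thm:uniform-robust}. From the uniform normal form:
\begin{itemize}
\item to (2), by Theorem~\ref{thm:three-implies-two};
\item to (3), by Theorem~\ref{thm:threshold-to-rho};
\item to (4), by Theorem~\ref{thm:three-implies-four}.
\end{itemize}
Then $(4)\Rightarrow(5)$ is Theorem~\ref{thm:four-implies-five}. The returns to (1) are:
\begin{itemize}
\item $(2)\Rightarrow(1)$, Theorem~\ref{thm:two-implies-one-hard};
\item $(3)\Rightarrow(1)$, Theorem~\ref{thm:nn-implies-pr};
\item $(5)\Rightarrow(1)$, Theorem~\ref{thm:five-implies-one}.
\end{itemize}
Every item therefore implies (1), and (1) implies every item.

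The only point needing a check is that the definitions match. Item 4 of the corollary says ``robust polynomial ODE representation'' (Definition~\ref{def:ode-rep}), while the theorem's Item~(\ref{itemquatre}) says ``polynomial ODE representation''. I would state that these denote the same notion. Similarly, Item 5 refers to Definition~\ref{def:step-controlled}, which is the same notion as Definition~\ref{def:step-controlled-again}. No real obstacle is expected; the substance lies entirely in the earlier theorems.
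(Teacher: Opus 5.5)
Your proposal is correct and follows the paper's proof: the paper also derives the corollary from Theorem~\ref{thm:main-formal} by discarding the auxiliary Item~(\ref{itemcorhardunif}), and your trace of the implication cycle reproduces the paper's own assembly of that theorem, arrow by arrow and with the same citations. Your remark that the differently worded definitions (the ``robust'' polynomial ODE representation, and the two copies of the step-size-controlled definition) denote the same notions is the only bookkeeping required.
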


\begin{proof}
This is Theorem~\ref{thm:main-formal} after eliminating the
proof-level threshold-affine normal form of
Item~(\ref{itemcorhardunif}).
\end{proof}

\section{Conclusion}
\label{sec:conclusion}

We have characterized the primitive recursive functions as the class
of functions computable by bounded iteration of a feedforward ReLU
network, by bounded-range recurrent $\rho$-networks under
$\nu$-encoding, by robust polynomial ODEs, and by step-size
controlled iteration of a polynomial map. The equivalences are
mediated by a threshold-affine normal form that isolates the minimal
branching structure behind a primitive recursive computation. The
five formalisms realize the same class through genuinely different
mechanisms, and the paper also clarifies a real asymmetry between
them: exact recursive unfolding is natural in discrete time, while
uniform rounding and autonomous phase organization are natural in
continuous time.

The primitive recursive functions sit at the top of a rich hierarchy
of classes obtained by restricting the recursion scheme: the
Grzegorczyk hierarchy~\cite{Grz55,Grz53a},
$\mathsf{FP}$~\cite{Cob65,BC92},
$\mathsf{FPSPACE}$~\cite{LM97,thompson1972subrecursiveness}, and
more generally the subrecursive hierarchies surveyed
in~\cite{clote2013boolean}. Each of our five characterizations
offers a lens through which these subclasses may be redescribed, by
restricting the corresponding dynamical resource (iteration count,
domain diameter, precision, observation time, or degree of the
polynomial map). Carrying out this programme systematically is a
concrete direction for future work.

Of particular interest in this respect is the step-size-controlled
polynomial model. Although polynomial, it is genuinely distinct
from the discrete-ODE frameworks
of~\cite{MFCS2019,MFCSJournal,Antonelli0K24,Antonelli0K25,AntonelliDK26},
which take the sign function and discrete derivatives as primitive
operations: our model is \emph{strictly polynomial}, with the step
size, not necessarily unitary here, playing the role of an externally supplied precision
parameter. This opens a parallel route to polynomial-only
characterizations of the standard complexity classes.

\bibliographystyle{plainurl}
\bibliography{bournez,perso}
\appendixapxproofsubsection

\end{document}